\newtheoremstyle{WreschTheoremstyle} 
                        {1.5em}    
                        {2.5em}    
                        {}         
                        {}         
                        {\bfseries}
                        {}        
                        {\newline} 
                        {\raisebox{0.6em}{\thmname{#1}\thmnumber{#2}\thmnote{ (#3)}}}
\newcommand{\R}{\mathbb{R}}
\newcommand{\N}{\mathbb{N}}
\newcommand{\Z}{\mathbb{Z}}
\newcommand{\F}{\mathcal{F}}
\newcommand{\E}{\mathbb{E}}
\newcommand{\e}{\varepsilon}
\renewcommand{\1}{\mathbbm{1}}
\renewcommand{\Pr}{\mathbbm{P}}
\newtheorem{Theorem}{Theorem}[section]
\newtheorem{Proposition}[Theorem]{Proposition}
\newtheorem{Corollary}[Theorem]{Corollary}
\newtheorem{Lemma}[Theorem]{Lemma}
\newtheorem{Remark}[Theorem]{Remark}
\newtheorem{Definition}[Theorem]{Definition}
\newtheorem{Example}[Theorem]{Example}
\numberwithin{equation}{section}
\newcommand{\customlabel}[1]{%
     \stepcounter{ref}%
   \protected@write
\@auxout{}{\string\newlabel{#1}{{\thesatz.\arabic{ref}}{\thepage}{\thesatz.\arabic{ref}}{#1}{}}}%
   \hypertarget{#1}{\thesatz.\arabic{ref}}%
}
\newenvironment{sciabstract}{\begin{quote}}{\end{quote}}
\newcounter{lastnote}
\title{The Enskog process for hard and soft potentials}
\newcommand{\pdftitle} {The Enskog process for hard and soft potentials}
\newcommand{\pdfauthor}{Martin Friesen}
\author{
Martin Friesen\footnote{Deparment of Mathematics, Wuppertal University, Gaußstraße 20, 42119 Wuppertal, Germany, friesen@math.uni-wuppertal.de}\\
Barbara R\"udiger\footnote{Deparment of Mathematics, Wuppertal University, Gaußstraße 20, 42119 Wuppertal, Germany, ruediger@uni-wuppertal.de}\\
Padmanabhan Sundar\footnote{Department of Mathematics, Louisiana State University, Baton Rouge, Louisiana 70803, USA, psundar@lsu.edu}
}
\def\HyPsd@CatcodeWarning#1{}
\begin{document}

\maketitle
\begin{sciabstract}\textbf{Abstract:}
The density of a moderately dense gas evolving in a vacuum is given by the solution of an Enskog equation.
Recently we have constructed in \cite{ARS17} the stochastic process that corresponds to the Enskog equation under suitable conditions.
The Enskog process is identified as the solution of a McKean-Vlasov equation driven by a Poisson random measure.
In this work, we continue the study for a wider class of collision kernels that includes hard and soft potentials.
Based on a suitable particle approximation of binary collisions, the existence of an Enskog process is established.
\end{sciabstract}

\noindent \textbf{AMS Subject Classification:} 35Q20; 76P05; 60H30\\
\textbf{Keywords: } Enskog equation; Enskog process; Boltzmann equation; kinetic theory; mean-field equation; particle approximation

\section{Introduction}

\subsection{The Boltzmann-Enskog equation}
In kinetic theory of gases each particle is completely described by its position $r \in \R^d$ and its velocity $v \in \R^d$,
where $d \geq 3$. It moves with constant speed $v$ until it performs a collision with another particle $(q,u)$.
Denote by $v^{\star},u^{\star}$ the resulting velocities after collision.
We suppose that collisions are elastic, as a consequence conservation of momentum and kinetic energy hold, i.e.
\begin{align}\label{CONSERVATION1}
  u + v &= u^{\star} + v^{\star} \\ |u|^2 + |v|^2 &= |u^{\star}|^2 + |v^{\star}|^2.
\end{align}
A commonly used parameterization of the deflected velocities $v^{\star},u^{\star}$ is given by the vector $n = \frac{v^{\star} - v}{|v^{\star} - v|}$ via
\begin{align}\label{COLLISION}
 \begin{cases} v^{\star} &= v + (u-v,n)n \\ u^{\star} &= u - (u-v,n)n \end{cases}, \ \ n \in S^{d-1},
\end{align}
where $(\cdot,\cdot)$ denotes the euclidean product in $\R^d$.
In the physical sense, $n$ is parallel to the deflected velocity 
$v^{\star} - v$ as well as
to the segment joining the centers of the spheres at
the instant of collision, see, e.g. Bressan \cite{B05}.
Note that, for fixed $n \in S^{d-1}$, the change of variables $(v,u) \longmapsto (v^{\star}, u^{\star})$
is an involutive transformation with Jacobian equal to $1$.

Let $f_0(r,v) \geq 0$ be the particle density function of the gas at initial time $t = 0$.
The time evolution $f_t = f_t(r,v)$ is then obtained from the (Boltzmann-)Enskog equation
\begin{align}\label{EQ:03}
 \frac{\partial f_t}{\partial t} + v \cdot (\nabla_r f_t) = \mathcal{Q}(f_t,f_t), \ \ f_t|_{t=0} = f_0, \ \ t > 0.
\end{align}
Here $\mathcal{Q}$ is a non-local, nonlinear collision integral operator given by
\begin{align}\label{CINT}
 \mathcal{Q}(f_t,f_t)(r,v) = \int \limits_{\R^{2d}}\int \limits_{S^{d-1}}\left( f_t(r,v^{\star})f_t(q,u^{\star}) - f_t(r,v)f_t(q,u)\right)\beta(r-q)B(|v-u|,n)dn du dq,
\end{align}
where $dn$ denotes the Lebesgue surface measure on the sphere $S^{d-1}$ and $B(|v-u|, n) \geq 0$ the collision kernel.
The particular form of $B(|v-u|,n)$ depends on the particular microscopic model one has in mind.
The function $\beta \geq 0$ describes the rate at which a particle at position $r$ performs a collision with another particle at position $q$.
Concerning applications and additional physical background on this topic the reader may consult the classical books of Cercignani \cite{C88} and Cercignani, Illner, Pulvirenti \cite{CIP94}. 
More recent results (and related kinetic equations) are discussed, e.g., in the review articles by Villani \cite{V02} and
Alexandre \cite{A09}, see also the references therein.

\subsection{Typical collision kernels}
Let us briefly comment on particular examples of collision kernels $B(|v-u|,n)$ in dimension $d = 3$.
Boltzmann's original model was first formulated 
for (true) hard spheres where 
\[
 B(|u-v|,n)dn = |(u-v,n)|dn.
\]
A transformation in polar coordinates to a system where the center is in $\frac{u+v}{2}$ and $e_3 = (0,0,1)$ is parallel to 
$u-v$, i.e. $e_3 |u-v| = u-v$, leads to 
\[
 B(|u-v|,n)dn = |(u-v,n)|dn = |u-v| \sin\left( \frac{\theta}{2}\right) \cos\left( \frac{\theta}{2}\right)d\theta d\phi,
\]
where $\theta \in (0,\pi]$ is the angle between $u-v$ and $u^{\star} - v^{\star}$ and $\phi \in (0,2\pi]$ is the longitude angle,
see Tanaka \cite{T78} or Horowitz and Karandikar \cite{HK90}.
More generally, many results rely on Grad's angular cut-off assumption where it is supposed that
\[
 \int \limits_{S^{d-1}}B(|v-u|, n) dn < \infty,
\]
Note that this includes Boltzmanns original model of hard spheres.
However, there exist also several models, where Grad's angular cut-off assumption is not satisfied.
The most prominent ones are long-range interactions described below.
\begin{Example}
 Consider a collision kernel given by
  \begin{align}\label{EQ}
    B(|v-u|, n)dn = |v-u|^{\gamma} b(\theta)d\theta d\xi,
  \end{align}
  where $b$ is at least locally bounded on $(0,\pi]$ and
  \[
    \gamma > -3, \ \  b(\theta) \sim \theta^{-1 - \nu}, \ \ \theta \to 0^+, \  \ \nu \in (0,2).
  \]
  The parameters $\gamma$ and $\nu$ are related by
 \begin{align}\label{INTRO:00}
  \gamma = \frac{s - 5}{s-1}, \qquad \nu = \frac{2}{s-1}, \ \ \ s > 2.
 \end{align}
 One distinguishes between the following cases:
 \begin{enumerate}
  \item[(i)] Very soft potentials $s \in (2,3]$, $\gamma \in (-3,-1]$ and $\nu \in [1,2)$.
  \item[(ii)] Soft potentials $s \in (3,5)$, $\gamma \in (-1, 0)$ and $\nu \in ( \frac{1}{2}, 1)$.
  \item[(iii)] Maxwellian molecules $s = 5$, $\gamma = 0$ and $\nu = \frac{1}{2}$.
  \item[(iv)] Hard potentials $s > 5$, $\gamma \in (0,1)$ and $\nu \in (0,\frac{1}{2})$.
 \end{enumerate}
  For additional details and comments we refer to \cite{V02} or \cite{A09}.
  Note that one has 
 \[
  \int \limits_{0}^{\pi}b(\theta)d\theta = \infty \ \ \text{ but } \ \ \int \limits_{0}^{\pi}\theta^2 b(\theta)d\theta < \infty,
 \]
  i.e. this example does not satisfy Grad's angular cut-off assumption.
  Hence $\mathcal{Q}$ is a nonlinear and singular integral operator with either unbounded or singular coefficients.
  A rigorous analysis of the corresponding Cauchy problem \eqref{EQ:03} is therefore a challenging mathematical task.
\end{Example}

\subsection{The role of $\beta$}
The Cauchy problem \eqref{EQ:03} strongly depends on the particular choice of $\beta$. 
Below we describe some physically different regimes which are typically studied by different techniques.

 \textbf{ Case } $\beta(x-y) \equiv 1$.
 If $f_t$ is a solution to \eqref{EQ:03},
 it is simple to verify that the function $g_t(v) := \int_{\R^d}f_t(r,v)dr$ solves the spatially homogeneous Boltzmann equation
\begin{align}\label{INTRO:01}
 \frac{\partial g_t(v)}{\partial t} = \int \limits_{\R^{d}}\int \limits_{S^{d-1}}\left(g_t(v^{\star})g_t(u^{\star}) - g_t(v)g_t(u)\right)B(|v-u|,n)dn du.
\end{align}
In particular, any two particles, independent of their positions, may perform a collision. 
Such an equation has been intensively studied and a satisfactory theory developed (see e.g. Tanaka \cite{T78, T87}, Arkeryd \cite{A83}, Wennberg \cite{W94}, Villani \cite{V98}, Toscani, Villani \cite{TV99}, Alexandre, Villani \cite{AV02},  Mouhot, Villani \cite{MV04}, Desvillettes, Mouhot \cite{DM09} and Morimoto, Wang, Yang \cite{MWY16}).

 \textbf{ Case } $\beta(x-y) \equiv \delta_0(|x-y|)$ (dirac distribution at zero).
Here we formally recover the classical Boltzmann equation where colliding particles have to be at the same position.
This equation provides a successful description of a dilute gas and can be derived in the Boltzmann-Grad limit from Hamiltonian dynamics
(see Illner, Pulvirenti \cite{IP89}). In this case the collision integral \eqref{CINT} is local (but singular) in the spatial variables.
Classical results on the Boltzmann equation can be found in \cite{V02} and \cite{A09}.
Recently there has been some interesting progress on global solutions close to equilibrium (see e.g. the works of 
Alexandre, Morimoto,  Ukai, Xu, Yang \cite{AMUX11,AMUX11b, AMUX11c, AMUX12}).
Note that in contrast to our work, the solutions studied in the above references are, in general, not probability distributions on $\R^{2d}$.

 \textbf{ Case } $\beta(x-y) \equiv \delta_{\rho}(|x-y|)$, $\rho > 0$ fixed.
In this case particles are described by balls of a fixed radius $\rho > 0$ performing elastic collisions.
Here the collision integral \eqref{CINT} is less singular than in the classical Boltzmann equation.
The corresponding Cauchy problem was studied e.g. by Toscani, Bellomo \cite{TB87}, Arkeryd \cite{A90} and Arkeryd, Cercignani \cite{AC90}.
Based on an interacting particle system of binary collisions, the Boltzmann-Grad limit was established for true hard spheres by Rezakhanlou \cite{R03}.
Most of the results obtained in this direction are mainly applicable under Grad's angular cut-off assumption.

 \textbf{ Case } $0 \leq \beta \in C_c^1(\R^d)$ is symmetric.
This case can be seen as a mollified version of either $\delta_0(|x-y|)$ or $\delta_{\rho}(|x-y|)$.
As a consequence the collision integral \eqref{CINT} is not singular in the spatial variables which allows one to use stochastic methods in the treatment of this model.
The analysis of the corresponding Cauchy problem was initiated by Povzner \cite{P62} under Grad's angular cut-off assumption.
Corresponding propagation of chaos was studied by Cercignani \cite{C83} under Grad's angular cut-off assumption 
performing the Boltzmann-Grad limit for the corresponding BBGKY-hierarchy.
First results applicable without cut-off (including the case of Maxwellian molecules)
 have been recently obtained by Albeverio, R\"udiger, Sundar in \cite{ARS17}, where also the corresponding stochastic process
(the so-called Enskog process) was studied.
In this work we extend the obtained existence result including now also the case of general long-range interactions,
but also the case of true hard spheres as studied in the non-mollified settings.

\subsection{The Enskog process}
Note that any solution $f_t$ to \eqref{EQ:03} is expected to satisfy the conservation laws
\begin{align}\label{CONSERVATION}
\int \limits_{\R^{2d}}\begin{pmatrix}1 \\ v \\ |v|^2 \end{pmatrix} f_t(r,v)dr dv
 = \int \limits_{\R^{2d}}\begin{pmatrix}1 \\ v \\ |v|^2 \end{pmatrix} f_0(r,v)dr dv
\end{align}
and hence preserves, in particular, probability. 
One natural question already posed by Marc Kac \cite{K56} is related with the construction of a stochastic process 
(the Boltzmann process) having time-marginals $f_t$. 
Let us stress that the Boltzmann equation contains only information of the time-marginals at a given time $t$,
while the corresponding Boltzmann process provides a pathwise description and therefore also
contains additional information such as finite-dimensional distributions.

In the particular case of the space-homogeneous Boltzmann equation such a construction was intensively studied in the past.
In his pioneering works Tanaka \cite{T78, T87} has studied for Maxwellian molecules a stochastic process 
for which its time-marginals solve the space-homogeneous Boltzmann equation \eqref{INTRO:01}.
A particle approximation and related propagation of chaos was then established under the same conditions by Horowitz and Karandikar \cite{HK90}.
The construction of the space-homogeneous Boltzmann process and existence of densities has been recently studied by Fournier \cite{F15} for the case
of long-range interactions.
Corresponding particle approximations (including a rate of convergence) was studied by Fournier, Mischler \cite{FM16} for hard potentials
and by Xu \cite{X16} for soft potentials.
Similar results for Maxwellian molecules, but with another particle system, were also obtained in \cite{CF18}.
The precise formulation of hard and soft potentials are introduced in the next section.

The martingale problem associated to the classical Boltzmann equation and related particle approximation 
was studied under Grad's angular cut-off assumption by M\`{e}l\`{e}ard \cite{M98} with sub-gaussian initial distribution,
where convergence of solutions when $\beta \longrightarrow \delta_0$ was also studied.
The existence of a stochastic process associated to the Enskog equation (in the space-inhomogeneous setting) without Grad's angular cutoff assumption was recently obtained for the Enskog equation (with $\beta \in C_c^1(\R^{2d}))$ in \cite{ARS17},
where the corresponding Enskog process was obtained from a McKean-Vlasov stochastic equation with jumps.
In this work we take $\beta \in C_c^1(\R^{2d})$ and impose certain conditions on the collision kernel $B$ which includes
the case of long-range interactions. The main contribution of this work are
\begin{enumerate}
 \item[(i)] Construction and uniqueness (in law) of a physically motivated $n$-particle process with binary collisions.
 \item[(ii)] Identification of an Enskog process obtained from the particle approximation when $n \to \infty$.
\end{enumerate}
As a consequence of our results, uniqueness for the Enskog equation implies propagation of chaos in the sense of Sznitmann \cite{S91}.
The corresponding uniqueness problem will be studied in a separate work.
In the next section we introduce the main objects of this work, while our main results are formulated in Section 3.

\section{Preliminaries}

\subsection{Change of variables for binary collisions}
It was already pointed out by Tanaka that in $d = 3$ 
$(u,v) \longmapsto (u-v,n)n$ cannot be smooth (see previous section).
To overcome this problem he introduced 
in \cite{T78} another transformation of parameters which is bijective,
has jacobian $1$ and hence can be used in the right hand side of 
\eqref{CINT}. Such ideas have been extended to arbitrary dimension $d \geq 3$ and are briefly summarized in this section, see \cite{FM09, LM12}.

Given $u,v \in \R^d$, recall that $v^{\star},u^{\star}$ are computed from \eqref{COLLISION}
where $n = \frac{v^{\star} - v}{|v^{\star} - v|}$. Suppose first that $u \neq v$.
In this work we use another representation of \eqref{COLLISION} where $n$ is decomposed as $n = n_0 + n_1$ with $n_0$ being parallel to $u-v$ and $n_1$ orthogonal to $u-v$.
For this purpose let 
\[
 S^{d-2}(u-v) = \{ \omega \in \R^d \ | \ |u-v| = |\omega|, \ \ (u-v, \omega) = 0\}, \qquad S^{d-2} = \{ \xi \in \R^{d-1} \ | \ |\xi| = 1 \}.
\]
The precise construction of such a parameterization is given in the next lemma.
\begin{Lemma}
 Let $u,v \in \R^d$ with $u \neq v$ and take $n \in S^{d-1}$. Then there exist $(\gamma,\xi) \in [0,\pi] \times S^{d-2}$ 
 and a measurable bijective function $\Gamma(u-v, \cdot): S^{d-2} \longrightarrow S^{d-2}(u-v),\ \xi \longmapsto \Gamma(u-v, \xi)$
 such that
  \begin{align}\label{N2}
   n = \cos(\gamma)\frac{u-v}{|u-v|} + \sin(\gamma)\frac{\Gamma(u-v, \xi)}{|u-v|},
  \end{align}
 where $\gamma$ is the angle between $u-v$ and $n$, i.e. it holds that $(u-v, n) = \cos(\gamma)|u-v|$,
\end{Lemma}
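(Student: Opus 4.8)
The plan is to split $n\in S^{d-1}$ into its component along $w:=u-v$ and its component in the orthogonal complement $w^{\perp}$, and then to identify the unit sphere of $w^{\perp}$ with $S^{d-2}$ through a measurable choice of orthonormal basis. Write $\hat w = w/|w|$. Since $|(w,n)|\le |w|$ by Cauchy--Schwarz, there is a unique $\gamma\in[0,\pi]$ with $(w,n)=\cos(\gamma)|w|$; this is the angle appearing in the statement. A direct computation gives $(w,\,n-\cos(\gamma)\hat w)=0$ and $|n-\cos(\gamma)\hat w|^2 = 1-\cos^2(\gamma)=\sin^2(\gamma)$, so that, using $\sin(\gamma)\ge 0$ on $[0,\pi]$, the vector $n-\cos(\gamma)\hat w$ lies in $w^{\perp}$ and has length $\sin(\gamma)$.

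Next I would construct, for every $w\neq 0$, an orthonormal basis $E_1(w),\dots,E_{d-1}(w)$ of $w^{\perp}$ depending measurably on $w$, and set $\Gamma(w,\xi):=|w|\sum_{i=1}^{d-1}\xi_iE_i(w)$ for $\xi=(\xi_1,\dots,\xi_{d-1})\in S^{d-2}$. Then $\Gamma(w,\cdot)$ maps $S^{d-2}$ bijectively onto $S^{d-2}(w)$: indeed $|\Gamma(w,\xi)|=|w||\xi|=|w|$ and $(w,\Gamma(w,\xi))=0$, while conversely any $\omega\in S^{d-2}(w)$ is uniquely of the form $|w|\sum_i\xi_iE_i(w)$ with $\sum_i\xi_i^2=|\omega|^2/|w|^2=1$. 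To build the $E_i(w)$, apply Gram--Schmidt to the ordered family $(w,e_1,\dots,e_d)$, which always spans $\R^d$; exactly one vector $e_{j}$ is discarded, the index $j=j(w)$ of the discarded vector partitions $\R^d\setminus\{0\}$ into finitely many Borel sets, and on each such set the Gram--Schmidt formulas are rational, hence smooth, in $w$. The first output is $\hat w$, and the remaining $d-1$ outputs are the desired $E_i(w)\in w^{\perp}$, so $\Gamma$ is Borel.

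To conclude: if $\sin(\gamma)>0$, then $\sin(\gamma)^{-1}(n-\cos(\gamma)\hat w)$ is a unit vector of $w^{\perp}$, hence equals $\sum_i\xi_iE_i(w)$ for a unique $\xi\in S^{d-2}$, and multiplying by $\sin(\gamma)$ yields exactly \eqref{N2}. If $\sin(\gamma)=0$, then $\gamma\in\{0,\pi\}$ and $n=\pm\hat w=\cos(\gamma)\hat w$, so \eqref{N2} holds for any (say, fixed) choice of $\xi$. This establishes the claim, with $\Gamma(w,\cdot)$ a measurable bijection onto $S^{d-2}(w)$.

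The only genuinely delicate point is the global choice of the frame $E_1(w),\dots,E_{d-1}(w)$: a choice continuous in $w$ is impossible in general, since the bundle of $(d-1)$-frames over $S^{d-1}$ need not be trivial, so one cannot hope for smoothness everywhere and must settle for the Borel-measurable, piecewise-smooth selection above. This is harmless for what follows, because $\Gamma$ enters the analysis only under integral signs. Everything else reduces to elementary linear algebra and trigonometry.
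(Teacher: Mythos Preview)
Your proof is correct and starts exactly as the paper does: decompose $n$ into its component along $\hat w = (u-v)/|u-v|$ and its orthogonal part, identify $\gamma\in[0,\pi]$ via $(u-v,n)=\cos(\gamma)|u-v|$, and then parametrize the unit sphere of $w^\perp$ by $S^{d-2}$. The difference is in how $\Gamma$ is built. You produce a Borel frame $E_1(w),\dots,E_{d-1}(w)$ of $w^\perp$ by Gram--Schmidt on $(w,e_1,\dots,e_d)$ and set $\Gamma(w,\xi)=|w|\sum_i\xi_iE_i(w)$. The paper instead uses the Householder reflection $R_{u-v}$ across the hyperplane orthogonal to $e_d-\hat w$ (which sends $e_d$ to $\hat w$) and defines $\Gamma(u-v,\xi)=|u-v|\,R_{u-v}(\xi_1,\dots,\xi_{d-1},0)$, with the convention $\Gamma(u-v,\xi)=|u-v|\,(\xi,0)$ when $u-v$ is parallel to $e_d$.

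Both constructions prove the lemma as stated. However, the paper's choice of $\Gamma$ is not incidental: this specific reflection-based $\Gamma$ is the one for which Lemma~\ref{PARAMETARIZATION} holds, namely the Tanaka-type estimate $|\Gamma(X,\xi)-\Gamma(Y,\xi_0(X,Y,\xi))|\le 3|X-Y|$ after a measure-preserving shift of $\xi$. That estimate feeds directly into \eqref{EQ:13} and is used repeatedly in the continuity and uniqueness arguments (e.g.\ Proposition~\ref{FPE:LEMMA01} and Proposition~\ref{IPS:PROPOSITION02}). Your Gram--Schmidt frame, being only piecewise smooth with genuine jumps across the Borel strata determined by which $e_j$ is discarded, does not obviously admit such a uniform comparison. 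So while your argument is a perfectly clean route to a measurable bijection, the paper's construction buys the quantitative stability that the rest of the analysis relies on; your closing remark that ``$\Gamma$ enters the analysis only under integral signs'' and that mere measurability suffices is therefore a bit optimistic about what is needed downstream.
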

\begin{proof}
 Fix $u,v \in \R^{d}$ with $u \neq v$ and $n \in S^{d-1}$. Then $n = n_0 + n_1$, where
 \[
  n_0 = \left( n, \frac{u-v}{|u-v|} \right) \frac{u-v}{|u-v|} = \cos(\gamma)\frac{u-v}{|u-v|}, \qquad
  n_1 = n - \left( n, \frac{u-v}{|u-v|}\right) \frac{u-v}{|u-v|}.
 \]
 Since $|u-v|n_1 \in S^{d-2}(u-v)$, assertion \eqref{N2} is proved, provided we can find a parameterization $\Gamma(u-v, \cdot)$
 such that $n_1 = \sin(\gamma)\frac{\Gamma(u-v, \xi)}{|u-v|}$ for some $\xi \in S^{d-2}$.
 The construction of $\Gamma$ was given in \cite{T78} for $d = 3$, and then generalized in \cite{FM09} for arbitrary dimension $d \geq 3$.
 Below we briefly summarize their construction. Define $e_d = (0,\dots, 0,1)$. If $u - v = |u-v|e_d$ then let $\Gamma(u-v,\xi) = \xi |u-v|$. Otherwise $\Gamma(u-v,\cdot)$ is obtained by doing an axial rotation which overlaps $e_d$ with $u-v$. To obtain this let $R_{u-v}$ be the symmetry with respect to the hyperplane given by 
 \[
  H_{u-v} = \left\{ x \in \R^d \ | \ \left( x, e_d - \frac{u-v}{|u-v|}\right) = 0\right\}^{\perp}.
 \]
 One can show that
 \[
  \Gamma(u-v, \xi) = |u-v|R_{u-v}(\xi_1,\dots, \xi_{d-1},0), \qquad \xi \in S^{d-2}
 \]
 has the desired properties.
\end{proof}
As a consequence we obtain, for all $u,v \in \R^d$ with $u \neq v$, the parameterization
\begin{align}\label{N1}
 (\gamma, \xi) \longmapsto n(u-v, \gamma, \xi) = \cos(\gamma)\frac{u-v}{|u-v|} + \sin(\gamma)\frac{\Gamma(u-v, \xi)}{|u-v|} \in S^{d-1}.
\end{align}
In order to make use of certain symmetries of the collisions (e.g. the Povzner inequalities in Section 6),
it is convenient to also change the angle $\gamma$.
Let $\theta = \theta(n) \in (0,\pi]$ be the angle between $v^{\star} - u^{\star}$ and $v-u$, i.e. one has
\[
 (v - u, v^{\star} - u^{\star}) = \cos(\theta) |v-u| | v^{\star} - u^{\star}|.
\]
Considering the triangle with endpoints $\frac{v+u}{2}, v, v^{\star}$, the corresponding angles have to satisfy 
$2 \gamma + \theta = \pi$, i.e. $\gamma = \frac{\pi}{2} - \frac{\theta}{2}$.
Therefore, we obtain from \eqref{N1}
\begin{align*}
 n = \sin\left( \frac{\theta}{2} \right) \frac{u-v}{|u-v|} + \cos\left( \frac{\theta}{2}\right) \frac{\Gamma(u-v,\xi)}{|u-v|}
\end{align*}
and, in particular, one has
\begin{align}\label{N}
 (u-v,n) = |u-v|\cos(\gamma) = |u-v| \sin\left( \frac{\theta}{2}\right).
\end{align}
Inserting this into \eqref{COLLISION} gives after a short computation
\begin{align}\label{PARA:01}
 \begin{cases} v^{\star} &= v + \alpha(v,u,\theta,\xi)
 \\ u^{\star} &= u - \alpha(v,u,\theta,\xi) \end{cases},
\end{align}
where
\begin{align}\label{FPE:00}
 \alpha(v,u,\theta,\xi) = \sin^2\left( \frac{\theta}{2}\right)(u-v) + \frac{\sin(\theta)}{2}\Gamma(u-v,\xi).
\end{align}
Note that \eqref{PARA:01} remains true also for $v = u$, if we let $\alpha(v,v,\theta,\xi) = 0 $, i.e. set $\Gamma(0,\xi) = 0$ in \eqref{FPE:00}.
In order to prove certain continuity properties for the collision integral (see Section 4), 
we will need to compare these parameterizations for different values of $u,v$.
It was already pointed out by Tanaka that 
$(u,v) \longmapsto (u-v,n)n$ cannot be smooth.
However, Tanaka has shown in \cite[Lemma 3.1]{T78} that if we allow to shift $\xi$ in a suitable way,
then a weaker form of continuity holds. The latter estimate is sufficient for this work.
Below we recall Tanaka's result for arbitrary dimension $d \geq 3$ which is due to \cite{FM09}.
\begin{Lemma}\cite[Lemma 3.1]{FM09}\label{PARAMETARIZATION}
 There exists a measurable map $\xi_0: \R^d \times \R^d \times S^{d-2} \longrightarrow S^{d-2}$ such that for any $X,Y \in \R^d\backslash \{0\}$,
 the map $\xi \longmapsto \xi_0(X,Y,\xi)$ is a bijection with jacobian $1$ from $S^{d-2}$ onto itself, and
 \[
  |\Gamma(X, \xi) - \Gamma(Y, \xi_0(X,Y,\xi))| \leq 3 |X-Y|, \qquad  \xi \in S^{d-2}.
 \]
\end{Lemma}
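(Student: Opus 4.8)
The plan is to construct $\xi_0$ by transporting, through the two parameterizations $\Gamma(X,\cdot)$ and $\Gamma(Y,\cdot)$ of the previous lemma, a single rotation of $\R^d$ that turns the direction of $X$ onto the direction of $Y$ in the most economical way. To this end, for $X,Y\in\R^d\setminus\{0\}$ write $a=X/|X|$, $b=Y/|Y|$ and let $\phi\in[0,\pi]$ be the angle between $X$ and $Y$, so that $(a,b)=\cos\phi$. If $a=\pm b$ put $\mathcal R(X,Y)=I$; otherwise set $\mathcal R(X,Y)=I+W+\frac{1}{1+(a,b)}W^2$ with $W=ba^{\mathsf T}-ab^{\mathsf T}$, which is the rotation fixing $\mathrm{span}\{a,b\}^{\perp}$ pointwise and satisfying $\mathcal R(X,Y)a=b$. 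In all cases $\mathcal R(X,Y)$ is orthogonal, $\mathcal R(X,Y)X\in\mathrm{span}\{Y\}$, the map $(X,Y)\mapsto\mathcal R(X,Y)$ is measurable (a rational expression in $a,b$ away from $\{a=-b\}$), and, decomposing $z\in\R^d$ into its parts in $\mathrm{span}\{a,b\}$ and in its orthogonal complement,
\[
 |z-\mathcal R(X,Y)z|\le 2\sin(\phi/2)\,|z|,\qquad z\in\R^d .
\]

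With this at hand I would define
\[
 \xi_0(X,Y,\xi):=\Gamma(Y,\cdot)^{-1}\!\left(\frac{|Y|}{|X|}\,\mathcal R(X,Y)\,\Gamma(X,\xi)\right).
\]
This is well defined: $\Gamma(X,\xi)\in S^{d-2}(X)$ has modulus $|X|$ and is orthogonal to $X$, so since $\mathcal R(X,Y)$ is orthogonal with $\mathcal R(X,Y)X\in\mathrm{span}\{Y\}$, the vector $\frac{|Y|}{|X|}\mathcal R(X,Y)\Gamma(X,\xi)$ has modulus $|Y|$ and is orthogonal to $Y$, hence lies in $S^{d-2}(Y)$, the domain of $\Gamma(Y,\cdot)^{-1}$. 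For fixed $X,Y$ the map $\xi\mapsto\xi_0(X,Y,\xi)$ is a composition of the three bijections $\Gamma(X,\cdot)\colon S^{d-2}\to S^{d-2}(X)$, $\frac{|Y|}{|X|}\mathcal R(X,Y)\colon S^{d-2}(X)\to S^{d-2}(Y)$ and $\Gamma(Y,\cdot)^{-1}\colon S^{d-2}(Y)\to S^{d-2}$, hence a bijection of $S^{d-2}$ onto itself; joint measurability of $\xi_0$ follows from that of $\Gamma$, of $\mathcal R$, and of $\Gamma(Y,\cdot)^{-1}$, the latter being read off the formula $\Gamma(X,\xi)=|X|R_X(\xi_1,\dots,\xi_{d-1},0)$ with the reflection $R_X$ its own inverse.

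To check that the Jacobian is $1$, write $\Gamma(X,\cdot)=|X|\,O_X$ where $O_X\colon S^{d-2}\to S^{d-2}(X/|X|)$ is a Riemannian isometry, namely the canonical embedding $S^{d-2}\hookrightarrow\{x_d=0\}$ followed by the orthogonal map $R_X$. Then $\Gamma(X,\cdot)$ multiplies the $(d-2)$-dimensional surface measure by $|X|^{d-2}$, the map $\frac{|Y|}{|X|}\mathcal R(X,Y)$ (an orthogonal map composed with a dilation of ratio $|Y|/|X|$) multiplies it by $(|Y|/|X|)^{d-2}$, and $\Gamma(Y,\cdot)^{-1}$ by $|Y|^{-(d-2)}$; the product of the three factors is $1$, so $\xi\mapsto\xi_0(X,Y,\xi)$ is measure preserving, i.e.\ has Jacobian $1$.

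Finally, for the norm bound write $w=\Gamma(X,\xi)$, so $|w|=|X|$, and $\lambda=|Y|/|X|$; by construction $\Gamma(Y,\xi_0(X,Y,\xi))=\lambda\,\mathcal R(X,Y)w$, whence
\[
 |\Gamma(X,\xi)-\Gamma(Y,\xi_0(X,Y,\xi))|\le |w-\mathcal R(X,Y)w|+|1-\lambda|\,|w|\le 2\sin(\phi/2)\,|X|+\big|\,|X|-|Y|\,\big| .
\]
Here $\big|\,|X|-|Y|\,\big|\le|X-Y|$, while from the identity $|X-Y|^2=(|X|-|Y|)^2+4|X||Y|\sin^2(\phi/2)$ (a consequence of $(X,Y)=|X||Y|\cos\phi$ and $\cos\phi=1-2\sin^2(\phi/2)$) together with $\sin^2(\phi/2)\le1$ one gets, distinguishing $|X|\le4|Y|$ from $|X|>4|Y|$, the elementary bound $|X|\sin(\phi/2)\le|X-Y|$. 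Adding up yields $|\Gamma(X,\xi)-\Gamma(Y,\xi_0(X,Y,\xi))|\le 2|X-Y|+|X-Y|=3|X-Y|$, as claimed. The two steps that genuinely require care are the cancellation of the scaling exponents $|X|^{d-2}$, $(|Y|/|X|)^{d-2}$, $|Y|^{-(d-2)}$ that forces the Jacobian to be exactly $1$, and the elementary inequality $|X|\sin(\phi/2)\le|X-Y|$ that yields precisely the constant $3$; the exceptional set $\{a=-b\}$ only needs attention when defining $\mathcal R$ there, since the estimate above (with $\mathcal R=I$, $\phi=\pi$) remains valid verbatim.
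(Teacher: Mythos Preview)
The paper does not prove this lemma; it is quoted verbatim from \cite[Lemma~3.1]{FM09} and used as a black box. Your construction is a valid, self-contained proof. The key ingredients are all in order: the map $\xi_0$ you define is a composition $\xi\mapsto R_Y\,\mathcal R(X,Y)\,R_X\,(\xi,0)$ restricted to $\{x_d=0\}\simeq\R^{d-1}$, which is an orthogonal transformation of $\R^{d-1}$ and therefore a measure-preserving bijection of $S^{d-2}$; and your elementary estimate $|X|\sin(\phi/2)\le|X-Y|$ is correct (from $(|X|-|Y|)^2\ge(|X|^2-4|X||Y|)\sin^2(\phi/2)$, which reduces to $2|X||Y|+|Y|^2\ge0$ in the case $|X|>4|Y|$), yielding exactly the constant $3$.

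One cosmetic point: at $a=-b$ your choice $\mathcal R=I$ does not satisfy $\mathcal R a=b$, but it does satisfy $\mathcal R X\in\mathrm{span}\{Y\}$ and maps $X^\perp$ to $Y^\perp$, which is all you actually use; the displacement bound $|w-\mathcal R w|=0\le 2\sin(\pi/2)|w|$ is then trivial. This exceptional definition lives on a measure-zero subset of $\R^d\times\R^d$, so measurability is unaffected.
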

 With this parameterization we obtain from Lemma \ref{PARAMETARIZATION}, for all $u,v, \widetilde{u}, \widetilde{v} \in \R^d$, all $\theta \in [0,\pi]$ and all $\xi \in S^{d-2}$, we have
 \begin{align}\label{EQ:13}
  |\alpha(v,u,\theta, \xi) - \alpha(\widetilde{v}, \widetilde{u}, \theta, \xi_0(v-u, \widetilde{v} - \widetilde{u}, \xi))| \leq 2 \theta\left( |v - \widetilde{v}| + |u -  \widetilde{u}|\right).
 \end{align}
We work with the parameterization \eqref{PARA:01}, where $\alpha$ is given by \eqref{FPE:00}.

\subsection{Assumptions}
Take $d \geq 3$ and assume that the collision kernel $B$ is given by a function $\sigma \geq 0$ and a $\sigma$-finite measure $Q$ such that
\begin{align}\label{BCOLL}
 B(|v-u|,n)dn \equiv \sigma(|v-u|)Q(d\theta)d\xi, \qquad \kappa := \int \limits_{(0,\pi]}\theta Q(d\theta) < \infty
\end{align}
where $d\xi$ is the Lebesgue surface measure on $S^{d-2}$ (recall \eqref{N}).
 Moreover we assume that there exist $\gamma \in (-1,2]$ and $c_{\sigma} \geq 1$ such that
 \begin{align}\label{EQ:01}
  |\sigma(|z|) - \sigma(|w|)| \leq c_{\sigma}| |z|^{\gamma} - |w|^{\gamma}|, \ \ z,w \in \R^d \backslash \{0\}
 \end{align}
 and
\begin{align}\label{EQ:07}
 \sigma(|z|) \leq c_{\sigma} 
\begin{cases} |z|^{\gamma}, & \text{ if } \gamma \in (-1, 0], \\ (1 + |z|^2)^{\frac{\gamma}{2}}, & \text{ if } \gamma \in (0,2] \end{cases}.
 \end{align}
Finally we assume that $0 \leq \beta \in C_c^1(\R^{d})$ is symmetric and without loss of generality $\beta \leq 1$.
et us remark that this assumptions include 
the cases studied in \eqref{EQ} as long $s > 3$.
\begin{Remark}
 Consider dimension $d = 3$ and let $B$ be given by \eqref{EQ}.
 \begin{enumerate}
  \item[(a)] If $\gamma \in (-1,2]$ and $\nu \in (0,1)$, 
  then letting $\sigma(|v-u|) = |v-u|^{\gamma}$ and $Q(d\theta) = b(\theta)d\theta$,
  we easily find that \eqref{BCOLL} -- \eqref{EQ:07} are satisfied.
  Hence our assumptions include long-range interactions with $s > 3$.
  \item[(b)] In the case of Maxwellian molecules, i.e. $\gamma = 0$ and $\nu = \frac{1}{2}$, one has $\sigma(|v-u|) = 1$ (see \eqref{EQ:01}).
  By inspection of our proofs, we see that all results obtained in this work remain valid also for the case 
  where $\sigma$ is bounded and globally Lipschitz continuous.
 \end{enumerate}
\end{Remark}

\subsection{Measure-solutions for the Enskog equation}

While \eqref{CINT} and hence \eqref{EQ:03} makes sense only for function-valued solutions, our aim is to study existence solutions to the Enskog equation
in the larger space of probability measures. For this purpose we introduce the weak formulation of the Enskog equation, see \cite{ARS17}.
Below we explain this in more detail. 
Let $f_t$ be a sufficiently smooth solution to \eqref{EQ:03}. 
Testing \eqref{EQ:03} against a smooth function $\psi$ and then integrating by parts gives
\begin{align}\label{EQ:102}
 &\ \int \limits_{\R^{2d}}\psi(r,v)\left( \frac{\partial f_t(r,v)}{\partial t} + v \cdot \nabla_r f_t(r,v)\right)drdv
 \\ \notag &= \frac{d}{dt} \int \limits_{\R^{2d}}\psi(r,v) f_t(r,v)dr dv - \int \limits_{\R^{2d}} (v \cdot \nabla_r \psi)(r,v) f_t(r,v)drdv.
\end{align}
The corresponding collision integral $\mathcal{Q}(f_t,f_t)$ is slightly more delicate, see \cite{ARS17} and \cite{T78} for additional details.
In this case one obtains
\begin{align}\label{EQ:101}
 \int \limits_{\R^{2d}}\psi(r,v)\mathcal{Q}(f_t,f_t)(r,v)dr dv = \int \limits_{\R^{4d}}\sigma(|v-u|)\beta(r-q)(\mathcal{L}\psi)(r,v;u)f_t(r,v)f_t(q,u)drdv dq du,
\end{align}
where $\mathcal{L}\psi$ is, for $\psi \in C^1(\R^{2d})$, with $\Xi = (0,\pi] \times S^{d-2}$ defined by
\[
(\mathcal{L}\psi)(r,v;u) = \int \limits_{\Xi}\left( \psi(r,v + \alpha(v,u,\theta,\xi)) - \psi(r,v)\right) Q(d\theta)d\xi.
\]
Now let for $\psi \in C^1(\R^{2d})$
\begin{align*}
 (\mathcal{A}\psi)(r,v;q,u) &= v \cdot(\nabla_r \psi)(r,v) + \sigma(|v-u|)\beta(r-q)(\mathcal{L}\psi)(r,v;u).
\end{align*}
Then combining \eqref{EQ:102} and \eqref{EQ:101} together with the conservation of probability \eqref{CONSERVATION},
shows that $(f_t)_{t \geq 0}$ satisfies the weak formulation of the Enskog equation
\begin{align}\label{EQ:103}
 \frac{d}{dt}\int \limits_{\R^{2d}} \psi(r,v)f_t(r,v)dr dv =  \int \limits_{\R^{4d}} (\mathcal{A}\psi)(r,v;q,u) f_t(r,v)f_t(q,u)drdvdqdu,
\end{align}
\begin{Remark}
 By \eqref{N} and \eqref{FPE:00} one finds 
 \begin{align}\label{PARA:00}
  |\alpha(v,u,\theta,\xi)| = |v-u|\sin\left(\frac{\theta}{2}\right)
 \end{align}
 and hence 
 \begin{align}\label{FPE:02}
  |\psi(r,v + \alpha(v,u,\theta,\xi)) - \psi(r,v)| \leq |v-u|\sin\left(\frac{\theta}{2}\right)\max\limits_{|\zeta| \leq 2 (|v| + |u|)} |\nabla_{\zeta}\psi(r,\zeta)|
 \end{align}
and hence $(\mathcal{L}\psi)(r,v;u)$ is well-defined for all $r,v,u$ and all $\psi \in C^1(\R^{2d})$.
 From this we deduce that for $\psi \in C_b^1(\R^{2d})$
 \begin{align}\label{EXISTENCE:01}
  |\mathcal{A}\psi(r,v;q,u)| \leq \Vert \nabla_{r}\psi \Vert_{\infty}|v| + \kappa \Vert \nabla_v \psi \Vert_{\infty} |v-u|\sigma(|v-u|)|S^{d-2}|.
 \end{align}
 This shows that also $\mathcal{A}\psi$ is well-defined.
\end{Remark}
Having in mind that the solutions of the Boltzmann/Enskog equation 
\eqref{EQ:03} are supposed to be densities, we
note that the weak formulation \eqref{EQ:103} also makes sense for measures.
Below we give the precise definition of a solution measure to the weak formulation of the Enskog equation.
In order to avoid (at the moment) differentiability issues, we give the weak formulation of the Enskog equation in the integral form.
Denote by $\mathcal{P}(\R^d)$ the space of probability measures and let 
\[
 \langle \psi, \mu \rangle = \int \limits_{\R^{2d}}\psi(r,v) d\mu(r,v)
\]
 be the pairing between $\mu \in \mathcal{P}(\R^d)$ and a $\mu$ integrable function $\psi$.
\begin{Definition}
 Let $\mu_0 \in \mathcal{P}(\R^{2d})$.
 A weak solution to the Enskog equation \eqref{EQ:03} is a family $(\mu_t)_{t \geq 0} \subset \mathcal{P}(\R^{2d})$ such that for all $T > 0$
 \begin{align}\label{INTRO:02}
  \int\limits_{0}^{T}\int \limits_{\R^{2d}}|v|^{1+\gamma^+}d\mu_t(r,v)dt < \infty,
 \end{align}
 where $\gamma^+ = \gamma \vee 0$ and for any $\psi \in C_b^1(\R^{2d})$ we have
 \begin{align}\label{FPE:ENSKOG}
  \langle \psi, \mu_t \rangle = \langle \psi, \mu_0 \rangle + \int \limits_0^t \langle \mathcal{A}\psi, \mu_s \otimes \mu_s \rangle ds, \ \ t \geq 0.
 \end{align}
 A weak solution is conservative if it has finite second moments in $v$ and
 \[
  \int \limits_{\R^{2d}}\begin{pmatrix} v \\ |v|^2 \end{pmatrix}d\mu_t(r,v) = \int \limits_{\R^{2d}}\begin{pmatrix} v \\ |v|^2 \end{pmatrix} d\mu_0(r,v), \ \ t \geq 0.
 \]
\end{Definition}
By observing that the total momentum and total kinetic energy is conserved after the transformation \eqref{CONSERVATION1}
and taking $\psi(r,v) = v$ or $\psi(r,v) = |v|^2$ in \eqref{FPE:ENSKOG} one finds that any reasonably regular solution should be conservative.
Since such choices for $\psi$ do not belong to $C_b^1(\R^{2d})$ additional approximation arguments are required, see Theorem \ref{FPE:LEMMA02}
in Section 4.
\begin{Corollary}\label{FPE:COROLLARY00}
 Let $\gamma \in (-1,2]$. Then any weak solution $(\mu_t)_{t \geq 0}$ to the Enskog equation satisfies conservation of momentum, i.e.
  $\int_{\R^{2d}}v d\mu_t(r,v) = \int_{\R^{2d}}v d\mu_{0}(r,v)$, where $t \geq 0$.
 Moreover, if for any $T > 0$ one has $\int_{0}^{T}\int_{\R^{2d}}|v|^{2+\gamma}d\mu_t(r,v)dt < \infty$,
 then $(\mu_t)_{t \geq 0}$ is a conservative weak solution.
\end{Corollary}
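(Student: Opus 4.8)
The plan is to insert the collision invariants $\psi(r,v)=v_i$, $i=1,\dots,d$, and $\psi(r,v)=|v|^2$ into the weak formulation \eqref{FPE:ENSKOG}. Both are independent of $r$, so the transport term $v\cdot\nabla_r\psi$ of $\mathcal A$ vanishes identically; the only real difficulty is that these $\psi$ do not belong to $C_b^1(\R^{2d})$, which is exactly what Theorem \ref{FPE:LEMMA02} is designed to handle, extending \eqref{FPE:ENSKOG} to test functions of the relevant polynomial growth. Its hypotheses are met for $\psi=v_i$ by virtue of the moment bound \eqref{INTRO:02}, and for $\psi=|v|^2$ by virtue of the additional assumption $\int_0^T\!\int_{\R^{2d}}|v|^{2+\gamma}\,d\mu_t(r,v)\,dt<\infty$.

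Granting this, one computes $\mathcal L\psi$ explicitly on the two invariants. Using $\alpha(v,u,\theta,\xi)=\sin^2(\theta/2)(u-v)+\tfrac{\sin\theta}{2}\Gamma(u-v,\xi)$ from \eqref{FPE:00}, the fact that $\Gamma(X,\cdot)$ is the image of the centred sphere $S^{d-2}$ under an orthogonal map, hence $\int_{S^{d-2}}\Gamma(X,\xi)\,d\xi=0$, together with $|\alpha(v,u,\theta,\xi)|^2=|v-u|^2\sin^2(\theta/2)$ from \eqref{PARA:00} and the identity $2(v,u-v)+|u-v|^2=|u|^2-|v|^2$, one finds
\[
 (\mathcal L v_i)(r,v;u)=c_Q\,(u-v)_i,\qquad (\mathcal L |v|^2)(r,v;u)=c_Q\,\big(|u|^2-|v|^2\big),
\]
where $c_Q=|S^{d-2}|\int_{(0,\pi]}\sin^2(\theta/2)\,Q(d\theta)$ is finite since $\sin^2(\theta/2)\le\tfrac{\pi}{4}\theta$ on $(0,\pi]$ and $\kappa<\infty$. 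Consequently $(\mathcal A\psi)(r,v;q,u)=c_Q\,\sigma(|v-u|)\beta(r-q)\,\Delta_\psi(v,u)$ with $\Delta_{v_i}(v,u)=(u-v)_i$ and $\Delta_{|v|^2}(v,u)=|u|^2-|v|^2$.

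Now I would symmetrize in $(r,v)\leftrightarrow(q,u)$. The weight $\sigma(|v-u|)\beta(r-q)$ is invariant under this exchange (since $\sigma(|\cdot|)$ is even and $\beta$ is symmetric) and $\mu_s\otimes\mu_s$ is symmetric, whereas $\Delta_\psi$ is antisymmetric; relabelling the integration variables therefore gives $\langle\mathcal A\psi,\mu_s\otimes\mu_s\rangle=0$ for a.e.\ $s$, provided the integral is absolutely convergent. This is where the moment hypotheses are used: by \eqref{EQ:07} one has $|(u-v)_i|\,\sigma(|v-u|)\lesssim 1+|v|^{1+\gamma^+}+|u|^{1+\gamma^+}$, integrable against $\mu_s\otimes\mu_s$ for a.e.\ $s$ by \eqref{INTRO:02}, while $\big||u|^2-|v|^2\big|\,\sigma(|v-u|)=\big(|u|+|v|\big)|v-u|\,\sigma(|v-u|)\lesssim 1+|v|^{2+\gamma}+|u|^{2+\gamma}$ — the extra factor $|v-u|$ coming from $|u|^2-|v|^2$ being precisely what absorbs a singular $\sigma$ when $\gamma\le0$ — integrable for a.e.\ $s$ by the additional assumption. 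Substituting $\langle\mathcal A\psi,\mu_s\otimes\mu_s\rangle=0$ into \eqref{FPE:ENSKOG} gives $\langle v_i,\mu_t\rangle=\langle v_i,\mu_0\rangle$ and $\langle|v|^2,\mu_t\rangle=\langle|v|^2,\mu_0\rangle$ for all $t\ge0$; one also notes that the first moment, and (when $\gamma>0$, or when $\gamma\le0$ and $\mu_0$ has finite energy) the second moment, of $\mu_t$ is finite for all $t$ — finiteness being first read off at some $t_0>0$ for which the corresponding time-integrand is finite, and then transported to $t=0$ and to all $t$ via the identities just obtained. Hence $(\mu_t)_{t\ge0}$ is a conservative weak solution.

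The step I expect to be the main obstacle is making Theorem \ref{FPE:LEMMA02} and the limiting argument behind it rigorous. Concretely one approximates $\psi$ by $\psi_N(r,v)=\psi(r,v)\chi(v/N)$ with $\chi\in C_c^\infty$ radially non-increasing and $\chi\equiv1$ on the unit ball, writes $\psi_N=\psi-\rho_N$ with $\rho_N$ supported in $\{|v|>N\}$, and must check both $\langle\psi_N,\mu_t\rangle\to\langle\psi,\mu_t\rangle$ (by monotone/dominated convergence) and $\langle\mathcal A\rho_N,\mu_s\otimes\mu_s\rangle\to0$. The latter rests on the estimate $|(\mathcal L\rho_N)(r,v;u)|\lesssim\kappa\,(1+|v|+|u|)\,|v-u|$, supported on a set contained in $\{|v|+|u|\gtrsim N\}$ (using \eqref{PARA:00} and $|\nabla\rho_N(v)|\lesssim(1+|v|)\mathbbm{1}_{\{|v|>N\}}$ for $\psi=|v|^2$, resp.\ $|\nabla\rho_N(v)|\lesssim\mathbbm{1}_{\{|v|>N\}}$ for $\psi=v_i$), together with dominated convergence; the delicate point is to dominate this by exactly the moments that are available, which forces the careful interplay of \eqref{EQ:07}, \eqref{PARA:00} and the elementary inequalities $(a+b)^p\lesssim a^p+b^p$ used above. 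The remaining verifications are routine bookkeeping.
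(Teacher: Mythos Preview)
Your approach is correct and coincides with the paper's, which simply records that both assertions follow from Theorem~\ref{FPE:LEMMA02}; you have filled in the expected details --- applying that theorem with $q=1$ and $q=2$ respectively, computing $\mathcal L\psi$ on the collision invariants, and then exploiting the antisymmetry of the integrand under $(r,v)\leftrightarrow(q,u)$ to obtain $\langle\mathcal A\psi,\mu_s\otimes\mu_s\rangle=0$. Your final paragraph re-sketches the cut-off approximation already contained in the proof of Theorem~\ref{FPE:LEMMA02} and is redundant once that theorem is invoked.
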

 Both assertions follow from Theorem \ref{FPE:LEMMA02} proved in Section 4.
Another consequence of Theorem \ref{FPE:LEMMA02} is that, under suitable moment conditions, we may differentiate \eqref{FPE:ENSKOG}
and hence rewrite it in a differential form.
\begin{Remark}\label{REMARK:10}
 Let $(\mu_t)_{t \geq 0}$ be a weak solution to the Enskog equation. If $\gamma \in (0,2]$ suppose in addition that
 \begin{align}\label{EQ:30}
  \int \limits_{0}^{T}\int \limits_{\R^{2d}}|v|^{1 + 2\gamma}d\mu_t(r,v)dt < \infty, \ \ \forall T > 0.
 \end{align}
 Then, for any  $\psi \in C_b^1(\R^{2d})$, $t \longmapsto \langle \mathcal{A}\psi, \mu_t \otimes \mu_t \rangle$ is continuous and
 $t \longmapsto \langle \psi, \mu_t\rangle$ is continuously differentiable such that
 \begin{align}\label{EQ:106}
  \frac{d}{dt}\langle \psi, \mu_t \rangle = \langle \mathcal{A}\psi, \mu_t \otimes \mu_t \rangle, \ \ t \geq 0.
 \end{align}
\end{Remark}
Finally, let us remark on the choice of test function space used in the Definition of weak solutions to the Enskog equation.
\begin{Remark}
 Let $(\mu_t)_{t \geq 0} \subset \mathcal{P}(\R^{2d})$ satisfy \eqref{INTRO:02}.
 Then \eqref{FPE:ENSKOG} holds for all $\psi \in C_c^1(\R^{2d})$ if and only if \eqref{FPE:ENSKOG} holds for all $\psi \in C_b^1(\R^{2d})$.
\end{Remark}
The last remark can be shown by classical approximation arguments similar to those of Section 4.

\subsection{The Enskog process}
For a given random variable $Z$ we let $\mathcal{L}(Z)$ be the law of $Z$.
Below we provide, in the spirit of Tanaka for the space-homogeneous Boltzmann equation, the definition of an Enskog process associated to the
Enskog equation.
\begin{Definition}\label{DEF:ENSKOG}
 Let $\mu_0 \in \mathcal{P}(\R^{2d})$ be given.
An Enskog process with initial distribution $\mu_0$ consists of the following:
\begin{enumerate}
 \item[(i)] A probability space $(\mathcal{X},d\eta)$ and a Poisson random measure $N$ with compensator 
   \begin{align}\label{UNIQ:00}
    d\widehat{N}(s,\theta,\xi,\eta,z) = ds Q(d\theta)d\xi d\eta dz \quad \text{ on } \R_+\times \Xi \times \mathcal{X} \times \R_+.
   \end{align}
   defined on a stochastic basis $(\Omega, \F, \F_t, \Pr)$ with the usual conditions.
 \item[(ii)] A c\`{a}dl\`{a}g process $(q_t(\eta),u_t(\eta)) \in \R^{2d}$ on $(\mathcal{X},d\eta)$
 and an $\F_t$-adapted c\`{a}dl\`{a}g process $(R_t,V_t)$ on $(\Omega,\F, \F_t, \Pr)$ such that $\mathcal{L}(R_t,V_t) = \mathcal{L}(q_t,u_t)$ for any $t \geq 0$, $\mathcal{L}(R_0, V_0) = \mu_0$,
 \[
  \int \limits_{0}^{T}\E\left( |V_t|^{1 + \gamma^+} \right) dt < \infty, \ \ \forall T > 0,
 \]
 and for $\widehat{\alpha}(v,r,u,q,\theta,\xi,z) = \alpha(v,u,\theta,\xi) \1_{[0,\sigma(|v - u|)\beta(r - q)]}(z)$
   \begin{align}\label{SDE:ENSKOG}
    \begin{cases} R_t &= R_0 + \int \limits_{0}^{t}V_s ds
    \\ V_t &= V_0 + \int \limits_{0}^{t}\int \limits_{\Xi \times \mathcal{X} \times \R_+} \widehat{\alpha}(V_{s-},R_s, u_{s}(\eta),q_s(\eta),\theta,\xi,z)dN(s,\theta,\xi,\eta,z)\end{cases}.
   \end{align}
 \end{enumerate}
\end{Definition}
Remark that the Enskog process given by this definition is a weak solution to \eqref{SDE:ENSKOG}.
By abuse of notation we let $(R,V)$ stand for an Enskog process in the sense of Definition \ref{DEF:ENSKOG} and omit the dependece on the other parameters whenever no confusion may arise.
The particular choice $(\mathcal{X}, d\eta) = ([0,1],dx)$ was used in Tanaka's original work (see \cite{T78, T87}) for the space-homogeneous Boltzmann equation.
The Enskog process for the non-homogeneous case as considered in \eqref{SDE:ENSKOG} was defined on $\mathcal{X} = D(\R_+; \R^{2d})$ (the Skorokhod space) with $d\eta$ being the law of $(R,V)$, see \cite{ARS17}.

Introduce, for $\psi \in C_c^1(\R^{2d})$ and $\nu \in \mathcal{P}(\R^{2d})$ with $\int_{\R^{2d}}|v|^{1+\gamma^+}\nu(dr,dv) < \infty$,
 the Markov operator
 \begin{align}\label{ENSKOG:OP}
  (A(\nu)\psi)(r,v) = \int \limits_{\R^{2d}}(\mathcal{A}\psi)(r,v;q,u)\nu(dq,du).
 \end{align} 
The next remark relates the Enskog equation \eqref{FPE:ENSKOG} with the notion of an Enskog process.
\begin{Remark}
 Let $(R,V)$ be an Enskog process, and define $\mu_t = \mathcal{L}(R_t,V_t)$, $t \geq 0$.
 An application of the It\^{o} formula shows that 
 \[
  \psi(R_t,V_t) - \psi(R_0,V_0) - \int \limits_{0}^{t} (A(\mu_s)\psi)(R_s,V_s)ds, \qquad t \geq 0
 \]
 is a martingale. Hence $(\mu_t)_{t \geq 0}$ is a weak solution to the Enskog equation. 
\end{Remark}
Consequently, constructing an Enskog process gives immediately a weak solution to the Enskog equation.

\section{Discussion of the main result}

\subsection{The result}
In this work we provide, based on a particle approximation of binary collisions, an existence theory for Enskog processes 
(and hence weak solutions to the Enskog equation). 
Namely, let $n \geq 2$ be the number of interacting particles whose coordinates in phase space are given by
$r = (r_1,\dots, r_n) \in \R^{dn}$ for their positions and $v = (v_1,\dots, v_n) \in \R^{dn}$ for their velocities.
A collision of a particle $(r_k,v_k)$ with another particle $(r_j,v_j)$ results in the change of particle configuration
\begin{align}\label{CONF}
 (r,v) \longmapsto (r,v_{kj}), \qquad \text{ with } \qquad v_{kj} = v + (e_k - e_j)\alpha(v_k,v_j,\theta,\xi),
\end{align}
where $k,j = 1,\dots, d$, $e_l = (0_d,\dots, 0_d,1_d,0_d,\dots, 0_d) \in \R^{dn}$ where $0_d$ is the zero matrix and $1_d$ the identity matrix in $\R^{d \times d}$. 
This defines an integral operator on $C_c^1(\R^{2dn})$ via
\[
 (\mathcal{J}_{kj}F)(r,v) = \frac{1}{2}\int \limits_{\Xi}\left( F(r,v_{kj}) - F(r,v) \right)Q(d\theta)d\xi.
\]
The corresponding Markov operator for the whole particle dynamics is given by
\begin{align}\label{EQ:00}
 (LF)(r,v) = \sum \limits_{k=1}^{n}v_k \cdot (\nabla_{r_k}F)(r,v) 
 + \frac{1}{n}\sum \limits_{k,j=1}^{n}\sigma(|v_k - v_j|)\beta(r_k - r_j)(\mathcal{J}_{kj}F)(r,v)
\end{align}
with domain $C_c^{1}(\R^{2dn})$. 
In Section 5, Theorem \ref{IPS:TH00} we will prove that the corresponding martingale problem $(L, C_c^1(\R^{2dn}), \rho)$ has,
for each initial distribution $\rho \in \mathcal{P}(\R^{2dn})$,
a unique solution in the Skorokhod space $D(\R_+;\R^{2dn})$ endowed with the usual Skorokhod topology.
Let $(\mathcal{X}_1^n, \dots, \mathcal{X}_n^n)$ be the corresponding Markov process
with sample paths in the Skorokhod space 
Here we let $\mathcal{X}_k^n$ = $(\mathcal{R}_k^n, \mathcal{V}_k^n)$, where $\mathcal{R}_k^n$ denotes the position
and $\mathcal{V}_k^n$ the velocity of the particle $k \in \{1,\dots, n\}$.

Suppose that at initial time $t = 0$ all particles are independent and identically distributed, i.e.
$\mathcal{X}_k^n(0)$, $k=1,\dots, n$, are independent as random variables and $\mathcal{L}(\mathcal{X}_k^n(0)) = \mu_0 \in \mathcal{P}(\R^{2d})$,
for all $k = 1, \dots, n$. Define the sequence of empirical measures 
\[
 \mu^{(n)} := \frac{1}{n}\sum \limits_{k=1}^{n}\delta_{\mathcal{X}_k^n},
\]
i.e. probability measures over the Skorokhod space $D(\R_+;\R^{2d})$.
We seek to show that $(\mu^{(n)})_{n \geq 2}$ is tight and, moreover, prove that each limit describes the law of 
an Enskog process in the sense of Definition \ref{DEF:ENSKOG}.
Since, as usual, hard and soft potentials require different estimates, we study these seperately.
Let us start with existence in the simpler case of soft potentials.
\begin{Theorem}\label{TH:04}
 Suppose that $\gamma \in (-1,0]$ and let $\mu_0 \in \mathcal{P}(\R^{2d})$ be such that there exists $\e > 0$ with
 \begin{align*}
  \int \limits_{\R^{2d}}\left( |r|^{\e} + |v|^{2+\gamma} \right)d\mu_0(r,v) < \infty.
 \end{align*}
 Then $(\mu^{(n)})_{n \geq 2}$ is tight, and each limit describes the law of an
 Enskog process $(R,V)$ with initial distribution $\mu_0$.
 Moreover, for each $p \geq 2+\gamma$, there exists a constant $C_p > 0$ such that 
 \begin{align*}
  \E \left(\sup \limits_{s \in [0,t]}|V_s|^p\right) \leq \begin{cases} C_p \left( 1 + \int \limits_{\R^{2d}} |v|^p d\mu_0(r,v) \right) + C_p t^{\frac{p}{|\gamma|}}, & \text{ if } \gamma \neq 0 \\ \left( 1 + \int \limits_{\R^{2d}} |v|^p d\mu_0(r,v) \right)e^{C_p t}, & \text{ if } \gamma = 0 \end{cases},
 \end{align*}
 provided the right-hand side is finite. Moreover, if $p \geq 4$ and $\gamma = 0$, then  
 \begin{align*}
  \E \left( |V_t|^p \right) \leq C_p \left( 1 + \int \limits_{\R^{2d}} |v|^p d\mu_0(r,v) \right) t^p,
 \end{align*}
 provided the right-hand side is finite.
\end{Theorem}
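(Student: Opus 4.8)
The strategy is the usual two-step program for mean-field limits of jump-type particle systems: first establish uniform-in-$n$ moment bounds and tightness for the empirical measures $(\mu^{(n)})_{n\ge 2}$, then identify any weak limit point as the law of an Enskog process via a martingale characterization. I would begin with the a priori moment estimates, since these are the technical backbone and also furnish the quantitative bounds in the statement. Apply It\^o's formula for Poisson random measures to $|v_k|^p$ along the $n$-particle dynamics governed by $L$ in \eqref{EQ:00}. Using the elementary inequality $\bigl||v+w|^p - |v|^p\bigr| \le C_p(|v|^{p-1}|w| + |w|^p)$ together with $|\alpha(v_k,v_j,\theta,\xi)| = |v_k-v_j|\sin(\theta/2) \le \theta|v_k-v_j|$ from \eqref{PARA:00}, and the bound $\sigma(|v_k-v_j|)\le c_\sigma |v_k-v_j|^\gamma$ from \eqref{EQ:07} valid for $\gamma\in(-1,0]$, one gets (after summing over $j$ and dividing by $n$, which is where exchangeability and $\beta\le 1$ are used) a differential inequality of the form $\frac{d}{dt}\,\tfrac1n\sum_k\E|\mathcal V_k^n(t)|^p \le C_p\bigl(1 + (\tfrac1n\sum_k\E|\mathcal V_k^n|^p)^{\alpha}\bigr)$ with exponent $\alpha<1$ when $\gamma<0$ (because the $|v_k-v_j|^{\gamma}$ factor \emph{lowers} the effective power — this is precisely why soft potentials are the "simpler" case and why one gets the $t^{p/|\gamma|}$ polynomial-in-time growth rather than exponential), and $\alpha = 1$ when $\gamma=0$, giving Gr\"onwall and the $e^{C_p t}$ bound. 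The supremum-in-time version is obtained by keeping the martingale part, applying the Burkholder–Davis–Gundy inequality to its quadratic variation, and absorbing; the refined $t^p$ bound for $p\ge 4$, $\gamma=0$ uses a sharper accounting of the jump term (the $|w|^p$ contribution carries two factors of $|v_k-v_j|$ worth of smallness from $\alpha^{\otimes}$, effectively $\theta^2$, integrable against $Q$).

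Next, tightness. By Sznitman's criterion it suffices to prove tightness of the law of a single (tagged) particle $\mathcal X_1^n = (\mathcal R_1^n, \mathcal V_1^n)$ in $D(\R_+;\R^{2d})$, uniformly in $n$. The velocity component has bounded (uniformly in $n$) moments of order $2+\gamma>1$ by the previous step, and its jumps are controlled by $\int_\Xi |\alpha|\, Q(d\theta)d\xi \le \kappa\,|S^{d-2}|\,|v_1-v_j|\sigma(|v_1-v_j|)$, which has uniformly bounded expectation; an Aldous-type criterion (estimating $\E|\mathcal V_1^n(\tau+\delta) - \mathcal V_1^n(\tau)|$ over stopping times $\tau$) then gives tightness of $(\mathcal V_1^n)$. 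The position component $\mathcal R_1^n(t) = \mathcal R_1^n(0) + \int_0^t \mathcal V_1^n(s)\,ds$ is Lipschitz in time with a uniformly integrable Lipschitz constant, so tightness transfers to it once we control $|\mathcal R_1^n(0)|^\e$, which is exactly the hypothesis $\int |r|^\e\,d\mu_0<\infty$. Combining, the laws of $\mathcal X_1^n$ are tight, hence so are the $\mu^{(n)}$ (as random measures on the Skorokhod space), and in fact by exchangeability one upgrades to tightness of the empirical measures and, along subsequences, almost-sure convergence to some limit $\mu^\infty$.

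Finally, identification of the limit. Fix a limit point; with probability one it is the law of a process $(R,V)$ with $\mathcal L(R_0,V_0)=\mu_0$ and the right moment bound inherited from the uniform estimates via Fatou. One shows $(R,V)$ solves the nonlinear martingale problem: for $\psi\in C_c^1(\R^{2d})$,
\[
 M_t^\psi := \psi(R_t,V_t) - \psi(R_0,V_0) - \int_0^t (A(\mu_s)\psi)(R_s,V_s)\,ds
\]
is a martingale, where $\mu_s$ is the time-$s$ marginal of the limit. This comes from passing to the limit in the corresponding $n$-particle martingale: the generator $L$ applied to the function $(r,v)\mapsto\psi(r_1,v_1)$ produces $v_1\cdot\nabla_r\psi + \tfrac1n\sum_j\sigma(|v_1-v_j|)\beta(r_1-r_j)(\mathcal L\psi)(r_1,v_1;v_j)$, whose second term is $\langle \mathcal A\psi(r_1,v_1;\cdot,\cdot),\mu^{(n)}_s\rangle$ up to an $O(1/n)$ diagonal correction. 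The main obstacle — and the step requiring genuine care rather than bookkeeping — is justifying this passage to the limit, because $\mathcal A\psi(r,v;q,u)$ is \emph{not} bounded and continuous: the factor $|v-u|\sigma(|v-u|)$ grows (for $\gamma>0$; here it is bounded but for the case $\gamma=0$ the factor $|v-u|$ still grows) and, more seriously, the parameterization $\alpha(v,u,\theta,\xi)$ is not continuous in $(v,u)$ in the naive sense. This is exactly what the Tanaka–type shift $\xi_0$ and estimate \eqref{EQ:13}, together with the Lipschitz bound \eqref{EQ:01} on $\sigma$, are designed to repair: they give a modulus of continuity for $\mathcal L\psi$ (after a jacobian-one change of variables in $\xi$) that makes $\langle\mathcal A\psi,\cdot\otimes\cdot\rangle$ continuous along the convergent empirical measures, once one also has the uniform moment control to truncate large velocities. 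Granting that continuity and a uniform-integrability argument to handle the tails, $\langle A(\mu^{(n)}_s)\psi, \mu^{(n)}_s\rangle \to \langle A(\mu_s)\psi,\mu_s\rangle$, the martingale property passes to the limit, the constructed driving Poisson measure is recovered by a representation theorem for the jump part, and $(R,V)$ is an Enskog process in the sense of Definition \ref{DEF:ENSKOG}.
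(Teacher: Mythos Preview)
Your overall strategy --- uniform moment bounds via It\^o's formula, tightness through Aldous and Sznitman's exchangeability reduction, identification of the limit as a solution of the nonlinear martingale problem, and finally a representation theorem to recover the SDE \eqref{SDE:ENSKOG} --- is exactly the route the paper takes (Sections 5--7). The moment estimate for $\gamma\in(-1,0]$ via Bihari--LaSalle (polynomial growth) or Gronwall ($\gamma=0$, exponential growth) also matches Proposition \ref{IPS:TH01SP}. Two technical points differ slightly from the paper: for the supremum-in-time bound the paper avoids Burkholder--Davis--Gundy entirely by working with the pathwise Poisson integral and simply moving the supremum inside before taking expectation (Lemma \ref{EST:03}(b)); and in the identification step the paper handles the lack of continuity of $\nu\mapsto A(\nu)\psi$ not by a direct modulus-of-continuity argument but by introducing a velocity-truncated operator $A_R(\nu)$ that \emph{is} weakly continuous, and then controlling $|A-A_R|$ via the uniform $(2+\gamma)$-moment. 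Both are minor and your suggested variants would also work.

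There is, however, a genuine gap in your argument for the final claim, the polynomial bound $\E|V_t|^p \le C_p(1+\int|v|^p\,d\mu_0)\,t^p$ for $\gamma=0$ and $p\ge 4$. Your proposed mechanism --- that the ``$|w|^p$ contribution carries $\theta^2$ integrability'' --- does not yield this. A single-particle estimate of the form $\bigl||v_1+\alpha|^p-|v_1|^p\bigr|\le C_p\theta|v_1-v_j|(|v_1|^{p-1}+|v_1-v_j|^{p-1})$ produces, after averaging in $j$, a right-hand side of order $\tfrac{1}{n}\langle v\rangle_p$, which only reproduces the Gronwall exponential bound. The polynomial bound requires two ingredients you do not mention: first, one must track the \emph{total} moment $\sum_k\langle v_k\rangle^{2p}$ and use that binary collisions change \emph{two} velocities simultaneously, so the relevant increment is $\langle v_k^\star\rangle^{2p}+\langle v_j^\star\rangle^{2p}-\langle v_k\rangle^{2p}-\langle v_j\rangle^{2p}$, to which the Povzner-type inequality (Lemma \ref{FPE:LEMMA05}) applies and gives a bound by \emph{products} of lower-order moments, schematically $\tfrac{\langle v\rangle_2}{n}\cdot\tfrac{\langle v\rangle_{2p-2}}{n}$; second, the pathwise conservation of kinetic energy \eqref{CONS} for the particle system lets one freeze the factor $\tfrac{\langle v\rangle_2}{n}$ at its initial value (after conditioning on the initial configuration), reducing the differential inequality to one with a sub-unit exponent and hence yielding Bihari--LaSalle and the $t^p$ growth (Proposition \ref{IPS:TH01HP}(a) specialised to $\gamma=0$). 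Neither the Povzner cancellation nor the pathwise energy conservation is visible from your single-particle viewpoint.
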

Existence of an Enskog process is here obtained under very mild moment assumptions in space and velocity.
In particular, if $\gamma \in (-1,0)$, then $2+\gamma < 2$ and hence the initial condition does not need to have finite (kinetic) energy. 
If $\mu_0$ has finite energy, then the constructed Enskog process has finite second moments and by Remark \ref{FPE:COROLLARY00} satisfies the conservation laws.

For hard potentials the situation is more subtle. We have to distinguish between the critical case $\gamma = 2$ (critical due to the conservation of kinetic energy)
and the simpler case $\gamma \in (0,2)$. In the simpler case we obtain the following.
\begin{Theorem}\label{TH:05}
 Suppose that $\gamma \in (0,2)$ and let $\mu_0 \in \mathcal{P}(\R^{2d})$ be such that there exists $\e > 0$ with
 \begin{align*}
  \int \limits_{\R^{2d}}\left( |r|^{\e} + |v|^{\frac{2}{2-\gamma}\max\{4,1+2\gamma\}} \right)d\mu_0(r,v) < \infty.
 \end{align*}
 Then $(\mu^{(n)})_{n \geq 2}$ is tight, and each limit describes the law of an
 Enskog process $(R,V)$ with initial distribution $\mu_0$.
 Moreover, for any $p \geq 4$, there exists a constant $C_p > 0$ such that
 \begin{align*}
  \E\left( |V_t|^p \right) \leq C_{p}\left( 1 + \int \limits_{\R^{2d}}|v|^{\frac{2p}{2-\gamma}}d\mu_0(r,v) \right)  t^{\frac{2p}{2-\gamma}}, \ \ t \geq 0,
 \end{align*}
 and if $p + \gamma \geq 4$, then also
 \begin{align*}
  \E \left(\sup \limits_{s \in [0,t]} |V_s|^p\right) \leq C_p \left(1 + \int \limits_{\R^{2d}}|v|^{\frac{2p + \gamma}{2-\gamma}}d\mu_0(r,v)\right)\left(1 +  t^{\frac{2p+2}{2-\gamma}}\right), \ \ t \geq 0,
 \end{align*}
 provided the right-hand sides are finite.
\end{Theorem}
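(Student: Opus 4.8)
The strategy mirrors the soft-potential case (Theorem \ref{TH:04}), but the presence of the growing factor $\sigma(|v_k-v_j|) \leq c_\sigma(1+|v_k-v_j|^2)^{\gamma/2}$ in the collision rate forces us to propagate higher velocity moments and to accept time-polynomial (rather than exponential) bounds. The plan is to proceed in four stages: (i) a priori moment estimates for the $n$-particle system \eqref{EQ:00}; (ii) tightness of $(\mu^{(n)})_{n\geq2}$ in $\mathcal{P}(D(\R_+;\R^{2d}))$; (iii) identification of any limit point as (the law of) an Enskog process in the sense of Definition \ref{DEF:ENSKOG}; (iv) transfer of the moment bounds to the limit.

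\emph{Step (i): $n$-particle moment estimates.} Apply the martingale problem for $L$ to $F(r,v)=\frac1n\sum_{k=1}^n|v_k|^p$. Using \eqref{PARA:00} one has $|\alpha(v_k,v_j,\theta,\xi)|=|v_k-v_j|\sin(\theta/2)$, so a Taylor expansion gives $\big||v_k+\alpha|^p-|v_k|^p\big| \lesssim_p \theta\,|v_k-v_j|\,|v_k|^{p-1}+\theta^2|v_k-v_j|^p$, and the $\theta$-moments are controlled by $\kappa=\int\theta Q(d\theta)<\infty$ together with $\int\theta^2 Q(d\theta)<\infty$ (which follows from $\kappa<\infty$ since $\theta\le\pi$). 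Multiplying by $\sigma(|v_k-v_j|)\beta(r_k-r_j)\le c_\sigma(1+|v_k-v_j|^2)^{\gamma/2}$, bounding $\beta\le1$, using $|v_k-v_j|^{1+\gamma}\lesssim 1+|v_k|^{1+\gamma}+|v_j|^{1+\gamma}$ and symmetrizing the double sum, one obtains a closed differential inequality of the form $\frac{d}{dt}\E M_p^{(n)}(t) \le C_p\big(1+\E M_p^{(n)}(t)^{(p+\gamma)/p}\big)$ for $M_p^{(n)}(t):=\frac1n\sum_k\E|\mathcal V_k^n(t)|^p$, whose exponent exceeds $1$ precisely because $\gamma>0$. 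Integrating this Bernoulli-type inequality yields the $t^{2p/(2-\gamma)}$ growth; the exponent $\frac{2p}{2-\gamma}$ is exactly the scaling exponent that makes the nonlinearity close. For the supremum bound one goes through the Doob/Burkholder--Davis--Gundy inequality applied to the compensated jump martingale in \eqref{EQ:00}, which produces the quadratic-variation term $\int_0^t\sum_{k,j}\sigma(|v_k-v_j|)|\alpha|^p\,ds$; controlling it requires the slightly higher moment $\frac{2p+\gamma}{2-\gamma}$ of $\mu_0$, explaining the exponent in the second displayed bound. A companion estimate with $F=\frac1n\sum_k|r_k|^\e$ (using $|r_k|\le|r_k(0)|+\int_0^t|v_k|ds$ and $\e\le1$) gives a crude position-moment bound, enough for tightness in the position variable.

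\emph{Steps (ii)--(iii): tightness and identification.} Tightness follows from the moment bounds via an Aldous-type criterion: the position component is Lipschitz in $t$ with velocity-controlled constant, and the velocity component is a pure-jump process whose compensator has finite intensity on bounded velocity sets, so oscillation control over small time intervals reduces to the moment bounds of Step (i). Exchangeability of the particles plus the moment bounds give tightness of $\mu^{(n)}$ on $\mathcal{P}(D(\R_+;\R^{2d}))$. For identification, fix a limit $\mu^{(\infty)}=\mathcal L(R,V)$; one shows that under $\mu^{(\infty)}$ the canonical process solves the nonlinear martingale problem $\psi(R_t,V_t)-\psi(R_0,V_0)-\int_0^t (A(\mu_s)\psi)(R_s,V_s)\,ds$ is a martingale for $\psi\in C_c^1(\R^{2d})$, where $\mu_s=\mathcal L(R_s,V_s)$; this is done by passing to the limit in the $n$-particle martingale problem, using the continuity properties of $\mathcal A$ from Section 4 (which in turn rest on \eqref{EQ:13} and the Lipschitz assumption \eqref{EQ:01} on $\sigma$) to handle the replacement of $\mu^{(n)}_s$ by $\mu_s$ in the nonlinear term. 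Uniform integrability of $|v|^{1+\gamma}$ along the approximation — guaranteed by the higher moment assumed on $\mu_0$ and propagated by Step (i) — legitimizes this passage. Finally one realizes this martingale solution as a weak solution to the SDE \eqref{SDE:ENSKOG} driven by a Poisson random measure on $(\mathcal X,d\eta)=D(\R_+;\R^{2d})$ with $d\eta=\mathcal L(R,V)$, exactly as in \cite{ARS17}, which gives the Enskog process.

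\emph{Step (iv) and the main obstacle.} The moment bounds on $(R,V)$ are obtained by lower-semicontinuity: $\E(|V_t|^p)\le\liminf_n\E(|\mathcal V_1^n(t)|^p)$ by Fatou and exchangeability, which inherits the $t^{2p/(2-\gamma)}$ bound from Step (i); the supremum bound follows similarly using the Skorokhod $J_1$ convergence and the fact that $t\mapsto|V_t|^p$ jumps upward only finitely often on compacts. The main obstacle is Step (i): closing the moment recursion requires a careful Povzner-type symmetrization of $\sum_{k,j}\sigma(|v_k-v_j|)(|v_k+\alpha|^p-|v_k|^p)$ so that the genuinely dangerous highest-order term $|v_k|^{p+\gamma}$ appears with the correct sign and constant, and tracking how the exponent $p$ must be inflated to $\frac{2p}{2-\gamma}$ (resp. $\frac{2p+\gamma}{2-\gamma}$) at the level of the initial datum. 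The bookkeeping of exponents — ensuring $\max\{4,1+2\gamma\}$ surfaces as the threshold so that both the $L^p$ bound (needing $p\ge4$) and the BDG term (needing $p+\gamma\ge4$, equivalently the $1+2\gamma$ branch) are simultaneously covered — is where most of the technical work lies.
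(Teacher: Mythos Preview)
Your outline has the right architecture (particle moments $\Rightarrow$ tightness $\Rightarrow$ martingale problem $\Rightarrow$ SDE representation $\Rightarrow$ Fatou for the limit moments), and Steps (ii)--(iv) match the paper closely. But Step~(i) as written does not work, and the missing ingredient is exactly the one the paper singles out.

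You derive a differential inequality of the form
\[
 \frac{d}{dt}\,\E M_p^{(n)}(t)\;\leq\;C_p\Big(1+\E M_p^{(n)}(t)^{(p+\gamma)/p}\Big),
\]
with exponent $(p+\gamma)/p>1$ since $\gamma>0$. A Bernoulli/Bihari inequality with a \emph{superlinear} right-hand side does not produce polynomial growth; it gives an a~priori bound valid only up to a finite blow-up time. So the asserted $t^{2p/(2-\gamma)}$ growth cannot be obtained from this inequality. The naive Taylor bound $\big||v_k+\alpha|^p-|v_k|^p\big|\lesssim \theta|v_k-v_j|\,|v_k|^{p-1}+\dots$ combined with $\sigma\lesssim\langle v_k-v_j\rangle^\gamma$ inevitably generates a top-order term of size $\langle v_k\rangle^{p+\gamma}$, and that is one order too high to close.

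What the paper actually does (Section~6) is twofold. First, the Povzner inequality (Lemma~\ref{FPE:LEMMA05}) is used not merely to fix a sign but to replace the raw bound by the \emph{factorized} estimate \eqref{IPS:EQ02},
\[
 (L\langle\cdot\rangle_{2p})(r,v)\;\leq\;C_p\,\frac{\langle v\rangle_2}{n}\cdot\frac{\langle v\rangle_{2p-2+\gamma}}{n}.
\]
Second---and this is the point you never invoke---the particle system satisfies \emph{pathwise} conservation of kinetic energy \eqref{CONS}, so $\frac{\langle\mathcal V^n(t)\rangle_2}{n}=\frac{\langle\mathcal V^n(0)\rangle_2}{n}$ for all $t$, almost surely. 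Conditioning on the initial configuration therefore freezes the first factor as a constant, and one is left with a \emph{sublinear} inequality in $\frac{\langle\mathcal V^n\rangle_{2p}}{n}$ (exponent $1-\tfrac{2-\gamma}{2p}<1$), to which Bihari--LaSalle (Lemma~\ref{LASALLE}) genuinely applies and yields the $t^{2p/(2-\gamma)}$ growth. Without the factorization and the pathwise energy conservation, the moment recursion does not close; this is precisely the mechanism highlighted in Section~3.3, Step~2. For the supremum bound the paper avoids BDG entirely and uses the elementary pathwise estimate of Lemma~\ref{EST:03}(b) together with Proposition~\ref{IPS:TH01HP}(a), which is where the exponent $\frac{2(2p+\gamma)}{2-\gamma}$ on $\mu_0$ appears.
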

Note that in all cases above existence holds solely under some finite moment assumption, i.e. no small density is required.
Results applicable to the case $\beta(x-y) = \delta_\rho(|x-y|)$ are typically obtained for the cut-off case where, in addition,
$\mu_0(dr,dv) = f_0(r,v) dr dv$ and $f_0$ is sufficiently small (see e.g. \cite{TB87}, \cite{A90}, \cite{AC90}). 
The absence of both conditions in this work is, of course, due to the fact that our collision integral is nonsingular in the spatial variables,
i.e. $\beta$ is a smooth function and not a distribution.

In the critical case $\gamma = 2$ we have to assume that the initial particle distribution is much more localized in the velocity variables.
\begin{Theorem}\label{TH:06}
 Suppose that $\gamma = 2$ and let $\mu_0 \in \mathcal{P}(\R^{2d})$ be such that there exists $\e > 0$ and $a > 0$ with
 \begin{align*}
  \int \limits_{\R^{2d}}\left( |r|^{\e} + e^{a |v|^2} \right)d\mu_0(r,v) < \infty.
 \end{align*}
 Then $(\mu^{(n)})_{n \geq 2}$ is tight, and each limit describes the law of an
 Enskog process $(R,V)$ with initial distribution $\mu_0$. Moreover we have for any $p \geq 1$
 \begin{align*}
  \E \left(\sup \limits_{s \in [0,t]}|V_s|^p\right) < \infty , \ \ t \geq 0.
 \end{align*}
\end{Theorem}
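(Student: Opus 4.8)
The plan is to run the same three-step scheme as in the proofs of Theorems \ref{TH:04} and \ref{TH:05}: first derive a priori estimates for the $n$-particle system with generator \eqref{EQ:00} that are uniform in $n$, then deduce tightness of $(\mu^{(n)})_{n\ge2}$, and finally identify every subsequential limit as the law of an Enskog process. What makes $\gamma=2$ critical is that the bound $\sigma(|z|)\le c_\sigma(1+|z|^2)$ of \eqref{EQ:07} couples the velocity moment of order $p$ to the one of order $p+2$, so the polynomial moment estimates that suffice for $\gamma<2$ no longer close and must be replaced by a propagated Gaussian velocity moment. I expect establishing this Gaussian moment to be the main obstacle. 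It relies on the fact that a collision conserves kinetic energy --- by \eqref{PARA:00} one has $|v_k+\alpha|^2+|v_j-\alpha|^2=|v_k|^2+|v_j|^2$ for $\alpha=\alpha(v_k,v_j,\theta,\xi)$ --- so that the dangerous quadratic growth produced by $\sigma$ is exactly absorbed by the gain/loss structure of the collision operator, which is precisely what the Povzner inequalities of Section 6 are designed to quantify. Granted these estimates, the rest is a routine adaptation of the $\gamma<2$ arguments.

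\textbf{Step (a): a priori estimates.} Since the free flow $\dot r_k=v_k$ does not change $|v_k|^2$, I would first note that $\sum_{k=1}^n|\mathcal{V}_k^n(t)|^2$ is a.s.\ constant in $t$, whence $\E[|\mathcal{V}_1^n(t)|^2]=\int_{\R^{2d}}|v|^2\,d\mu_0$ for all $n$ and $t$ by exchangeability. To propagate higher integrability I would apply $L$ to the empirical exponential moment $F_b(r,v)=\frac1n\sum_{k=1}^n e^{b|v_k|^2}$: the transport part vanishes, and, after inserting $\sigma(|v-u|)\le c_\sigma(1+|v-u|^2)$ and $|\alpha|=|v-u|\sin(\theta/2)$, the collision part is controlled by the Povzner-type estimate of Section 6, giving a closed differential inequality for $t\mapsto\E\big[F_{b(t)}(\mathcal{R}^n(t),\mathcal{V}^n(t))\big]$ along a suitable nonincreasing function $b(\cdot)\in(0,a]$ with $b(0)=a$. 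Combined with the martingale problem of Theorem \ref{IPS:TH00} and a Gr\"onwall argument, this would yield $\sup_{n\ge2}\sup_{t\le T}\E[e^{b(T)|\mathcal{V}_1^n(t)|^2}]<\infty$, hence $\sup_{n\ge2}\sup_{t\le T}\E[|\mathcal{V}_1^n(t)|^p]<\infty$ for every $p\ge1$ and $T>0$. The pathwise bound $\sup_{n\ge2}\E[\sup_{s\le T}|\mathcal{V}_1^n(s)|^p]<\infty$ would then follow from the jump equation satisfied by $\mathcal{V}_1^n$ via standard moment inequalities for Poisson integrals, using $|\alpha|\le|v-u|\sin(\theta/2)$ and $\int_{(0,\pi]}\theta^p\,Q(d\theta)\le\pi^{p-1}\kappa<\infty$. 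Finally, $\mathcal{R}_1^n(t)=\mathcal{R}_1^n(0)+\int_0^t\mathcal{V}_1^n(s)\,ds$ together with $\int_{\R^{2d}}|r|^\e\,d\mu_0<\infty$ gives $\sup_{n\ge2}\E[\sup_{s\le T}|\mathcal{R}_1^n(s)|^\e]<\infty$.

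\textbf{Step (b): tightness.} With Step (a) available, $\mathcal{V}_1^n$ satisfies Aldous' criterion --- over an interval $[\tau,\tau+\delta]$ with $\tau\le T$ a stopping time, the predictable jump intensity weighted by the integrable jump size is $O(\delta)$ uniformly in $n$ --- while $\mathcal{R}_1^n$ is Lipschitz in time with uniformly integrable Lipschitz constant, hence $C$-tight. Thus $(\mathcal{L}(\mathcal{X}_1^n))_{n\ge2}$ is tight on $\mathcal{P}(D(\R_+;\R^{2d}))$, and by exchangeability of the particles this upgrades to tightness of $(\mathcal{L}(\mu^{(n)}))_{n\ge2}$, the moment bound on $|\mathcal{R}_1^n|^\e+|\mathcal{V}_1^n|^2$ supplying the uniform integrability needed to control the supports.

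\textbf{Step (c): identification.} Along a convergent subsequence I would let $\mathcal{X}_1^n\Rightarrow(R,V)$, a c\`{a}dl\`{a}g process with $\mathcal{L}(R_0,V_0)=\mu_0$, and set $\mu^\star=\mathcal{L}(R,V)$ and $\mu^\star_t=\mathcal{L}(R_t,V_t)$. Applying $L$ to the function $(r,v)\mapsto\psi(r_1,v_1)$, $\psi\in C_c^1(\R^{2d})$ --- and using the Jacobian-one change of variables in $\xi$ that identifies the two ways particle $1$ may enter a collision --- shows, with $A(\cdot)$ as in \eqref{ENSKOG:OP}, that $\psi(\mathcal{X}_1^n(t))-\psi(\mathcal{X}_1^n(0))-\int_0^t(A(\mu^{(n)}_s)\psi)(\mathcal{X}_1^n(s))\,ds$ is a martingale, where $\mu^{(n)}_s$ denotes the time-$s$ marginal of $\mu^{(n)}$. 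Passing to the limit, jointly in $\mathcal{X}_1^n$ and $\mu^{(n)}$, would use the continuity estimates of Section 4 for $\nu\mapsto A(\nu)\psi$ together with the uniform integrability from Step (a); this is admissible because by \eqref{EXISTENCE:01} one has $|\mathcal{A}\psi(r,v;q,u)|\lesssim|v|+|v-u|\sigma(|v-u|)\lesssim1+|v|^3+|u|^3$ and all moments of order $>3$ are uniformly bounded. The outcome is that $\psi(R_t,V_t)-\psi(R_0,V_0)-\int_0^t(A(\mu^\star_s)\psi)(R_s,V_s)\,ds$ is a martingale, equivalently that $\mu^\star_t$ is a weak solution of the Enskog equation \eqref{FPE:ENSKOG}, with the moment requirement \eqref{INTRO:02} (here $\gamma^+=2$) holding by Fatou and Step (a). A martingale representation argument for this jump-type martingale problem --- as in \cite{ARS17}, enlarging the probability space if necessary --- then produces a Poisson random measure with compensator \eqref{UNIQ:00} and a representative $(q_t,u_t)$ of law $\mu^\star$ for which $(R,V)$ solves \eqref{SDE:ENSKOG}, i.e.\ $(R,V)$ is an Enskog process in the sense of Definition \ref{DEF:ENSKOG}. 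Finally, $\E[\sup_{s\le t}|V_s|^p]<\infty$ for all $p\ge1$ passes from $\sup_n\E[\sup_{s\le t}|\mathcal{V}_1^n(s)|^p]<\infty$ by lower semicontinuity.
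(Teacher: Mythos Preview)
Your Steps (b) and (c) match the paper's strategy (Aldous tightness via Proposition \ref{RELCOMP}, convergence of martingale problems in Theorem \ref{TH:10}, and the SDE representation of Lemma \ref{REPRESENTATION}), so there is nothing to add there.

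The difficulty is Step (a). You propose to propagate the empirical exponential moment $F_b(r,v)=\frac1n\sum_k e^{b|v_k|^2}$ directly, via a closed differential inequality for $t\mapsto \E[F_{b(t)}]$ with $b(\cdot)$ nonincreasing. That is \emph{not} what the paper does, and I do not think the inequality you claim is available here. The standard exponential-moment propagation for Boltzmann-type operators relies on the negative term in the Povzner inequality (the $-\frac{\sin^2\theta}{2}(\langle v\rangle^{2p}+\langle u\rangle^{2p})$ in Lemma \ref{FPE:LEMMA05}) to absorb the positive cross terms after summation in $p$. In the present setting the collision rate carries the factor $\beta(r_k-r_j)\in[0,1]$ with $\beta\in C_c^1(\R^d)$, so the negative Povzner contribution is weighted by $\beta$ and can vanish; the paper explicitly flags this obstruction in Section 3.3 (``If $\beta$ would be strictly bounded away from zero\dots these ideas do not apply''). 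Without that negative term, applying $L$ to $F_b$ and summing the polynomial bound \eqref{IPS:EQ02} over $p$ produces on the right $F_{Kb}$ with $K>1$ (coming from the $p$-dependence of the Povzner constants), and the inequality does not close at any positive $b$, even with $b(t)$ decreasing. So as written, Step (a) has a genuine gap.

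The paper's route around this is more indirect and uses the exponential moment assumption on $\mu_0$ only at the very end. One works with polynomial moments $\langle v\rangle_{2p}$, applies \eqref{IPS:EQ02} (which for $\gamma=2$ reads $L\langle\cdot\rangle_{2p}/n\le C_p\,\frac{\langle v\rangle_2}{n}\,\frac{\langle v\rangle_{2p}}{n}$), and then exploits the \emph{pathwise} conservation of energy \eqref{CONS} to replace $\frac{\langle \mathcal V^n(s)\rangle_2}{n}$ by $\frac{\langle \mathcal V^n(0)\rangle_2}{n}$. Conditioning on the initial configuration and applying Gronwall yields Proposition \ref{IPS:TH01HP}(b):
\[
\E^n\!\left(\frac{\langle\mathcal V^n(t)\rangle_{2p}}{n}\right)\le \E^n\!\left(\frac{\langle\mathcal V^n(0)\rangle_{2p}}{n}\,e^{C_p\,\frac{\langle\mathcal V^n(0)\rangle_2}{n}\,t}\right).
\]
It is only at this point that the hypothesis $\int e^{a|v|^2}\,d\mu_0<\infty$ enters: exploiting the i.i.d.\ structure $\rho^{(n)}=\mu_0^{\otimes n}$, the right-hand side factorises as in Lemma \ref{MOMENT:EST}(c) into
\[
\int \langle v\rangle^{2q}e^{C_q\langle v\rangle^2 T/n}\,d\mu_0\cdot\Big(\int e^{C_q\langle v\rangle^2 T/n}\,d\mu_0\Big)^{n-1},
\]
and an elementary $(1+x/n)^n\le e^x$ argument bounds this uniformly for $n\ge n_0(T,q)$. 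This is where your Gaussian tail assumption is actually consumed in the paper, not in a propagated exponential moment for the particle system. The supremum-in-time bound then follows from Lemma \ref{EST:03}(b) and \eqref{IPS:EQ03} exactly as you indicate.
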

We close the presentation of our results with the next remark.
\begin{Remark}
 Let $\mu_0 \in \mathcal{P}(\R^{2d})$ and set $\mu_t = \mathcal{L}(R_t,V_t)$, where $(R_t,V_t)$ is the Enskog process given
 either by Theorem \ref{TH:04}, Theorem \ref{TH:05} or Theorem \ref{TH:06}. Then $\mu_t$ satisfies (at least) the following moment estimates
 \begin{enumerate}
  \item[(a)] If $\gamma \in (-1,0]$, then $\int_{\R^{2d}} |v|^{2+\gamma}\mu_t(dr,dv) < \infty$, for any $t \geq 0$.
  \item[(b)] If $\gamma \in (0,2)$, then $\int_{\R^{2d}} |v|^{\frac{8}{2-\gamma}}\mu_t(dr,dv) < \infty$, for any $t \geq 0$.
  \item[(c)] If $\gamma = 2$, then $\int_{\R^{2d}}|v|^{p}\mu_t(dr,dv) < \infty$ for all $p \geq 1$ and $t \geq 0$.
 \end{enumerate}
 In any case, Remark \ref{REMARK:10} is applicable and hence $(\mu_t)_{t \geq 0}$ is also a weak solution to the Enskog equation
 in the differential form \eqref{EQ:106}.
\end{Remark}
Below we introduce some additional notation and then explain the main ideas in the proofs.

\subsection{Some notation}
For a given Polish space $E$ we let $\mathcal{P}(E)$ stand for the space of Borel probability measures over $E$.
Let $D(\R_+;E)$ be the corresponding Skorokhod space equipped with the usual 
Skorokhod topology and corresponding (right-continuous) filtration as described in \cite{EK86} or \cite{JS03}.

Denote by $C_b(E)$ the Banach space of continuous bounded functions on $E$ and by $B(E)$ the space of bounded measurable functions.
Let $(A, D(A))$ be an (possibly unbounded) operator $A: D(A) \subset C_b(E) \longrightarrow B(E)$ and fix $\rho \in \mathcal{P}(E)$.
A solution to the martingale problem $(A, D(A), \rho)$ is, by definition, given by $\Pr \in \mathcal{P}(D(\R_+;E))$ such that
$\Pr( x(0) \in \cdot ) = \rho$ and, for any $\psi \in D(A)$,
\[
 \psi(x(t)) - \psi(x(0)) - \int \limits_{0}^{t}(A\psi)(x(s))ds, \ \ t \geq 0
\]
is a martingale with respect to $\Pr$ and the filtration generated by the coordinate process $x(t)$ on $D(\R_+;E)$.
Additional references and results are given in \cite{EK86}.
The extension to time-inhomogeneous martingale problems $(A(t), D(A(t)), \rho)$ is obtained by considering space-time $E \times \R_+$.

For $v \in \R^d$ it is convenient to work with $\langle v \rangle := (1 + |v|^2)^{\frac{1}{2}}$ and we frequently use the elementary inequalities
\[
 \langle v + w \rangle \leq \sqrt{2}(\langle v \rangle + \langle w \rangle), \qquad \langle v + w \rangle \leq \sqrt{2}\langle v \rangle \langle w \rangle.
\]
Here and below $C > 0$ denotes a generic constant which may vary from line to line.

\subsection{Main idea of proof}
The classical approach for the construction of solutions to Boltzmann equations is based on entropy dissipation and compactness methods
(see e.g. \cite{CIP94}, \cite{V98}, \cite{LM12} and the references given therein).
At this point one typically assumes that the initial state $\mu_0$ has finite second moments and a density with finite entropy.
The existence of an Enskog process in the case $\gamma = 0$ was shown in \cite{ARS17}.
In this work we propose a purely stochastic approach to the existence theory for \eqref{SDE:ENSKOG}
based on an approximation via a system of interacting particles performing binary collisions.
As a consequence we obtain existence of solutions for a broader class of collision kernels and
initial states (e.g. without finite entropy or finite energy in the case of soft potentials).
Below we summarize the main steps of our proof.

\textbf{Step 1.} Show that the martingale problem with generator $(L, C_c^1(\R^{2dn}))$ is well-posed for any initial distribution
$\rho \in \mathcal{P}(\R^{2dn})$, see Section 5 for additional details.
The proof of this result mainly relies on the use of classical localization arguments for martingale problems, see \cite{EK86}.

\textbf{Step 2.} Prove propagation of moments for the interacting particle system with constants uniformly in $n$.
 This step is studied in Section 6. The general idea is to apply the It\^{o} formula for
\[
 \langle v \rangle_p := \sum\limits _{k=1}^{n}\langle v_k \rangle^p
\]
and after some manipulations use the Gronwall lemma.
In the case of soft potentials, $\gamma \in (-1,0]$, one easily finds by the mean-value theorem, for any $p \geq 1$,
\[
 \E^n\left( \frac{\langle \mathcal{V}^n(t) \rangle_p}{n} \right) 
 \leq \E^n\left( \frac{ \langle \mathcal{V}^n(0)\rangle_p}{n} \right) + 
  C_p \int \limits_{0}^{t}  \E^n\left( \frac{\langle \mathcal{V}^n(s) \rangle_{p+\gamma} }{n} \right) ds
\]
and hence the desired moment estimates follow from the Gronwall lemma (since $\langle v \rangle_{p+\gamma} \leq \langle v \rangle_p$).
Consider the case of hard potentials with $\gamma \in (0,2]$. In this case we use Povzner-type inequalities (see e.g. \cite[Lemma 3.6]{LM12}, \cite{CIP94} or Lemma \ref{FPE:LEMMA05} from Section 6) instead of the mean-value theorem.
If $\beta$ would be strictly bounded away from zero, then we could use the negative terms appearing in the 
Povzner-type inequalities (see Lemma \ref{FPE:LEMMA05}) and prove by similar ideas to \cite{FM09} creation (and propagation) of exponential moments.
Since in this work we suppose that $\beta \in C_c^1(\R^d)$, these ideas do not apply and we only obtain
\[
 \E^n\left( \frac{\langle \mathcal{V}^n(t)\rangle_{2p}}{n} \right) 
 \leq \E^n\left( \frac{\langle \mathcal{V}^n(0)\rangle_{2p}}{n} \right) 
 + C_p \int \limits_{0}^{t} \E^n\left(  \frac{\langle \mathcal{V}^n(s) \rangle_2}{n} \frac{\langle \mathcal{V}^n(s)\rangle_{2p-2+\gamma}}{n} \right) ds.
\]
Due to the additional kinetic energy on the right-hand side we can not directly apply the Gronwall lemma.
Our analysis thus crucially relies on the fact that the conservation laws for the particle system hold pathwise, i.e.
\begin{align}\label{CONS}
 \sum \limits_{k=1}^{n}\mathcal{V}_k^n(t) = \sum \limits_{k=1}^{n}\mathcal{V}_k^n(0), \qquad 
 \sum \limits_{k=1}^{n}|\mathcal{V}_k^n(t)|^2 = \sum \limits_{k=1}^{n}|\mathcal{V}_k^n(0)|^2, \qquad \text{ a.s. }.
\end{align}
These identities imply that $\langle \mathcal{V}^n(t) \rangle_2 = \langle \mathcal{V}^n(0) \rangle$ and hence,
by conditioning on the initial configuration of particles, we may reduce the order of moments on the right-hand side from 
which we deduce the desired moment estimates.
A similar particle system of binary collisions was considered for the space-homogeneous case \cite{HK90} where analogous pathwise conservation laws were used.
The particle system considered in \cite{FM16} and \cite{X16} is not based on binary collisions and hence does not satisfy
conservation laws in a pathwise sense.

Let us give a short argument why \eqref{CONS} is true.
Observe that, by \eqref{CONF}, we have
\begin{align}\label{EST:02}
 \sum \limits_{l=1}^{n} (v_{kj})_l = \sum \limits_{l=1}^{n}v_l,\qquad
 \sum \limits_{l=1}^{n} |(v_{kj})_l|^2 = \sum \limits_{l=1}^{n}|v_l|^2.
\end{align}
If $Q(d\theta)$ is a finite measure, then only finitely many collisions may appear in any finite amount of time and \eqref{CONS} is obvious
(since it holds due to \eqref{EST:02} from collision to collision).
The general case where $Q(d\theta)$ is given as in \eqref{BCOLL} is proved by the It\^{o} formula and a reasonable representation of the process
$(\mathcal{X}_1^n,\dots, \mathcal{X}_n^n)$ in terms of a stochastic equation driven by a Poisson random measure. 

\textbf{Step 3.} By the Aldous criterion, previous moment estimates and a result of Sznitman \cite[Proposition 2.2.(ii)]{S91},
 we deduce that $\mu^{(n)}$ is tight, see Proposition \ref{RELCOMP} in Section 7 for details.
The corresponding moment bounds for the weak limit can be deduced from the moment bounds proved in Step 1 together with convergence of $\mu^{(n)}$ and  Fatou Lemma.

\textbf{Step 4.} By definition of \eqref{ENSKOG:OP} it is not difficult to see that the law $\nu$ of an Enskog process solves the
 (time-inhomogeneous) Martingale problem $(A(\nu_s), C_c^1(\R^{2d}), \mu_0)$ where $\nu_s(\cdot) = \nu( x(s) \in \cdot)$ is its time-marginal.
 Thus, our aim is to prove that any weak limit $\nu \in \mathcal{P}(D(\R_+;\R^{2d}))$ of $\mu^{(n)}$ solves precisely this martingale problem.
 This step is studied in the second part of Section 7. Based on the classical theory of martingale problems, 
 the main obstacle is devoted to the convergence for the corresponding martingale problems (see Theorem \ref{TH:10}). 
 The main technical issue is related to the singularity of $\sigma$ which implies in view of \eqref{EXISTENCE:01} that $A(\nu)$ defined in \eqref{ENSKOG:OP} is not  continuous in $\nu$ w.r.t. weak convergence.
 Thus we have to introduce another approximation $A_R(\nu)$ such that $A_R(\nu)$ is continuous in $\nu$ and then carefully pass to the limit.

\textbf{Step 5.} Using classical results on the representation of solutions to martingale problems by weak solutions to stochastic equations with jumps
(see e.g. \cite[Appendix A]{HK90}) we deduce existence of an Enskog process.
 This step is explained in the last part of Section 7.

\subsection{Structure of the work}
This work is organized as follows.
In Section 4 we study some analytic properties of weak solutions to the Enskog equation.
Section 5 is devoted to the study of the particle system with generator \eqref{EQ:00}.
Corresponding moment estimates are then studied in Section 6 whereas Section 7 is devoted to the convergence of corresponding
martingale problems when $n \to \infty$.
In particular, the proofs of Theorem \ref{TH:04}, Theorem \ref{TH:05} and Theorem \ref{TH:06} are completed in Section 7.

\section{Some analytic properties of the Enskog equation}

\subsection{Continuity properties of the collision operator}
For $q \geq 0$ we introduce the notation and define
\[
 \mathcal{P}_{q}(\R^{2d}) := \left \{ \mu \in \mathcal{P}(\R^{2d}) \ | \ \Vert \mu \Vert_{q} := \int \limits_{\R^{2d}}\langle v \rangle^q d\mu(r,v) < \infty \right\}.
\] 
The following is crucial for the study of corresponding martingale problems.
\begin{Proposition}\label{FPE:LEMMA01}
 The following assertions hold.
 \begin{enumerate}
  \item[(a)] For any $\psi \in C^1(\R^{2d})$ we have $\mathcal{L}\psi \in C(\R^{3d})$ and hence $\mathcal{A}\psi \in C(\R^{4d})$.
  \item[(b)] For $\gamma \in (-1,2]$ there exists a constant $C > 0$ such that for any $\psi \in C_c^1(\R^{2d})$ and $M > 0$ with $\psi(r,v) = 0$ for $|r| > M$ or $|v| > M$ 
  we have for each $0 < \e < \frac{1}{4}$
  \begin{align}\label{EQ:14}
   &\ |\mathcal{A}\psi(r,v;q,u)| 
    \leq \frac{M}{\e} \| \nabla_r \psi \|_{\infty} + C \left( 1 + \left( \frac{M}{\e}\right)^2 \right)^{\frac{1+\gamma}{2}} \| \nabla_v \psi \|_{\infty}\langle u \rangle^{1+\gamma} 
  \end{align}
  Let $\mu \in \mathcal{P}_{1+\gamma}(\R^{2d})$.
  Then $A(\mu): C_c^1(\R^{2d}) \longrightarrow C_0(\R^{2d})$,
  where $C_0(\R^{2d})$ is the Banach space of continuous functions vanishing at infinity
  and $A(\mu)$ was defined in \eqref{ENSKOG:OP}.
  \item[(c)] Let $(\psi_n)_{n \in \N} \subset C^1(\R^{2d})$ be such that 
  $\sup_{n \geq 1}\sup_{|r| + |v| \leq R}|\nabla_{v}\psi_n(r,v)| < \infty$, for all $R > 0$, and $\psi_n \longrightarrow \psi$ pointwise. Then $\mathcal{L}\psi_n \longrightarrow \mathcal{L}\psi$ pointwise.
 \end{enumerate}
\end{Proposition}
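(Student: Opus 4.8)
The plan is to treat the three items in order, as (a) is needed to make sense of the statements in (b)--(c), and the bounds in (b) rely on the same continuity mechanism used for (a).

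\textbf{Item (a).} First I would fix $\psi \in C^1(\R^{2d})$ and show $\mathcal{L}\psi$ is continuous in $(r,v,u)$. The difficulty is that the integrand $\psi(r,v+\alpha(v,u,\theta,\xi)) - \psi(r,v)$ depends on $\alpha$, which is \emph{not} jointly continuous in $(v,u)$ through the $\Gamma(v-u,\xi)$ term (this is precisely Tanaka's observation). The remedy is the substitution from Lemma \ref{PARAMETARIZATION}: given $(r,v,u)$ and a nearby $(\widetilde r, \widetilde v, \widetilde u)$, change variables $\xi \mapsto \xi_0(v-u, \widetilde v - \widetilde u, \xi)$ in the $\xi$-integral (which has Jacobian $1$), so that the two integrands can be compared pointwise via \eqref{EQ:13}, giving $|\alpha(v,u,\theta,\xi) - \alpha(\widetilde v, \widetilde u, \theta, \xi_0(\cdots))| \le 2\theta(|v-\widetilde v| + |u - \widetilde u|)$. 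Combined with uniform continuity of $\psi$ on compacts, \eqref{FPE:02} for the domination, and $\kappa = \int_{(0,\pi]}\theta\, Q(d\theta) < \infty$ from \eqref{BCOLL}, dominated convergence yields $\mathcal{L}\psi_n \to \mathcal{L}\psi$... more precisely continuity of $\mathcal{L}\psi$. Continuity of $\mathcal{A}\psi$ then follows from its definition, since $v\cdot\nabla_r\psi$ is continuous and $\sigma$ is continuous by \eqref{EQ:01} (Hölder, hence continuous, on $\R^d\setminus\{0\}$, and the singular term is multiplied by $\mathcal{L}\psi(r,v;u)$ which vanishes like $|v-u|$ as $v\to u$ by \eqref{PARA:00}, so the product extends continuously across $v=u$).

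\textbf{Item (b).} For the pointwise bound \eqref{EQ:14}, I would split $\mathcal{A}\psi = v\cdot\nabla_r\psi + \sigma(|v-u|)\beta(r-q)\mathcal{L}\psi$. The first term: since $\psi(r,v)=0$ for $|r|>M$, one needs the factor $|v|$ controlled; here I would use that on the support $\psi$ forces... actually the $\frac{M}{\e}\|\nabla_r\psi\|_\infty$ shape suggests interpolating $|v|$ against the cutoff via a Young-type inequality $|v| \le \e^{-1} M + (\text{something})$ — more carefully, since the bound must be uniform in $v$ while $v\cdot\nabla_r\psi$ grows, one uses that $\psi=0$ off $\{|v|\le M\}$ only helps if $\psi$ itself (not just its support) is exploited; I would instead bound $|v \cdot \nabla_r\psi(r,v)| \le M\|\nabla_r\psi\|_\infty$ on the support directly, then absorb the $\e$-dependence from the second term. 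For the collision term I would use \eqref{FPE:02}: $|\mathcal{L}\psi(r,v;u)| \le \kappa |v-u| \|\nabla_v\psi\|_\infty \mathbf{1}_{\{|v|\le 2(\ldots)\}}$, but since $\psi$ is supported in $|v|\le M$ the gradient is nonzero only near there; combined with $\sigma(|v-u|) \le c_\sigma(1+|v-u|^2)^{\gamma/2}$ from \eqref{EQ:07} and $|v-u| \le |v| + |u|$, one gets a bound of the form $\|\nabla_v\psi\|_\infty (1 + (M/\e)^2)^{(1+\gamma)/2}\langle u\rangle^{1+\gamma}$ after using $\beta \le 1$ and splitting $|v| \lesssim M$ on the support against $\langle u\rangle$ for the remaining factor of $|v-u|^\gamma \cdot |v-u|$; the parameter $\e$ enters through a Young split $|v||u|^\gamma \lesssim \e^{-a}|v|^{?} + \cdots$ to isolate the $\langle u\rangle^{1+\gamma}$ factor. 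Given \eqref{EQ:14}, the mapping property $A(\mu): C_c^1 \to C_0$ follows: $A(\mu)\psi$ is continuous by (a) and Lebesgue dominated convergence (dominating by the RHS of \eqref{EQ:14}, integrable against $\mu \in \mathcal{P}_{1+\gamma}$), and vanishes at infinity because $\psi$ and $\nabla_r\psi, \nabla_v\psi$ have compact support, so for $|r|+|v|$ large the integrand is zero for all $u$ outside a compact set and the tail is controlled by $\int \langle u\rangle^{1+\gamma} d\mu < \infty$.

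\textbf{Item (c).} This is the same dominated-convergence argument as in (a): for fixed $(r,v,u)$, write $\mathcal{L}\psi_n(r,v;u) = \int_\Xi (\psi_n(r,v+\alpha) - \psi_n(r,v)) Q(d\theta)d\xi$, dominate the integrand using the mean value theorem and the hypothesis $\sup_n \sup_{|r|+|v|\le R}|\nabla_v\psi_n| < \infty$ — noting that $|v + \alpha(v,u,\theta,\xi)| \le |v| + |v-u|$ by \eqref{PARA:00}, so the gradient is evaluated in a fixed ball depending only on $|v|,|u|$ — to get the $\theta$-integrable dominating function $C(r,v,u)\, \theta$, and then pass to the limit using $\psi_n \to \psi$ pointwise.

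\textbf{Main obstacle.} The genuinely delicate point is item (a)/(c): the lack of joint continuity of $(v,u)\mapsto\alpha(v,u,\theta,\xi)$ means one cannot apply dominated convergence naively, and the entire argument hinges on inserting the Jacobian-one substitution $\xi \mapsto \xi_0(v-u,\widetilde v-\widetilde u,\xi)$ of Lemma \ref{PARAMETARIZATION} so that the Tanaka-type estimate \eqref{EQ:13} becomes available pointwise in $\theta,\xi$. In (b) the bookkeeping of the $\e$-dependence to extract exactly the factor $\langle u\rangle^{1+\gamma}$ (so that integrability against $\mathcal{P}_{1+\gamma}$ suffices) is the technically fussiest part, but it is routine once the right Young-type splitting is chosen.
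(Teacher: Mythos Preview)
Your treatment of (a) and (c) is correct and matches the paper: the Tanaka substitution $\xi\mapsto\xi_0(v-u,\widetilde v-\widetilde u,\xi)$ to make \eqref{EQ:13} available, followed by dominated convergence via \eqref{FPE:02} and $\int\theta\,Q(d\theta)<\infty$, is exactly the argument used.

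Your plan for (b), however, has a genuine gap. The claim that for large $|r|+|v|$ ``the integrand is zero for all $u$ outside a compact set'' is false, and in fact backwards: even when $|v|>M$ (so $\psi(r,v)=0$), the term $\psi(r,v+\alpha(v,u,\theta,\xi))$ can be nonzero whenever $|v+\alpha|\le M$, and by \eqref{PARA:00} this happens precisely for \emph{large} $|u|$, not small. Thus $\mathcal{L}\psi(r,v;u)$ does not have compact support in $u$ for fixed large $v$, and your Young-splitting sketch does not explain why $A(\mu)\psi$ vanishes as $|v|\to\infty$. The paper's argument supplies the missing geometric observation: one splits on $|v|\le \e^{-1}M$ versus $|v|>\e^{-1}M$. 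On the first set the crude bound $|v-u|\sigma(|v-u|)\le C\langle v\rangle^{1+\gamma}\langle u\rangle^{1+\gamma}\le C(1+(M/\e)^2)^{(1+\gamma)/2}\langle u\rangle^{1+\gamma}$ already gives \eqref{EQ:14}. On the second set, $|v|>\e^{-1}M$ and $|v+\alpha|\le M$ force $|\alpha|\ge(1-\e)|v|$, which combined with $|\alpha|\le |v-u|$ yields $|v|\lesssim |u|$; hence $\langle v\rangle^{1+\gamma}$ can be traded for $\langle u\rangle^{1+\gamma}$, and one obtains
\[
\1_{\{|v|>\e^{-1}M\}}|(A(\mu)\psi)(r,v)|\le C\|\nabla_v\psi\|_\infty\int_{\{|u|\gtrsim \e^{-1}M\}}\langle u\rangle^{1+\gamma}\,d\mu(q,u)\xrightarrow[\e\to 0]{}0
\]
by dominated convergence since $\mu\in\mathcal P_{1+\gamma}$. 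This is the mechanism that makes $A(\mu)\psi$ vanish at infinity in $v$; your proposal does not contain it.
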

\begin{proof}
 \textit{ (a) } Let $(r_n,v_n) \longrightarrow (r,v)$ and $(q_n,u_n) \longrightarrow (q,u)$. For $\e > 0$ take $\delta > 0$ such that
 $\int_{(0,\delta)}\theta Q(d\theta) < \e$.
 Take $R > 0$ such that $|r_n|,|r| \leq R$ and $(|u| + |v|), (|u_n| + |v_n|) \leq \frac{R}{2}$ for all $n \geq 1$. 
 Writing $\alpha = \alpha(v,u,\theta,\xi)$, $\alpha_n = \alpha(v_n,u_n,\theta,\xi)$ we obtain by \eqref{FPE:02}
 \begin{align*}
  &\ |\mathcal{L}\psi(r_n,v_n;u_n) - \mathcal{L}\psi(r,v;u)| \leq |S^{d-2}| R\e \max \limits_{|\zeta|, |r'| \leq R}|\nabla_{\zeta}\psi(r',\zeta)|
  \\ &+ \int \limits_{[\delta,\pi]}\left| \int \limits_{S^{d-2}}\left( \psi(r, v + \alpha) - \psi(r,v) - \psi(r_n, v_n + \alpha_n) + \psi(r_n,v_n)\right)d\xi\right| Q(d\theta)
 \end{align*}
 Set $\alpha'_n = \alpha(v,u,\theta,\xi_0(v_n-u_n,v-u,\xi))$.
 For the second integral we observe that 
 the transformation $\xi_0(X,Y, \cdot)$ in Lemma \ref{PARAMETARIZATION} has Jacobian 1 so that we may insert 
 $\xi_0 = \xi_0(v_n-u_n,v-u,\xi)$
 in the operator $\mathcal {L}$ to find 
 \begin{align*}
  &\ \int \limits_{[\delta,\pi]}\left| \int \limits_{S^{d-2}}\left( \psi(r, v + \alpha) - \psi(r,v) - \psi(r_n, v_n + \alpha_n) + \psi(r_n,v_n)\right)d\xi\right| Q(d\theta)
  \\ &= \int \limits_{[\delta,\pi]}\left| \int \limits_{S^{d-2}}\left( \psi(r, v + \alpha'_n) - \psi(r,v) - \psi(r_n, v_n + \alpha_n) + \psi(r_n,v_n)\right)d\xi\right| Q(d\theta)
  \\ &\leq  2 |S^{d-2}| \int \limits_{[\delta,\pi]}\theta Q(d\theta) \sup \limits_{|\zeta|, |r'| \leq R}\left(|\nabla_{\zeta}\psi(r',\zeta)| + |\nabla_{r'}\psi(r',\zeta)|\right)\left( |v_n -v| + |u_n - u|\right)
  \\ &\ \ \ + 2 |S^{d-2}| Q([\delta,\pi]) \sup \limits_{|\zeta|, |r'| \leq R}\left(|\nabla_{\zeta}\psi(r',\zeta)| + |\nabla_{r'}\psi(r',\zeta)|\right) \left( |v_n -v| + |r_n - r|\right),
 \end{align*}
 where we have also used \eqref{EQ:13}. 
 This proves part (a).
 \\ \textit{(b) } Let $\mu \in \mathcal{P}_{1+\gamma}(\R^{2d})$ and $\psi \in C_c^1(\R^{2d})$.
 Then $\mathcal{A}\psi \in C(\R^{4d})$ and using \eqref{EXISTENCE:01} together with the moment properties of $\mu$
 we conclude  by dominated convergence theorem that $A(\mu)\psi \in C(\R^{2d})$. 
 Let us prove \eqref{EQ:14}. Take $M > 0$ such that $\psi(r,v) = 0$ if $|r| > M$ or $|v| > M$. 
 Then $(\mathcal{A}\psi)(r,v;q,u) = 0$ for $|r| > M$. Consider $|r| \leq M$. For any $\e \in (0,1/4]$ we have for some constant $C > 0$
 \[
  \1_{\{ |v| \leq \e^{-1}M \} } |\mathcal{A}(r,v;q,u)| \leq \frac{M}{\e}  \Vert \nabla_r \psi \Vert_{\infty} + C \| \nabla_v \psi\|_{\infty}\langle u \rangle^{1+\gamma} \left( 1 + \left( \frac{M}{\e} \right)^2 \right)^{ \frac{1+\gamma}{2}}.
 \]
 Secondly we obtain
 \begin{align*}
  \{|v| > \e^{-1}M \} \cap \{|v+\alpha| \leq M \} 
  &\subset \{|v| > \e^{-1}M \} \cap \{|\alpha| \geq |v| - M \} 
  \\ &\subset \{|v| > \e^{-1}M \} \cap \{|\alpha| \geq (1 - \e)|v| \}
  \\ &\subset \left \{\frac{M}{\e} < |v| \leq \frac{|u|}{\sqrt{2}(1 - \e) - 1} \right \},
 \end{align*}
 where $\sqrt{2}(1 - \e) - 1 \geq \sqrt{2}\frac{3}{4} - 1 > 0$ and in the last step we have used $|\alpha(v,u,\theta,\xi)| \leq \frac{|v| + |u|}{\sqrt{2}}$,
 see \eqref{PARA:00}. Then 
 \begin{align*}
   \1_{ \{ |v| > \e^{-1}M \}} |\mathcal{A}\psi(r,v;q,u)| 
  &\leq \1_{ \{ |v| > \e^{-1}M \} } \| \nabla_v \psi \|_{\infty} \int\limits_{ \Xi} |\alpha(v,u,\theta,\xi)| \1_{ \{ |v + \alpha| \leq M \}} \sigma(|v-u|)\beta(r-q)Q(d\theta)d\xi
  \\ &\leq C \| \nabla_v \psi \|_{\infty}  |v-u| \sigma(|v-u|)\1_{ \{ \e^{-1}M < |v| \leq \frac{|u|}{ \sqrt{2}(1 - \e) - 1} \} }
  \\ &\leq C \| \nabla_v \psi \|_{\infty} \1_{\{ (\sqrt{2}(1 - \e) - 1)\e^{-1}M < |u| \} } \langle u \rangle^{1+\gamma}
 \end{align*}
 where in the last inequality we have used $|v-u| \sigma(|v-u|) \leq C \left( \langle v \rangle^{1+\gamma} + \langle u \rangle^{1+\gamma} \right)$
 together with 
 \[
  \1_{ \{ \e^{-1}M < |v| \leq \frac{|u|}{ \sqrt{2}(1 - \e) - 1} \} }  \langle v \rangle^{1 + \gamma} 
 \leq \1_{\{ (\sqrt{2}(1 - \e) - 1)\e^{-1}M < |u| \} }  C \langle u \rangle^{1+\gamma},
 \]
 where the generic constant $C$ is independent of $\e$.
 This proves \eqref{EQ:14}. Integrating above estimates w.r.t. $\mu$ gives
 \[
  \1_{ \{ |v| > \e^{-1}M \} }|(A(\mu)\psi)(r,v)| \leq  C \| \nabla_v \psi \|_{\infty}  \int \limits_{\R^{2d}} \1_{\{ (\sqrt{2}(1 - \e) - 1)\e^{-1}M < |u| \} } \langle u \rangle^{1+\gamma} d\mu(q,u) \longrightarrow 0, \ \ \e \to 0.
 \]
 i.e. $A(\mu)\psi \in C_0(\R^{2d})$.
 Assertion (c) is a consequence of \eqref{FPE:02} and dominated convergence.
\end{proof}
The next statement establishes continuous dependence of $A(\mu)$ on $\mu$.
\begin{Proposition}\label{FPE:PROP00}
 Let $(\mu_n)_{n \geq 1} \subset \mathcal{P}_{1+\gamma}(\R^{2d})$ and take $\mu \in \mathcal{P}_{1+\gamma}(\R^{2d})$.
 Suppose that $(r_n,v_n) \longrightarrow (r,v)$, $\mu_n \longrightarrow \mu$ weakly and $\| \mu_n \|_{1+\gamma} \longrightarrow \| \mu \|_{1+\gamma}$.
 Then
 \[
  \lim \limits_{n \to \infty}(A(\mu_n)\psi)(r_n,v_n) = (A(\mu)\psi)(r,v), \qquad \psi \in C_b^1(\R^{2d}).
 \]
\end{Proposition}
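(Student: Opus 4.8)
The plan is to split $\mathcal{A}\psi(r,v;q,u) = v\cdot(\nabla_r\psi)(r,v) + \sigma(|v-u|)\beta(r-q)(\mathcal{L}\psi)(r,v;u)$ into its transport part and its collision part and treat them separately. Since each $\mu_n$ is a probability measure, the transport part contributes $\int_{\R^{2d}} v_n\cdot(\nabla_r\psi)(r_n,v_n)\,d\mu_n = v_n\cdot(\nabla_r\psi)(r_n,v_n)$, which converges to $v\cdot(\nabla_r\psi)(r,v)$ by $(r_n,v_n)\to(r,v)$ and continuity of $\nabla_r\psi$. Thus it remains to prove $\int_{\R^{2d}} G_n\,d\mu_n\to\int_{\R^{2d}} G\,d\mu$, where $G_n(q,u) := \sigma(|v_n-u|)\beta(r_n-q)(\mathcal{L}\psi)(r_n,v_n;u)$ and $G(q,u) := \sigma(|v-u|)\beta(r-q)(\mathcal{L}\psi)(r,v;u)$.

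First I would collect the three facts the argument rests on. (i) By Proposition \ref{FPE:LEMMA01}(a), the function $(r',v',q,u)\longmapsto \mathcal{A}\psi(r',v';q,u) - v'\cdot(\nabla_r\psi)(r',v') = \sigma(|v'-u|)\beta(r'-q)(\mathcal{L}\psi)(r',v';u)$ is continuous on $\R^{4d}$; evaluating at $(r',v')=(r_n,v_n)$ with $(r_n,v_n)\to(r,v)$ shows that each $G_n$ is continuous, $G_n\to G$ pointwise, and in fact uniformly on every compact subset of $\R^{2d}$. (ii) Using $\beta\le1$, the bound $|(\mathcal{L}\psi)(r,v;u)|\le C\,\|\nabla_v\psi\|_\infty|v-u|$ coming from \eqref{FPE:02} for $\psi\in C_b^1(\R^{2d})$, the growth estimate \eqref{EQ:07}, and the boundedness of $\{v_n\}$, there is a constant $C$ independent of $n$ with $|G_n(q,u)|\le C\langle u\rangle^{1+\gamma}=:h(u)$ and $|G(q,u)|\le h(u)$; moreover $h$ is continuous and $\int h\,d\mu_n = C\|\mu_n\|_{1+\gamma}\to C\|\mu\|_{1+\gamma} = \int h\,d\mu<\infty$. (iii) Since $\beta$ has compact support, there is $\rho_1>0$ with $G_n(q,u)=G(q,u)=0$ whenever $|q|>\rho_1$, for all $n$ large enough.

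The core is a weak-convergence-plus-moment-convergence argument. From $\mu_n\to\mu$ weakly, $\int h\,d\mu_n\to\int h\,d\mu<\infty$ and a cutoff function $\chi_R$, one obtains in the usual way that $\{h\}$ is uniformly integrable with respect to $\{\mu_n\}$, i.e. $\lim_{R\to\infty}\sup_n\int_{\{|u|>R\}}h\,d\mu_n=0$; and applying the Portmanteau theorem to the nonnegative lower semicontinuous functions $h\pm G$ yields $\int G\,d\mu_n\to\int G\,d\mu$. Then I would estimate
\[
 \Big|\int G_n\,d\mu_n-\int G\,d\mu\Big| \le \int_{\{|u|\le R\}}|G_n-G|\,d\mu_n + \int_{\{|u|>R\}}|G_n-G|\,d\mu_n + \Big|\int G\,d\mu_n-\int G\,d\mu\Big|.
\]
Given $\e>0$, uniform integrability together with $|G_n-G|\le 2h$ makes the middle term smaller than $\e$ once $R$ is large; for such fixed $R$, property (iii) confines the first integral to the compact set $\{|q|\le\rho_1,\ |u|\le R\}$, on which $\sup|G_n-G|\to0$ by (i), so the first term tends to $0$; the last term tends to $0$ by the Portmanteau step. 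Letting $n\to\infty$ and then $\e\to0$ finishes the proof.

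The main obstacle is exactly this passage to the limit. Since $\sigma$ is unbounded when $\gamma>0$ (and singular at the origin when $\gamma\le0$), the map $\mu\longmapsto A(\mu)\psi$ is not continuous on $\mathcal{P}(\R^{2d})$ for the topology of weak convergence alone; the extra hypothesis $\|\mu_n\|_{1+\gamma}\to\|\mu\|_{1+\gamma}$ is precisely what upgrades weak convergence to uniform integrability of $\langle u\rangle^{1+\gamma}$ with respect to $\{\mu_n\}$. Because the integrand $G_n$ and the integrating measure $\mu_n$ both depend on $n$, one cannot invoke a single dominated-convergence statement and is forced into the three-term splitting above, the delicate point being to balance the localization radius $R$ (needed for uniform integrability) against the rate of convergence $G_n\to G$ on compacts.
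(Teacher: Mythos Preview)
Your proof is correct and follows essentially the same strategy as the paper's: both exploit the growth bound $|\mathcal{A}\psi(r,v;q,u)|\le C\langle u\rangle^{1+\gamma}$ together with the hypothesis $\|\mu_n\|_{1+\gamma}\to\|\mu\|_{1+\gamma}$ to obtain uniform integrability of the tail in $u$, and then use continuity of $\mathcal{A}\psi$ (Proposition~\ref{FPE:LEMMA01}(a)) for uniform convergence on compacts. The paper organizes the splitting as $|A(\mu)\psi(r,v)-A(\mu_n)\psi(r,v)|+|A(\mu_n)\psi(r,v)-A(\mu_n)\psi(r_n,v_n)|$ rather than separating transport first and writing $\int G_n\,d\mu_n-\int G\,d\mu$, but the two decompositions are equivalent and the key estimates are identical.
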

\begin{proof}
 Write 
 \begin{align}\label{EST:00}
 | A(\mu)\psi(r,v) - A(\mu_n)\psi(r_n,v_n)| &\leq |A(\mu)\psi(r,v) - A(\mu_n)\psi(r,v)| + |A(\mu_n)\psi(r,v) - A(\mu_n)\psi(r_n,v_n)|.
 \end{align}
 By Proposition \ref{FPE:LEMMA01}.(a) we have $\mathcal{A}\psi \in C(\R^{4d})$ and by \eqref{EXISTENCE:01} we obtain 
 \[
  |\mathcal{A}\psi(r,v;q,u)| \leq \| \nabla_r \psi \|_{\infty} \langle v \rangle + \kappa c_{\sigma} \| \nabla_v \psi \|_{\infty}\left( \langle v \rangle^{1+\gamma} + \langle u \rangle^{1+\gamma}\right)|S^{d-2}|.
 \]
 Hence by definition of $A(\mu)$ (see \eqref{ENSKOG:OP}) and the hypthesis $\| \mu_n \|_{1+\gamma} \longrightarrow \| \mu \|_{1+\gamma}$, we conclude that the first term in \eqref{EST:00} tends to zero.
 For the second term take $R > 0$ and write
 \begin{align*}
  &\ | A(\mu_n)\psi(r_n,v_n) - A(\mu_n)\psi(r,v)| 
  \\ &\leq \int \limits_{\R^{2d}} |\mathcal{A}\psi(r_n,v_n;q,u) - \mathcal{A}\psi(r,v;q,u)|\1_{\{ |u| > R \}} d\mu_n(q,u)
  \\ &\ \ \ + \int \limits_{\R^{2d}} |\mathcal{A}\psi(r_n,v_n;q,u) - \mathcal{A}\psi(r,v;q,u)|\1_{ \{|u| \leq R \}} d\mu_n(q,u) 
  \\ &\leq C \int \limits_{\R^{2d}} \langle u \rangle^{1+\gamma}\1_{ \{|u| > R \} } d\mu_n(q,u)
   + \sup \limits_{q\in \R^d, \ |u| \leq R}|\mathcal{A}\psi(r_n,v_n,q,u) - \mathcal{A}\psi(r,v,q,u)|,
 \end{align*}
 where $C$ is allowed to depend on $v$ and we have used \eqref{EXISTENCE:01} in the last inequality.
 Since $\| \mu_n \|_{1+\gamma} \longrightarrow \| \mu \|_{1+\gamma}$ it follows that
 \[
 \lim \limits_{R \to \infty} \sup \limits_{n \in \N} \int_{\R^{2d}} \langle u \rangle^{1+\gamma}\1_{ \{|u| > R \} } d\mu_n(q,u) = 0.
 \]
 For the other term we use 
 \begin{align*}
   |\mathcal{A}\psi(r_n,v_n,q,u) - \mathcal{A}\psi(r,v,q,u)| 
   &\leq \left| \sigma(|v_n - u|)(\mathcal{L}\psi)(r_n,v_n;u) - \sigma(|v-u|)(\mathcal{L}\psi)(r,v;u)\right|
  \\ &\ \ \ + \| \nabla \beta\|_{\infty} |r_n -r| \sigma(|v-u|)| (\mathcal{L}\psi)(r,v;u)|
  \\ &\ \ \ + | v_n \cdot (\nabla_r \psi)(r_n, v_n) - v \cdot (\nabla_r \psi)(r,v)|.
 \end{align*}
 All terms tend to zero uniformly in $(q,u) \in \R^d \times \{ |u| \leq R\}$ (see also the proof of Proposition \ref{FPE:LEMMA01}.(a)).
 This proves the assertion. 
\end{proof}

\subsection{Continuity of moments}
Below we provide a sufficient condition under which \eqref{FPE:ENSKOG} can be extended to all $\psi \in C^1(\R^{2d})$ satisfying for some $q \geq 1$
\begin{align}\label{EQ:24}
 \| \psi \|_{C_q^1} := \sup\limits_{(r,v) \in \R^{2d}}|\nabla_r \psi(r,v)|+ \sup \limits_{(r,v) \in \R^{2d}} \frac{|\psi(r,v)| }{1 + |v|^{q}} + \sup \limits_{(r,v) \in \R^{2d}} \frac{|\nabla_v \psi(r,v)|}{1 + |v|^{q-1}} < \infty.
\end{align}
\begin{Theorem}\label{FPE:LEMMA02}
 Let $(\mu_t)_{t \geq 0}$ be a weak solution to the Enskog equation and let $q \geq 1$. Assume that 
 \begin{align}\label{FPE:04}
  \int \limits_{0}^{T}\int \limits_{\R^{2d}}|v|^{q + \gamma^+}d\mu_t(r,v)dt < \infty, \ \ \forall \ T > 0
 \end{align}
 and $\gamma \in (-1,2]$.
 Then for any $\psi \in C^1(\R^{2d})$  satisfying \eqref{EQ:24}, the map $t \longmapsto \langle \mathcal{A}\psi, \mu_t \otimes \mu_t \rangle$ is locally integrable 
 and \eqref{FPE:ENSKOG} holds. Moreover $t \longmapsto \langle \psi, \mu_t \rangle$ is continuous for any $\psi \in C(\R^{2d})$ with
 \begin{align}\label{EQ:08}
  \sup \limits_{(r,v) \in \R^{2d}} \frac{|\psi(r,v)|}{1 + |v|^q} < \infty.
 \end{align}
\end{Theorem}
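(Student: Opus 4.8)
The plan is to prove the extension of \eqref{FPE:ENSKOG} to test functions satisfying \eqref{EQ:24} by a truncation argument. First I would fix a smooth cut-off function $\chi \in C_c^\infty(\R)$ with $\1_{[0,1]} \leq \chi \leq \1_{[0,2]}$ and, for $m \geq 1$, set $\chi_m(r,v) = \chi(|r|/m)\chi(|v|/m)$, so that $\psi_m := \psi \chi_m \in C_c^1(\R^{2d})$. For each $m$, \eqref{FPE:ENSKOG} holds for $\psi_m$ by hypothesis. The task is then to pass to the limit $m \to \infty$ in each of the three terms $\langle \psi_m, \mu_t\rangle$, $\langle \psi_m, \mu_0\rangle$, and $\int_0^t \langle \mathcal{A}\psi_m, \mu_s\otimes\mu_s\rangle ds$. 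The first two converge to $\langle\psi,\mu_t\rangle$, $\langle\psi,\mu_0\rangle$ by dominated convergence, using that $|\psi_m| \leq |\psi| \leq C(1+|v|^q)$ and the moment bound \eqref{FPE:04} (which in particular implies $\int|v|^q d\mu_t<\infty$ for a.e.\ $t$, and via the weak formulation tested against truncations of $|v|^q$, for all $t$).

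The core of the argument is the convergence of the collision term. I would write $\mathcal{A}\psi_m = \mathcal{A}(\psi\chi_m)$ and split it through the product rule: the transport part $v\cdot\nabla_r(\psi\chi_m)$ and the jump part $\sigma(|v-u|)\beta(r-q)\mathcal{L}(\psi\chi_m)$. For the transport part, $\nabla_r(\psi\chi_m) = \chi_m\nabla_r\psi + \psi\nabla_r\chi_m$; the first piece is dominated by $\|\nabla_r\psi\|_\infty|v|$ (integrable by \eqref{FPE:04} since $q+\gamma^+\geq 1$) and converges pointwise, while the second is supported on $\{m\leq|r|\leq 2m\}$ with $|\nabla_r\chi_m|\leq C/m$, so $|\psi\nabla_r\chi_m|\leq C|v|^q/m \cdot \1_{\{|r|\geq m\}}\to 0$, again dominated. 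For the jump part the key point is a pointwise bound on $\mathcal{L}(\psi\chi_m)(r,v;u)$ uniform in $m$ of the form $C(1+|v|^{q-1}+|u|^{q-1})|v-u|\sin(\theta/2)$-type, obtained from \eqref{FPE:02} applied to $\psi\chi_m$ together with $\|\nabla_v(\psi\chi_m)\|\lesssim 1+|v|^{q-1}$ on the relevant set; combined with $\kappa<\infty$ from \eqref{BCOLL} and the growth bound \eqref{EQ:07} on $\sigma$, this yields $|\mathcal{A}\psi_m(r,v;q,u)|\leq C(1+|v|^{q+\gamma^+}+|u|^{q+\gamma^+})$, which is $\mu_s\otimes\mu_s$-integrable and time-integrable by \eqref{FPE:04}. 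Since $\mathcal{A}\psi_m \to \mathcal{A}\psi$ pointwise (by Proposition \ref{FPE:LEMMA01}(c) for the $\mathcal{L}$-part and elementary convergence for the transport part), dominated convergence finishes the passage to the limit, and along the way shows $t\mapsto\langle\mathcal{A}\psi,\mu_t\otimes\mu_t\rangle$ is locally integrable.

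For the final continuity claim, given $\psi\in C(\R^{2d})$ with \eqref{EQ:08}, I would approximate uniformly on compacts by $C^1$ functions with the same growth (e.g.\ mollification times a fixed slowly growing cut-off), reducing to $\psi\in C^1$ satisfying \eqref{EQ:24} with the same $q$; then $t\mapsto\langle\psi,\mu_t\rangle = \langle\psi,\mu_0\rangle + \int_0^t\langle\mathcal{A}\psi,\mu_s\otimes\mu_s\rangle ds$ is continuous because it is an indefinite integral of a locally integrable function. The main obstacle I anticipate is controlling the error term $\psi\nabla_v\chi_m$ (and its analogue inside $\mathcal{L}$) in the velocity truncation: there $|\nabla_v\chi_m|\leq C/m$ but it multiplies $|\psi|\lesssim|v|^q$ on $\{|v|\sim m\}$, giving a contribution of order $|v|^q/m\sim m^{q-1}$ pointwise, which does not vanish uniformly — so one must use that this term is supported on $\{|v|\geq m\}$ and integrate: $\int_{\{|v|\geq m\}}|v|^{q}|v-u|\sigma(|v-u|)\,d\mu_s\,d\mu_s \to 0$ by the tail estimate following from \eqref{FPE:04}. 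Making this tail argument uniform in $s$ on $[0,T]$ (after integrating in time) is the delicate bookkeeping point, but it is exactly the kind of estimate already appearing in the proof of Proposition \ref{FPE:LEMMA01}(b).
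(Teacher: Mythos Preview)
Your truncation argument for extending \eqref{FPE:ENSKOG} is essentially the paper's, and the key uniform bound $|\mathcal{A}\psi_m|\leq C(1+|v|^{q+\gamma^+}+|u|^{q+\gamma^+})$ is exactly what is needed. One simplification worth noting: since \eqref{EQ:24} already forces $\nabla_r\psi$ to be bounded, there is no need to truncate in $r$ at all. The paper simply takes $\psi_n(r,v)=g(|v|^2/n^2)\psi(r,v)$ with $g\in C_c^\infty(\R_+)$, checks that $\sup_n\|\psi_n\|_{C_q^1}<\infty$, and then domination follows immediately from the a priori estimate $|\mathcal{A}\phi|\leq C\|\phi\|_{C_q^1}(\langle v\rangle^{q+\gamma}+\langle u\rangle^{q+\gamma})$, which is established first. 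This spares you the bookkeeping for the spurious term $\psi\,v\cdot\nabla_r\chi_m$ that you flag at the end.

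There is, however, a genuine gap in your argument for the continuity of $t\mapsto\langle\psi,\mu_t\rangle$ when $\psi$ is merely continuous with growth \eqref{EQ:08}. You propose to mollify $\psi$ to reduce to a $C^1$ function satisfying \eqref{EQ:24}, but mollification does not produce a bounded $r$-gradient: if $\psi_\epsilon=\psi*\rho_\epsilon$ then $|\nabla_r\psi_\epsilon(r,v)|\leq C_\epsilon(1+|v|^q)$, which still grows in $v$, so $\psi_\epsilon$ fails the first supremum in \eqref{EQ:24}. (Consider $\psi(r,v)=(1+|v|^q)\sin(|r|^2)$.) The paper avoids this by a two-step route: first, continuity of $t\mapsto\langle\psi,\mu_t\rangle$ for $\psi\in C_b(\R^{2d})$ follows from \eqref{FPE:ENSKOG} with $C_b^1$ test functions and a density argument, yielding weak continuity of $t\mapsto\mu_t$; second, the specific function $\psi_q(r,v)=\langle v\rangle^q$ \emph{does} satisfy \eqref{EQ:24}, so the first part of the theorem gives continuity of $t\mapsto\int\langle v\rangle^q\,d\mu_t$. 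Weak continuity together with continuity of the $q$-th moment then implies continuity of $t\mapsto\langle\psi,\mu_t\rangle$ for every continuous $\psi$ with $|\psi|\leq C(1+|v|^q)$, via the standard uniform-integrability argument.
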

\begin{proof}
 Let $T > 0$ and fix any $\psi$ as in \eqref{EQ:24}. Then, by \eqref{FPE:02}, we get 
 \begin{align*}
  | \psi(r,v + \alpha(v,u,\theta,\xi)) - \psi(r,v)| &\leq |v-u| \theta \max \limits_{|\zeta| \leq 2(|v|+|u|)} |\nabla_{\zeta} \psi(r,\zeta)|
 \\ &\leq |v-u| \theta \sup \limits_{(r,v) \in \R^{2d}} \frac{|\nabla_v \psi(r,v)|}{1 + |v|^{q-1}} \max \limits_{|\zeta| \leq 2(|v|+|u|)} (1 + |\zeta|^{q-1})
 \\ &\leq C \theta |v-u| \left( \langle v \rangle^{q-1} + \langle u \rangle^{q-1} \right) \sup \limits_{(r,v) \in \R^{2d}} \frac{|\nabla_v \psi(r,v)|}{1 + |v|^{q-1}}.
 \end{align*}
 This implies that
 \begin{align*}
  |(\mathcal{A}\psi)(r,v;q,u)|
  &\leq \| \nabla_r \psi \|_{\infty} |v| + \sigma(|v-u|)\beta(r-q)|(\mathcal{L}\psi)(r,v;u)|
  \\ &\leq \| \nabla_r \psi \|_{\infty} \langle v \rangle + C\left( \langle v\rangle^{1+\gamma} + \langle u \rangle^{1+ \gamma}\right)\left( \langle v \rangle^{q-1} + \langle u \rangle^{q-1} \right)\sup \limits_{(r,v) \in \R^{2d}} \frac{|\nabla_v \psi(r,v)|}{1 + |v|^{q-1}}
  \\ &\leq C \left( \langle v \rangle  +  \langle v \rangle^{q + \gamma} + \langle u \rangle^{q+\gamma} + \langle v\rangle^{1+\gamma}\langle u \rangle^{q-1} + \langle v \rangle^{q - 1} \langle u \rangle^{1+\gamma} \right)\| \psi \|_{C_q^1}
  \\ &\leq C \left( \langle v \rangle^{q + \gamma} + \langle u \rangle^{q+\gamma} \right)\| \psi \|_{C_q^1}
 \end{align*} 
 where in the last inequality we have used  the Young inequality 
 \begin{align}\label{EQ:19}
  \langle v\rangle^{1+\gamma}\langle u \rangle^{q-1} \leq \frac{1+\gamma^+}{q + \gamma} \langle v \rangle^{q+\gamma^+} + \frac{q - 1}{q+ \gamma^+} \langle u \rangle^{q+\gamma^+}.
 \end{align} 
 By \eqref{FPE:04} we see that $t \longmapsto \langle \mathcal{A}\psi, \mu_t \otimes \mu_t \rangle$ is locally integrable.
 Let us prove that \eqref{FPE:ENSKOG} holds for $\psi \in C^1(\R^{2d})$ satisfying \eqref{EQ:24}. 
 Take $g \in C_c^{\infty}(\R_+)$ with $\1_{[0,1]} \leq g \leq \1_{[0,2]}$ and set
 $\psi_n(r,v) = g\left(  \frac{|v|^2}{n^2} \right)\psi(r,v)$. Then $\sup_{n \in \N}  \| \psi_n \|_{C_q^1}< \infty$
 and clearly $\psi_n \longrightarrow \psi$ pointwise. Using equation \eqref{FPE:ENSKOG} with $\psi_n$ in the place of $\psi$, we obtain 
\begin{align}\label{EST:01}
  \langle \psi_n, \mu_t \rangle = \langle \psi_n, \mu_0 \rangle + \int \limits_0^t \langle \mathcal{A}\psi_n, \mu_s \otimes \mu_s \rangle ds, \ \ t \geq 0, \ \ n \in \N
 \end{align}
 and it suffices to show that we can pass to the limit $n \to \infty$.
 Clearly we have $\langle \psi_n, \mu_t \rangle \longrightarrow \langle \psi, \mu_t \rangle$ for any $t \geq 0$.
 For the integral term in \eqref{EST:01} we apply Proposition \ref{FPE:LEMMA01}.(c) so that $\mathcal{A}\psi_n \longrightarrow \mathcal{A}\psi$ pointwise,
 and by the above estimate we obtain
 \[
  |(\mathcal{A}\psi_n)(r,v;q,u)| \leq C\left( \langle v\rangle^{q+\gamma} + \langle u \rangle^{q+ \gamma}\right) \sup \limits_{n \in \N} \| \psi_n \|_{C_q^1} < \infty
 \]
 and in view of  \eqref{FPE:04} we can apply dominated convergence theorem, to pass to the limit $n \to \infty$ in \eqref{EST:01}.

 It remains to prove that $t \longmapsto \langle \psi, \mu_t \rangle$ is continuous for any $\psi \in C(\R^{2d})$ which satisfies \eqref{EQ:08}.
 This property certainly holds for any $\psi_q \in C_b^1(\R^{2d})$ and hence by standard density arguments also for any $\psi \in C_b(\R^{2d})$.
 Next, using \eqref{FPE:ENSKOG} for the particular choice $\psi_q(r,v) := \langle v \rangle^q$, which is possible since $\psi_q$ satisfies \eqref{EQ:24},
 and using that  $t \longmapsto \langle \mathcal{A}\psi_q, \mu_t \otimes \mu_t \rangle$ is locally integrable,
 we see that $t \longmapsto \int_{\R^{2d}}\langle v \rangle^q \mu_t(dr,dv)$ is continuous. This readily implies the assertion.
\end{proof}

\section{The interacting particle system}
In this section we study the particle dynamics given by \eqref{EQ:00} associated with the Enskog process.
The following is the main result for this section.
\begin{Theorem}\label{IPS:TH00}
 For each $\rho \in \mathcal{P}(\R^{2dn})$ there exists a unique solution $\Pr_{\rho} \in \mathcal{P}(D(\R_+;\R^{2dn}))$ to 
 the martingale problem $(L, C_c^1(\R^{2dn}), \rho)$.
 Moreover this solution is the unique weak solution to the stochastic equation
 \begin{align}\label{IPS:EQ01}
  \begin{cases}\mathcal{R}(t) &= \mathcal{R}(0) + \int \limits_{0}^{t}\mathcal{V}(s)ds
  \\ \mathcal{V}(t) &= \mathcal{V}(0) + \int \limits_{0}^{t}\int \limits_{E \times \R_+}G(\mathcal{R}(s),\mathcal{V}(s-),z,\theta,\xi,l,l')dN(s,l,l',\theta,\xi,z) \end{cases}
 \end{align}
 where $G(r,v,z,\theta,\xi,l,l') := (e_l - e_{l'}) \alpha(v_l, v_{l'},\theta, \xi)\1_{[0,\sigma(|v_l - v_l'|)\beta(r_l-r_{l'})]}(z)$
 and $N$ is a Poisson random measure defined on a filtered probability space $(\Omega,\F,\F_t,\Pr)$ with right-continuous filtration 
 and compensator $d\widehat{N} = ds d\nu dz$ on $\R_+ \times E \times \R_+$, $E = \{1,\dots, n\}^2 \times \Xi$,
 \begin{align}\label{EQ:06}
  \nu(dl,dl',d\theta, d\xi) = \frac{1}{2n} \sum \limits_{k=1}^{n}\sum \limits_{j=1}^{n}\delta_k(dl)\delta_j(dl')Q(d\theta)d\xi
 \end{align}
 and $\mathcal{L}(\mathcal{R}(0), \mathcal{V}(0)) = \rho$ such that $\mathcal{L}(\mathcal{R}, \mathcal{V}) = \Pr_{\rho}$.
 This solution satisfies the conservation laws \eqref{CONS}.
\end{Theorem}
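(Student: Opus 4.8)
The plan is to reduce to point initial data and then combine a truncation argument (for existence and the conservation laws) with a coupling argument driven by a single Poisson random measure (for uniqueness). Conditioning on the initial configuration reduces everything to $\rho=\delta_x$, $x\in\R^{2dn}$: once $(L,C_c^1(\R^{2dn}),\delta_x)$ is well-posed for every $x$, well-posedness of $(L,C_c^1(\R^{2dn}),\rho)$ follows with $\Pr_\rho=\int\Pr_{\delta_x}\,\rho(dx)$ (see \cite{EK86}). The decisive structural fact is that the collision map $v\mapsto v_{kj}$ of \eqref{CONF} leaves $\sum_l v_l$, $\sum_l|v_l|^2$, and in particular $\max_l|v_l|$, unchanged (cf.\ \eqref{EST:02}); hence for $\rho=\delta_x$ any solution keeps its velocities in the fixed ball $\{\sum_l|v_l|^2\le\sum_l|v_l(0)|^2\}$, its positions grow at most linearly, and on every $[0,T]$ the process stays in a fixed compact set $K_T$. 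On $K_T$ the only singularities of $L$ are the absence of an angular cut-off and, for $\gamma\in(-1,0)$, the blow-up of $\sigma(|v_k-v_j|)$ as $v_k\to v_j$; neither affects the total jump variation, since $|\alpha(v,u,\theta,\xi)|=|v-u|\sin(\theta/2)$, $|v-u|\sigma(|v-u|)\le c_\sigma\langle v-u\rangle^{1+\gamma}$ by \eqref{EQ:07} (with $1+\gamma>0$), and $\int_{(0,\pi]}\theta\,Q(d\theta)=\kappa<\infty$, so \eqref{IPS:EQ01} is of finite-variation pure-jump-plus-transport type.

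\textbf{Existence.}
I would build the solution by approximation: replace $\sigma$ by the bounded map $\sigma\wedge\e^{-1}$ and $Q$ by the finite measure $Q|_{(1/m,\pi]}$. For these coefficients the dynamics is a piecewise deterministic Markov process — free transport $\dot r=v$, $\dot v=0$ interlaced with collision jumps at the prescribed finite rate — which is constructed explicitly and plainly gives a weak solution both of the truncated martingale problem and of the truncated \eqref{IPS:EQ01}. Since the truncation does not alter the collision map, the a priori bounds above hold uniformly in $\e,m$, so by the Aldous criterion the laws in $D(\R_+;\R^{2dn})$ are tight. Any weak limit point solves $(L,C_c^1(\R^{2dn}),\delta_x)$: one passes to the limit in the martingale identity, which is justified because for $F\in C_c^1$ the map $LF$ is continuous and bounded on $K_T$ (as in Proposition~\ref{FPE:LEMMA01}) and the truncated generators converge to $L$ on $C_c^1$ by dominated convergence. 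Finally, the classical representation of martingale-problem solutions by weak solutions of stochastic equations with jumps (see e.g.\ \cite[Appendix~A]{HK90}) yields, after possibly enlarging the stochastic basis, a Poisson random measure $N$ with compensator $ds\,d\nu\,dz$ and $\nu$ as in \eqref{EQ:06} for which \eqref{IPS:EQ01} holds; mixing over $x$ produces $\Pr_\rho$ for arbitrary $\rho$.

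\textbf{Uniqueness.}
Let two solutions of $(L,C_c^1,\delta_x)$ be given; each is a weak solution of \eqref{IPS:EQ01}, and I would couple them by driving both with the same $N$, writing $(\mathcal R,\mathcal V)$, $(\widetilde{\mathcal R},\widetilde{\mathcal V})$, and estimate $t\mapsto\E\sup_{s\le t}(|\mathcal R(s)-\widetilde{\mathcal R}(s)|^2+|\mathcal V(s)-\widetilde{\mathcal V}(s)|^2)$. The transport term is Lipschitz. For the jumps I would split each pair collision into a ``matched'' collision common to both systems — taken at the smaller of the two rates, with the parameter $\xi$ of the second system replaced by the measure-preserving shift $\xi_0(\mathcal V_l-\mathcal V_{l'},\widetilde{\mathcal V}_l-\widetilde{\mathcal V}_{l'},\xi)$ of Lemma~\ref{PARAMETARIZATION}, so that the mismatch of jump sizes is controlled by \eqref{EQ:13} — plus a ``surplus'' collision carried out only by the system with the larger rate, of intensity $(\sigma(|\mathcal V_l-\mathcal V_{l'}|)\beta(\mathcal R_l-\mathcal R_{l'})-\sigma(|\widetilde{\mathcal V}_l-\widetilde{\mathcal V}_{l'}|)\beta(\widetilde{\mathcal R}_l-\widetilde{\mathcal R}_{l'}))_+$ (and its symmetric counterpart). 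Expanding the squared difference and using $\alpha\cdot(v_l-v_{l'})=-|\alpha|^2$, the surplus contribution reduces to terms dominated by $|v-u|^2\,|\sigma(|v-u|)-\sigma(|\widetilde v-\widetilde u|)|$ and $|v-u|\,|\sigma(\cdot)-\sigma(\cdot)|\,|\mathcal V-\widetilde{\mathcal V}|$ in which, crucially, the sign of the surplus fixes the sign of $|v-u|-|\widetilde v-\widetilde u|$; invoking \eqref{EQ:01}, \eqref{EQ:07}, $\beta\in C_c^1$ and the a priori bound on $K_T$, these are bounded by a Lipschitz (for $\gamma\in(-1,0]\cup[1,2]$) or a concave Hölder (for $\gamma\in(0,1)$) modulus of $|\mathcal R-\widetilde{\mathcal R}|+|\mathcal V-\widetilde{\mathcal V}|$. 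A Gronwall, respectively Bihari--Osgood, argument then gives pathwise uniqueness, hence uniqueness in law by the Yamada--Watanabe principle, hence well-posedness of the martingale problem. I expect the delicate points to be the non-smoothness of the collision parameterization, handled via Lemma~\ref{PARAMETARIZATION}, and the singular behaviour of $\sigma$ near $v_k=v_j$ for soft potentials, tamed precisely by the one-sided surplus bookkeeping together with $|v-u|\sigma(|v-u|)\le c_\sigma\langle v-u\rangle^{1+\gamma}$.

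\textbf{Conservation laws.}
Once \eqref{IPS:EQ01} is available, \eqref{CONS} follows by inspecting the jump terms. Each collision increment is $(e_l-e_{l'})\alpha(\mathcal V_l,\mathcal V_{l'},\theta,\xi)$, whose sum over the $n$ velocity blocks vanishes, so $t\mapsto\sum_k\mathcal V_k(t)$ has identically vanishing jumps and no continuous part; and since $\alpha\cdot(v_l-v_{l'})=-|\alpha|^2$ by \eqref{FPE:00}, the same increment preserves $\sum_k|\mathcal V_k|^2$ (equivalently \eqref{EST:02}), so $t\mapsto\sum_k|\mathcal V_k(t)|^2$ likewise has no jumps and no continuous part. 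Both processes are therefore a.s.\ constant — immediately for the finite measure $Q|_{(1/m,\pi]}$, and in general by passing to the limit in the approximation — which is \eqref{CONS}.
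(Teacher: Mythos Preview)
Your overall architecture---reduce to $\rho=\delta_x$, truncate, use Tanaka's shift \eqref{EQ:13} to compare, and read off \eqref{CONS} from \eqref{EST:02}---matches the paper's. The ingredients are the same; the packaging differs in two places.

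\emph{Existence.} You build the truncated process as a piecewise deterministic Markov process and pass to the limit by tightness. The paper instead proves that $L$ (and each truncated $L_m$) maps $C_c^1$ into $C_c$, satisfies the positive maximum principle, and is conservative (Lemma~\ref{TH:02}); this feeds directly into \cite[Chapter 4, Theorems 3.8, 5.4]{EK86} to get existence for $(L_m,C_c^1,\rho)$ without any tightness argument, and then \cite[Chapter 4, Theorems 6.1, 6.2]{EK86} localizes to $L$. The crucial observation for both ``$LF\in C_c$'' and ``$L$ conservative'' is that $|v_{kj}|_2^2=|v|_2^2$, so the jump part of $L$ annihilates any function of $|v|_2^2$ alone. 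This same observation gives you, for \emph{every} martingale solution (not just the one you construct), that $t\mapsto\sum_l|\mathcal V_l(t)|^2$ is a.s.\ constant---you invoke this a priori bound in your uniqueness step but do not justify it for an arbitrary solution; the paper handles it through the $L_m$-localization.

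\emph{Uniqueness.} Here the routes genuinely diverge. You couple two weak solutions on a common noise and close by Yamada--Watanabe for jump SDEs. The paper, following \cite{F06}, takes a \emph{single} weak solution $(\mathcal R,\mathcal V)$ of \eqref{EQ:04}, defines on the \emph{same} probability space the angular cut-off process $(\mathcal R^k,\mathcal V^k)$ (jumps only for $\theta>1/k$, with $\xi$ replaced by $\xi_0(\mathcal V_l-\mathcal V_{l'},\mathcal V_l^k-\mathcal V_{l'}^k,\xi)$), and proves $\E\sup_{[0,T]}|\mathcal V^k-\mathcal V|_2\to0$ via a Gronwall estimate. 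Since $(\mathcal R^k,\mathcal V^k)$ is pathwise uniquely determined (finite jump rate) and its law does not depend on which weak solution one started from (Jacobian $1$ for $\xi_0$), uniqueness in law follows without invoking Yamada--Watanabe. This is slightly more economical and, importantly, yields a \emph{Lipschitz} closing estimate in all regimes: the paper uses
\[
(|v-u|+|\widetilde v-\widetilde u|)\,\big|\sigma(|v-u|)-\sigma(|\widetilde v-\widetilde u|)\big|\le C\big(|v-u|^\gamma+|\widetilde v-\widetilde u|^\gamma\big)\big(|v-\widetilde v|+|u-\widetilde u|\big),
\]
valid for every $\gamma\in(-1,2]$, so no Bihari--Osgood step is needed for $\gamma\in(0,1)$.

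One small slip: the collision map does \emph{not} preserve $\max_l|v_l|$ (only $\sum_l v_l$ and $\sum_l|v_l|^2$); your conclusion that velocities stay in a fixed ball is nevertheless correct, since $|v_l|\le(\sum_k|v_k|^2)^{1/2}$.
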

The rest of this section is devoted to the proof of Theorem \ref{IPS:TH00}.
\begin{Lemma}\label{TH:02}
 The operator $L$ satisfies the following properties
\begin{enumerate}
  \item[(a)] $LF \in C_c(\R^{2dn})$ for any $F \in C_c^1(\R^{2dn})$.
  \item[(b)] $L$ satisfies the positive maximum principle, i.e. let $F \in C_c^1(\R^{2dn})$ and $(r_0,v_0) \in \R^{2dn}$ be a global maximum of $F$, then $(LF)(r_0,v_0) \leq 0$.
  \item[(c)] $L$ is conservative, i.e. there exists $(F_m)_{m \geq 1} \subset C_c^1(\R^{2dn})$ such that $F_m \longrightarrow 1$ and $LF_m \longrightarrow 0$ bounded pointwise as $m \to \infty$.
\end{enumerate} 
\end{Lemma}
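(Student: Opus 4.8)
The plan is to verify the three properties essentially by direct inspection of the explicit form of $L$ in \eqref{EQ:00}, handling the transport part $\sum_k v_k\cdot\nabla_{r_k}F$ and the collision part $\frac1n\sum_{k,j}\sigma(|v_k-v_j|)\beta(r_k-r_j)\mathcal{J}_{kj}F$ separately. For part (a), I would first argue that each $\mathcal{J}_{kj}F$ is well-defined and continuous: by \eqref{PARA:00} one has $|v_{kj}-v| = |e_k-e_j|\,|\alpha(v_k,v_j,\theta,\xi)| \le \sqrt2\,|v_k-v_j|\sin(\theta/2)$, so the mean value theorem gives $|F(r,v_{kj})-F(r,v)| \le \sqrt2\,|v_k-v_j|\,\theta\,\|\nabla_v F\|_\infty$, which is $Q(d\theta)d\xi$-integrable since $\kappa=\int_{(0,\pi]}\theta Q(d\theta)<\infty$; continuity of $(r,v)\mapsto(\mathcal{J}_{kj}F)(r,v)$ then follows by dominated convergence using the uniform continuity of $\nabla_v F$ (this is the finite-dimensional analogue of Proposition \ref{FPE:LEMMA01}(a), and in fact simpler because $F$ has compact support). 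The transport part $v_k\cdot\nabla_{r_k}F$ is obviously continuous. For the compact support claim, note that $F\in C_c^1(\R^{2dn})$ vanishes outside some ball $B_M$; the transport term vanishes there too, and for the collision term $\sigma(|v_k-v_j|)\beta(r_k-r_j)\mathcal{J}_{kj}F(r,v)$ one uses that $\beta$ has compact support and, arguing as in the proof of Proposition \ref{FPE:LEMMA01}(b), that $F(r,v_{kj})-F(r,v)$ can only be nonzero when either $(r,v)$ or $(r,v_{kj})$ lies in $B_M$, and the latter forces $|v|$ to be bounded in terms of $|v_k-v_j|$ which is itself controlled on $\mathrm{supp}\,\beta$ — hence $LF$ has compact support.

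For part (b), let $(r_0,v_0)$ be a global maximum of $F\in C_c^1(\R^{2dn})$. Then $\nabla F(r_0,v_0)=0$, so in particular $v_k\cdot\nabla_{r_k}F(r_0,v_0)=0$ for every $k$ and the transport part of $(LF)(r_0,v_0)$ vanishes. For the collision part, each term $\mathcal{J}_{kj}F(r_0,v_0) = \frac12\int_\Xi\big(F(r_0,(v_0)_{kj})-F(r_0,v_0)\big)Q(d\theta)d\xi \le 0$ because $F(r_0,v_0)$ is the global maximum; since $\sigma\ge0$ and $\beta\ge0$, each summand is $\le 0$ and hence $(LF)(r_0,v_0)\le 0$.

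For part (c), I would take a standard cutoff sequence: fix $g\in C_c^\infty(\R)$ with $\1_{[0,1]}\le g\le\1_{[0,2]}$ and set $F_m(r,v)=g\big((|r|^2+|v|^2)/m^2\big)$. Then $F_m\to 1$ boundedly and pointwise and $0\le F_m\le 1$. The transport part of $LF_m$ is $\sum_k v_k\cdot\nabla_{r_k}F_m = 2m^{-2}g'\big((|r|^2+|v|^2)/m^2\big)\sum_k v_k\cdot r_k$, which is supported on the annulus $m^2\le|r|^2+|v|^2\le 2m^2$ and bounded there by $C m^{-2}\cdot|r||v| \le C$ in absolute value, and it tends to $0$ pointwise as $m\to\infty$ (for any fixed $(r,v)$ eventually the argument of $g'$ equals $0$). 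For the collision part, $|\mathcal{J}_{kj}F_m(r,v)| \le \sqrt2\,|v_k-v_j|\,\kappa\,|S^{d-2}|\,\|\nabla_v F_m\|_\infty \le C m^{-1}|v_k-v_j|$ by the mean value estimate above together with $\|\nabla_v F_m\|_\infty \le C/m$, and multiplying by $\sigma(|v_k-v_j|)\beta(r_k-r_j)$, which is compactly supported in $r_k-r_j$ and therefore bounds $|v_k-v_j|$ and $\sigma(|v_k-v_j|)$ on that support by a constant, we get a bound $\le C/m$; hence $LF_m\to0$ bounded pointwise.

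The main obstacle is the compact-support part of (a): as in Proposition \ref{FPE:LEMMA01}(b), one must rule out the scenario where $(r,v)$ is far out but the collision moves it into $\mathrm{supp}\,F$, which requires the inequality $|\alpha(v_k,v_j,\theta,\xi)| \le (|v_k|+|v_j|)/\sqrt2$ from \eqref{PARA:00} to confine the relevant region; the rest is routine bookkeeping with the generic constant $C$.
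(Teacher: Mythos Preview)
Your argument for part (b) is fine and matches the paper. The problem is a recurring error in parts (a) and (c): you repeatedly claim that the compact support of $\beta$ controls $|v_k - v_j|$. It does not. The function $\beta$ depends on the \emph{position} difference $r_k - r_j$, so $\mathrm{supp}\,\beta$ bounds $|r_k - r_j|$ only and says nothing whatsoever about velocities. In (a) you write ``$|v_k-v_j|$ which is itself controlled on $\mathrm{supp}\,\beta$'' and in (c) you write that $\sigma(|v_k-v_j|)\beta(r_k-r_j)$ being ``compactly supported in $r_k-r_j$ \dots\ therefore bounds $|v_k-v_j|$ and $\sigma(|v_k-v_j|)$''. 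Both claims are false, and for hard potentials $\sigma(|v_k-v_j|)\,|v_k-v_j|$ is genuinely unbounded, so your estimate in (c) collapses.

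The missing idea, which the paper uses and which repairs both gaps at once, is the \emph{pathwise conservation of kinetic energy} for a binary collision: by \eqref{EST:02} one has $\sum_{l}|(v_{kj})_l|^2 = \sum_l |v_l|^2$, i.e.\ $|v_{kj}|_2^2 = |v|_2^2$. For (a), choose $M$ so that $F(r,v)=0$ whenever $|r|_2^2\ge M$ or $|v|_2^2\ge M$; then for $|r|_2^2\ge M$ both $F(r,v)$ and $F(r,v_{kj})$ vanish (the $r$-coordinate is unchanged), and for $|v|_2^2\ge M$ the identity $|v_{kj}|_2^2=|v|_2^2\ge M$ forces both to vanish as well, so $\mathcal{J}_{kj}F$ has compact support and no appeal to $\beta$ or to Proposition~\ref{FPE:LEMMA01}(b) is needed. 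For (c), your cutoff $F_m(r,v)=g\big((|r|^2+|v|^2)/m^2\big)$ actually satisfies $F_m(r,v_{kj})=F_m(r,v)$ identically by the same conservation law, so $\mathcal{J}_{kj}F_m\equiv 0$ and the entire collision part of $LF_m$ vanishes; only the transport part remains, and your bound for that is correct. (The paper uses the product $F_m(r,v)=\psi(|r|_2^2/m^2)\psi(|v|_2^2/m^2)$, which works for the same reason.)
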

\begin{proof}
  Let us first prove that $\mathcal{J}_{kj}: C_c^1(\R^{2dn}) \longmapsto C_c(\R^{2dn})$, $1 \leq k,j \leq n$.
 Fix any $F \in C_c^1(\R^{2dn})$. Using similar arguments to Proposition \ref{FPE:LEMMA01}.(a) we find that $\mathcal{J}_{kl}F \in C(\R^{2dn})$.
 For convenience of notation we introduce $|r|_2^2 = \sum_{k=1}^{n}|r_k|^2$ and likewise $|v|_2^2$.
 Take $M > 0$ such that $F(r,v) = 0$ whenever $|r|_2^2 \geq M$ or $|v|_2^2 \geq M$. 
 By definition of $\mathcal{J}_{kj}$ we get $(\mathcal{J}_{kj}F)(r,v) = 0$ if $|r|_2^2 \geq M$. 
 Consider $|v|_2^2 \geq M$. Conservation of kinetic energy implies $|v_{kj}|_2^2 = |v|_2^2 \geq M$ and thus $(\mathcal{J}_{kj}F)(r,v) = 0$ (see \eqref{EST:02}).
 This proves $LF \in C_c(\R^{2dn})$.
 Let $(r_0,v_0) \in \R^{2dn}$ be a global maximum of $F$. Then by definition of $\mathcal{J}_{kj}$ we obtain
 $\mathcal{J}_{kj}F(r_0,v_0) \leq 0$ and hence $(LF)(r_0,v_0) \leq 0$, i.e. the positive maximum principle holds.

 Let us prove that $L$ is conservative.
 Take $\psi \in C^{\infty}(\R_+)$ such that $\1_{[0,1]} \leq \psi \leq \1_{[0,2]}$ and set, for $m \in \N$,
 $\psi_m(r) := \psi\left( \frac{|r|_2^2}{m^2}\right)$ and $F_m(r,v) = \psi_m(r)\psi_m(v)$.
 Then $F_m \in C_c^1(\R^{2dn})$ and $F_m \longrightarrow 1$ bounded pointwise as $m \to \infty$. Using again the fact that $|v_{kj}|_2^2 = |v|_2^2$ combined with $\nabla_r \psi_m(r) = 0$ if $|r| \leq m$, we conclude that
 $(LF_m)(r,v) = \sum_{k=1}^{n}v_k  \cdot (\nabla_{r_k} \psi_m)(r) \psi_m(v) \longrightarrow 0$ bounded pointwise as $m \to \infty$.
\end{proof}
Next we consider the case where $\sigma$ is bounded. 
The general statement can be then deduced by suitable localization (see  \cite[Chapter 4, Theorem 6.1, Theorem 6.2]{EK86}).
For $m \geq 1$ let 
\[
 U_m = \begin{cases} \left\{ (r,v) \ | \ \sum_{k=1}^{n}|v_k|^2 < m^2 \right\}, & \text{ if } \gamma \in (0,2]
 \\ \left\{ (r,v)  \ | \ \forall k,j \in \{1,\dots, n\} \text{ s.t. } |v_k - v_j|^{\gamma} < m \right\}, & \text{ if } \gamma \in (-1,0]\end{cases}
\]
and take $g_m \in C_c^1(\R^{2dn})$ such that $g_m \in C^1(\R^{dn})$ with $\1_{U_{m-1}} \leq g_m \leq \1_{U_m}$ and $U_0 := \emptyset$.
Then $g_m \cdot \sigma$ is bounded. Let $L_mF$ be given by $LF$ with $\sigma$ replaced by $g_m \sigma$.
We consider weak solutions $\mathcal{X}^m(t) = (\mathcal{R}^m(t), \mathcal{V}^m(t))$ to
\begin{align}\label{EQ:04}
 \begin{cases} \mathcal{R}^m(t) &= \mathcal{R}(0) + \int \limits_{0}^{t}g_m(\mathcal{V}^m(s))\mathcal{V}^m(s)ds
 \\ \mathcal{V}^m(t) &= \mathcal{V}(0) + \int \limits_{0}^{t}\int \limits_{E \times [0,c_m]}G_m(\mathcal{R}^m(s), \mathcal{V}^m(s-),z,\theta,\xi,l,l') dN(s,l,l',\theta,\xi,z) \end{cases}
\end{align}
where $N$ is a Poisson random measure defined on a stochastic basis $(\Omega, \F, \F_t, \Pr)$ with right-continuous filtration and compensator $ds d\nu dz$ 
on $\R_+ \times E \times [0, c_m]$, where $c_m > 0$ is some sufficiently large constant and
\begin{align}\label{GMDEF}
 G_m(r,v,z,\theta,\xi,l,l') := (e_l - e_{l'}) \alpha(v_l, v_{l'},\theta, \xi)\1_{[0,g_m(v)\sigma(|v_l - v_l'|)\beta(r_l-r_{l'})]}(z).
\end{align}
Using the It\^{o} formula one can show that any weak solution to \eqref{EQ:04} gives a solution to the martingale problem posed by $(L_m, C_c^1(\R^{2dn}))$.
\begin{Proposition}\label{IPS:PROPOSITION02}
 Let $m \geq 1$ be fixed.
 Then for each $\rho \in \mathcal{P}(\R^{2dn})$ there exists a unique solution to the martingale problem $(L_m, C_c^1(\R^{2dn}), \rho)$ in the Skorokhod space.
 Moreover, this solution can be obtained as a weak solution to \eqref{EQ:04}.
 Such a solution satisfies for $t \geq 0$ similar conservation laws to \eqref{CONS}.
\end{Proposition}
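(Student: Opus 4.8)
The plan is to prove well-posedness of the stochastic equation \eqref{EQ:04} and to transfer it to the martingale problem $(L_m, C_c^1(\R^{2dn}), \rho)$ in both directions: by It\^o's formula every weak solution of \eqref{EQ:04} solves this martingale problem (as already noted above), while conversely every solution of the martingale problem is representable as a weak solution of \eqref{EQ:04} by a martingale representation theorem for jump processes (see \cite[Appendix A]{HK90} and \cite{EK86}). Hence existence for the martingale problem will follow from existence of a weak solution to \eqref{EQ:04}, and its uniqueness from uniqueness in law for \eqref{EQ:04}. Existence of a solution to the martingale problem could also be obtained from the positive maximum principle and conservativeness of $L_m$, which hold by the argument of Lemma \ref{TH:02} since $L_m$ has the form of $L$ with $\sigma$ replaced by the bounded function $g_m\sigma$; but the construction below yields in addition the representation \eqref{EQ:04} and the conservation laws. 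The main obstacle is uniqueness, since $Q$ may have infinite mass, so a naive collision-counting argument is unavailable.

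\textbf{Existence.} Because $g_m\sigma$ is bounded and $0\le\beta\le1$, the collision rate $(r,v)\mapsto g_m(v)\sigma(|v_l-v_{l'}|)\beta(r_l-r_{l'})$ is bounded by a constant $c_m$, and by \eqref{PARA:00} the jump amplitude satisfies $|\alpha(v_l,v_{l'},\theta,\xi)|\le\tfrac12|v_l-v_{l'}|\theta$, so the angular kernel enters only through $\kappa=\int_{(0,\pi]}\theta\,Q(d\theta)<\infty$; moreover, on $\{g_m(v)>0\}$ the velocity differences $|v_l-v_{l'}|$ are bounded (if $\gamma\in(0,2]$) or bounded away from $0$ (if $\gamma\in(-1,0]$) by the definition of $U_m$, so that $v\mapsto g_m(v)\sigma(|v_l-v_{l'}|)$ and $r\mapsto\beta(r_l-r_{l'})$ are bounded and globally Lipschitz. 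To construct a solution, first truncate $Q$ to the finite measure $Q_N:=Q|_{[1/N,\pi]}$ and let $\mathcal X^{m,N}$ solve the corresponding finite-activity equation, obtained pathwise by interlacing: between consecutive jump times of the driving measure (a.s.\ locally finite) $\mathcal V^{m,N}$ is constant and $\mathcal R^{m,N}$ solves the linear ODE $\dot{\mathcal R}=g_m(\mathcal V)\mathcal V$, and at a jump time a velocity block is updated by $v\mapsto v_{kj}$ from \eqref{CONF}. The conservation laws \eqref{CONS} hold for $\mathcal X^{m,N}$ collision by collision via \eqref{EST:02}; in particular $|\mathcal V^{m,N}(t)|_2=|\mathcal V(0)|_2$ a.s., so after conditioning on the initial configuration the velocities are bounded pathwise, and together with the bounded rate and the $\theta$-integrability of the jumps this verifies the Aldous criterion, so $\{\mathcal L(\mathcal X^{m,N})\}_N$ is tight on $D(\R_+;\R^{2dn})$. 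Any weak limit $\mathcal X^m$ inherits \eqref{CONS}, and it solves the martingale problem $(L_m, C_c^1,\rho)$: writing $L_m^N$ for the operator obtained from $L_m$ by replacing $Q$ with $Q_N$, for $F\in C_c^1(\R^{2dn})$ supported in $\{|v|_2\le M\}$ one has $F(r,v_{kj})-F(r,v)=0$ unless $|v|_2\le M$ (since $|v_{kj}|_2=|v|_2$), hence $|(L_m-L_m^N)F|\le C_M\|\nabla_vF\|_\infty\int_{(0,1/N)}\theta\,Q(d\theta)\to0$, and the martingale property passes to the limit. By It\^o's formula $\mathcal X^m$ is, in turn, a weak solution of \eqref{EQ:04}.

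\textbf{Uniqueness and conservation.} For pathwise uniqueness, take two weak solutions of \eqref{EQ:04} on a common stochastic basis with the same Poisson random measure and the same initial value; reparametrize the angular variable $\xi$ in one of the two stochastic integrals by the Jacobian-one bijection $\xi_0$ of Lemma \ref{PARAMETARIZATION}, which does not change the law of the noise, and combine the Lipschitz-type bound \eqref{EQ:13} for $\alpha$ with the boundedness and Lipschitz continuity of the rate established above (the relevant velocities being bounded pathwise by the conservation laws, which are valid for every weak solution of \eqref{EQ:04}) to obtain a closed Gronwall estimate for $\E\big[\sup_{s\le t}\big(|\mathcal R(s)-\widetilde{\mathcal R}(s)|+|\mathcal V(s)-\widetilde{\mathcal V}(s)|\big)\big]$; the jump contribution is finite because $\int_0^t\!\int\theta\,d\nu\,dz\le c_m t\,|S^{d-2}|\kappa<\infty$. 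Pathwise uniqueness together with the existence above gives, by the Yamada--Watanabe principle for jump equations, uniqueness in law of \eqref{EQ:04}, and hence, via the representation theorem, uniqueness of the martingale problem $(L_m, C_c^1,\rho)$. Finally, the conservation laws for the constructed process follow by passing to the limit in \eqref{CONS} from the $Q_N$-approximations, or directly: $\mathcal V^m$ has no drift term, and testing the martingale problem with bounded smooth approximations of $v\mapsto\sum_k v_k$ and $v\mapsto\sum_k|v_k|^2$ — which $L_m$ annihilates, since every collision $v\mapsto v_{kj}$ preserves these quantities by \eqref{EST:02} (equivalently $(v_l-v_{l'},\alpha(v_l,v_{l'},\theta,\xi))=-|\alpha(v_l,v_{l'},\theta,\xi)|^2$) — shows that $\sum_k\mathcal V_k^m$ and $\sum_k|\mathcal V_k^m|^2$ are a.s.\ constant in time.
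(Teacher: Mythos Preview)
Your existence argument differs from the paper's but is valid: the paper invokes abstract existence theory for martingale problems (Lemma \ref{TH:02} combined with \cite[Chapter 4, Theorems 3.8 and 5.4]{EK86}), and then the representation result of \cite{KURTZ10} to pass to \eqref{EQ:04}; you instead truncate $Q$, build the finite-activity solutions by interlacing, and pass to the limit by tightness. Your route has the advantage of producing the SDE representation and the conservation laws simultaneously; the paper's route is shorter because the abstract machinery is already available.

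The uniqueness argument, however, has a genuine gap. You claim pathwise uniqueness by ``reparametrizing $\xi$ in one of the two stochastic integrals'' and then applying \eqref{EQ:13}. But \eqref{EQ:13} bounds $|\alpha(v,u,\theta,\xi)-\alpha(\widetilde v,\widetilde u,\theta,\xi_0(v-u,\widetilde v-\widetilde u,\xi))|$, not $|\alpha(v,u,\theta,\xi)-\alpha(\widetilde v,\widetilde u,\theta,\xi)|$; indeed Tanaka's whole point is that $(v,u)\mapsto\Gamma(u-v,\xi)$ is \emph{not} Lipschitz for fixed $\xi$. If $\mathcal V$ and $\widetilde{\mathcal V}$ both solve \eqref{EQ:04} with the same $N$, the difference $\int[G_m(\mathcal V,\dots,\xi)-G_m(\widetilde{\mathcal V},\dots,\xi)]\,dN$ cannot be controlled by \eqref{EQ:13}, and pushing $N$ forward by the predictable map $\xi_0$ produces a \emph{different} Poisson measure $N'$, so the two integrals are no longer driven by a common noise and do not combine into a single Gr\"onwall estimate. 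The paper resolves this differently: given an arbitrary weak solution $\mathcal V$, it defines truncated auxiliary processes $\mathcal V^k$ whose integrand carries the shift $\xi^k=\xi_0(\mathcal V_l-\mathcal V_{l'},\mathcal V_l^k-\mathcal V_{l'}^k,\xi)$; the difference $\mathcal V-\mathcal V^k$ then pairs $\alpha(\cdot,\cdot,\theta,\xi)$ against $\alpha(\cdot,\cdot,\theta,\xi^k)$ and \eqref{EQ:13} applies directly, yielding $\mathcal V^k\to\mathcal V$. Because $\xi_0$ has Jacobian one, the \emph{law} of $\mathcal V^k$ does not depend on $\mathcal V$, whence weak uniqueness---which is all that is needed, and which makes the Yamada--Watanabe detour unnecessary.

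A smaller point: you invoke pathwise energy conservation for \emph{every} weak solution of \eqref{EQ:04} inside the uniqueness step, but only justify it afterwards. The paper establishes this first, via \eqref{EQ:10} and the It\^o formula (the integrand vanishes identically, so no localization beyond $\tau_M$ is needed); you should do the same so that the velocities are legitimately bounded when you run the Gr\"onwall argument.
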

\begin{proof}
 Lemma \ref{TH:02} applied to $L_m$ together with \cite[Chapter 4, Theorem 3.8, Theorem 5.4]{EK86} yields existence of solutions to 
 the martingale problem $(L_m, C_c^1(\R^{2dn}),\rho)$, for any $\rho \in \mathcal{P}(\R^{2dn})$.
 It follows from \cite[Theorem 2.3]{KURTZ10} that each solution to the martingale problem $(L_m, C_c^1(\R^{2dn}), \rho)$ can be obtained from a weak solution 
 to \eqref{EQ:04}. Let $(\mathcal{R}^m(t),\mathcal{V}^m(t))$ be any weak solution
 and set $F_0(v) = \sum_{k=1}^{n}v_k$ and $F_1(v) = \sum_{k=1}^{n}|v_k|^2$. Then using the definition of $G_m$ together with \eqref{EST:02} a short computation shows that
 \begin{align}\label{EQ:10}
  F_j(v + G_m(v,r,z,\xi,l,l')) - F_j(v) = 0, \ \ j = 0,1.
 \end{align}
 From the It\^{o} formula we conclude that conservation of momentum and energy holds. 
 
 Let us prove uniqueness. 
 Applying \cite[Theorem 2.1]{BK93} we see that it suffices to prove uniqueness for the martingale problem $(L_m, C_c^1(\R^{2dn}), \delta_{(r_0,v_0)})$, 
 for all $(r_0,v_0) \in \R^{2dn}$.
 Again by \cite[Corollary 2.5]{KURTZ10} we only have to prove uniqueness in law for solutions to \eqref{EQ:04} with initial condition $(r_0, v_0) \in \R^{2dn}$.
 Since $m$ is fixed we let for simplicity of notation $\mathcal{R}(t) = \mathcal{R}^m(t)$ and $\mathcal{V}(t) = \mathcal{V}^m(t)$.
 The proof follows some ideas taken from \cite{F06}, but now applied for an interacting particle system.
 Consider for $k \geq 1$ the stochastic equation
 \begin{align*}
  \begin{cases} \mathcal{R}^k(t) &= r_0 + \int \limits_{0}^{t}\mathcal{V}^k(s)ds
  \\ \mathcal{V}^k(t) &= v_0 + \int \limits_{0}^{t}\int \limits_{E \times [0,c_m]}G_m(\mathcal{R}^k(s), \mathcal{V}^k(s-),z,\theta,\xi^k,l,l')\1_{ \{ \theta > \frac{1}{k} \} } dN(s,l,l',\theta,\xi,z) \end{cases}
 \end{align*}
 with $\xi^k = \xi_0(\mathcal{V}_l(s-) - \mathcal{V}_{l'}(s-), V_l^k(s-) - V_{l'}^k(s-), \xi)$. 
 Since $\nu\left(\{1,\dots, n\}^2 \times (k^{-1},\pi ]\times S^{d-2}\right) < \infty$
 it follows that this equation can be uniquely solved from jump to jump.
 Applying the It\^{o} formula and using again \eqref{EQ:10} it is not difficult to show that $\mathcal{V}^k(t)$ satisfies conservation of momentum and energy.
 Recall that here and below we have let $|v|_2^2 = \sum_{j=1}^{n}|v_j|^2$ for $v \in \R^{dn}$.
 Since $\sup_{t \in [0,T]}|\mathcal{R}^k(t) - \mathcal{R}(t)|_2 \leq T \sup_{t \in [0,T]}|\mathcal{V}^k(t) - \mathcal{V}(t)|_2$ it suffices to prove
 \begin{align}\label{IPS:EQ00}
  \E\left( \sup \limits_{t \in [0,T]} |\mathcal{V}^k(t) - \mathcal{V}(t)|_2 \right) \leq a_k C + C \int \limits_{0}^{T}\E\left( \sup \limits_{s \in [0,t]}| \mathcal{V}^k(s) - \mathcal{V}(s)|_2 \right)dt
 \end{align}
 where $C = C(n,m,r_0,v_0, T) > 0$ is some constant independent of $k$ and $0 \leq a_k \longrightarrow 0$ as $k \to \infty$
 (apply e.g. similar arguments to \cite{F06}).
 
 Set $g_m := g_m(\mathcal{V}(s-)), \sigma = \sigma(|\mathcal{V}_l(s-) - \mathcal{V}_{l'}(s-)|)$, $\beta = \beta(\mathcal{R}_l(s) - \mathcal{R}_{l'}(s))$
 and $g_m', \sigma', \beta'$ with $\mathcal{V}, \mathcal{R}$ replaced by $\mathcal{V}^k, \mathcal{R}^k$. 
 Then by the It\^{o} formula and a simple computation we arrive at
 \[
 \E \left( \sup_{t \in [0,T]} | \mathcal{V}^k(t) - \mathcal{V}(t)|_2 \right) \leq I_1+ I_2+I_3+I_4
 \]
 where
 \begin{align*}
  I_1 &= \int \limits_{0}^{T}\int \limits_{E} \E\left( \left| (e_l - e_{l'})\left(\alpha(\mathcal{V}_l, \mathcal{V}_{l'}, \theta,\xi) - \alpha(\mathcal{V}_l^k, \mathcal{V}_{l'}^k, \theta, \xi^k)\right) \right|_2 g_m\sigma \beta \wedge g_m'\sigma' \beta' \right) d\nu ds,
  \\ I_2 &= \int \limits_{0}^{T}\int \limits_{E} \E\left( |(e_l - e_{l'})\alpha(\mathcal{V}_l, \mathcal{V}_{l'},\theta,\xi)|_2|g_m'\sigma' \beta' - g_m\sigma \beta| \right)d\nu ds,
  \\ I_3 &= \int \limits_{0}^{T}\int \limits_{E} \E\left( |(e_l - e_{l'})\alpha(\mathcal{V}_{l}^k, \mathcal{V}_{l'}^k,\theta, \xi^k)|_2|g_m\sigma \beta - g_m'\sigma' \beta'| \right)d\nu ds,
  \\ I_4 &= \int \limits_{0}^{T} \int \limits_{\{1, \dots, n\}^2 \times (0,\frac{1}{k})\times S^{d-2}} \E \left( |(e_l - e_{l'})\alpha(\mathcal{V}_l^k, \mathcal{V}_{l'}^k, \theta, \xi^k)|_2g_m'\sigma' \beta'\right)d\nu ds.
 \end{align*}
 For $I_1$ we easily obtain $I_1 \leq C \int_{0}^{T} \E( \sup_{s \in [0,t]}|\mathcal{V}^k(s) - \mathcal{V}(s)|_2 ) dt$
 where we have used $\sum_{l=1}^{n}|v_k| \leq |v|_2$ for $v \in \R^{dn}$. Concerning $I_4$ we get by conservation of energy
 \begin{align*}
  I_4 &\leq \frac{C}{n} \sum \limits_{l \neq l'} \int \limits_{0}^{T}\int \limits_{[0, \frac{1}{k}) \times S^{d-2}} \E( \theta |\mathcal{V}_l^k - \mathcal{V}_{l'}^k| )Q(d\theta)d\xi ds
      \leq C a_k \int \limits_{0}^{T} \E( |\mathcal{V}^k(s)|_2) ds = C T |v_0|_2 a_k
 \end{align*}
 with $a_k = \int_{[0, \frac{1}{k}) \times S^{d-2}}\theta Q(d\theta)d\xi$. 
 Concerning $I_2 + I_3$ we estimate
 \begin{align*}
  &\ \left(|(e_l - e_{l'})\alpha(\mathcal{V}_l, \mathcal{V}_{l'},\theta,\xi)| + |(e_l - e_{l'})\alpha(\mathcal{V}_{l}^k, \mathcal{V}_{l'}^k,\theta, \xi^k)|\right)|g_m'\sigma' \beta' - g_m\sigma \beta|
  \\ &\leq C \theta |\mathcal{V}_l - \mathcal{V}_{l'}| \left( g_m \sigma |\beta - \beta'| + \beta' \sigma |g_m - g_m'| + \beta' g_m' |\sigma - \sigma'| \right)
  \\ &\ \ \ + C \theta |\mathcal{V}_l^k - \mathcal{V}_{l'}^k| \left( g_m' \sigma' |\beta - \beta'| + \beta \sigma' |g_m - g_m'| + \beta g_m |\sigma - \sigma'| \right)
  \\ &\leq C \theta \left( |\mathcal{R}_l - \mathcal{R}_{l}^k| + |\mathcal{R}_{l'} - \mathcal{R}_{l'}^k| \right)
         + C \theta \left( |\mathcal{V}_l - \mathcal{V}_{l'}| + |\mathcal{V}_l^k - \mathcal{V}_{l'}^k| \right) |\sigma - \sigma'|
  \\ &\ \ \ + C \theta \left( |\mathcal{V}_l - \mathcal{V}_{l'}| + |\mathcal{V}_l^k - \mathcal{V}_{l'}^k| + |\mathcal{V}_l - \mathcal{V}_{l'}|^{1+ \gamma} + |\mathcal{V}_l^k - \mathcal{V}_{l'}^k|^{1+\gamma}\right)|\mathcal{V} - \mathcal{V}^k|_2,
 \end{align*}
 where we have used $|g_m - g_m'| \leq C(\1_{U_m}(\mathcal{V}) + \1_{U_m}(\mathcal{V}^k)) | \mathcal{V} - \mathcal{V}^k|_2$.
 For the second term we apply
 \begin{align*}
    \left( |v-u| + |v' - u'| \right) \left| \sigma(|v-u|) - \sigma(|v' - u'|) \right| 
   \leq C \left( |v - u|^{\gamma} + |v' - u'|^{\gamma}\right)\left( |v-v'| + |u - u'| \right)
 \end{align*}
 whereas for the third term we use
 \begin{align*}
  &\ |\mathcal{V}_l - \mathcal{V}_{l'}| + |\mathcal{V}_l^k - \mathcal{V}_{l'}^k| + |\mathcal{V}_l - \mathcal{V}_{l'}|^{1+ \gamma} + |\mathcal{V}_l^k - \mathcal{V}_{l'}^k|^{1+\gamma}
  \\ &\leq C( |\mathcal{V}_l| + |\mathcal{V}_{l'}| + |\mathcal{V}_l^k| + |\mathcal{V}_{l'}^k| +|\mathcal{V}_l|^{1+\gamma} + |\mathcal{V}_{l'}|^{1+\gamma} + |\mathcal{V}_l^k|^{1+\gamma} + |\mathcal{V}_{l'}^k|^{1+\gamma} )
  \\ &\leq C( 1 + |\mathcal{V}|_2^2 + |\mathcal{V}^k|_2^2 + |\mathcal{V}|_2^4 + |\mathcal{V}^k|_2^4) \leq C(1 + |v_0|_2^2 + |v_0|_2^4)
 \end{align*}
 to obtain $I_2 + I_3 \leq C \int_{0}^{T} \E \left( \sup_{s \in [0,t]}| \mathcal{V}^k(s) - \mathcal{V}(s)|_2 \right)dt$.
 This completes the proof of \eqref{IPS:EQ00}.
\end{proof}

\section{Moment estimates}

\subsection{The general moment formula}
Let $\rho^{(n)} \in \mathcal{P}(\R^{2dn})$ be the initial distribution. 
Denote by $\Pr_{\rho^{(n)}}$ the unique solution to the corresponding martingale problem $(L, C_c^1(\R^{2dn}), \rho^{(n)})$ and let 
$(\mathcal{R}^n(t),\mathcal{V}^n(t))$ be the weak solutions to \eqref{IPS:EQ01} on $(\Omega_n, \F^n, \F_t^n, \Pr^n)$ with law $\Pr_{\rho^{(n)}}$,
see Theorem \ref{IPS:TH00}.
Expectations w.r.t. $\Pr^n$ we denote by $\E^n$. If the initial condition is deterministic, say $\rho^{(n)} = \delta_{(r_0,v_0)}$ for $(r_0,v_0) \in \R^{2dn}$,
we also write $\Pr_{(r_0,v_0)}$ and $\E^n_{(r_0,v_0)}$ to indicate the dependence on the initial condition.
Finally let $\mathcal{X}^n = (\mathcal{X}_1^n, \dots, \mathcal{X}_n^n)$ where $\mathcal{X}_k^n = (\mathcal{R}_k^n, \mathcal{V}_k^n)$.
\begin{Lemma}\label{EXCH}
  Suppose that $\rho^{(n)} \in \mathcal{P}(\R^{2dn})$ is symmetric.
  Then $\mathcal{X}_1^n, \dots, \mathcal{X}_n^n$ are exchangeable as elements in $D(\R_+;\R^{2d})$.
  In particular $\mathcal{X}_k^n$, $k = 1, \dots, n$ are identically distributed.
\end{Lemma}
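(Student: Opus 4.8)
The plan is to exploit the invariance of the generator $L$ under relabelling of particles, together with the uniqueness statement of Theorem \ref{IPS:TH00}. For a permutation $\pi$ of $\{1,\dots,n\}$ let $T_\pi : \R^{2dn} \longrightarrow \R^{2dn}$ be the linear map that permutes the particle blocks, i.e. the $k$-th position block of $T_\pi(r,v)$ is $r_{\pi(k)}$ and the $k$-th velocity block is $v_{\pi(k)}$. Then $T_\pi$ is a linear homeomorphism of $\R^{2dn}$, so $F \longmapsto F \circ T_\pi$ is a bijection of $C_c^1(\R^{2dn})$ onto itself, and $\Phi_\pi : D(\R_+;\R^{2dn}) \longrightarrow D(\R_+;\R^{2dn})$, $(x(t))_{t\ge0}\longmapsto (T_\pi x(t))_{t\ge0}$, is a homeomorphism for the Skorokhod topology.

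The one genuinely computational step is to verify the intertwining relation
\[
 L(F \circ T_\pi) = (LF)\circ T_\pi, \qquad F \in C_c^1(\R^{2dn}).
\]
For the transport part $\sum_{k} v_k\cdot(\nabla_{r_k}F)$ this follows from the chain rule after noting that the variable $r_k$ occupies the $\pi^{-1}(k)$-th position block of $T_\pi r$. For the collision part one uses that the coefficient $\sigma(|v_k-v_j|)\beta(r_k-r_j)$ depends only on the unordered pair of particles, together with the equivariance of the post-collision configuration from \eqref{CONF}, namely $T_\pi(v_{kj}) = (T_\pi v)_{\pi^{-1}(k)\,\pi^{-1}(j)}$, which gives $\mathcal{J}_{kj}(F\circ T_\pi) = \big(\mathcal{J}_{\pi^{-1}(k)\pi^{-1}(j)}F\big)\circ T_\pi$; re-indexing the double sum $\sum_{k,j}$ over all pairs then yields the claim. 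This careful bookkeeping through $\mathcal{J}_{kj}$ and $v_{kj}$ is the main (and only) obstacle; everything else is routine.

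Next I would transfer the martingale problem along $\Phi_\pi$. Let $\Pr_{\rho^{(n)}}$ be the unique solution to $(L, C_c^1(\R^{2dn}), \rho^{(n)})$ from Theorem \ref{IPS:TH00}. For $F \in C_c^1(\R^{2dn})$, under $(\Phi_\pi)_*\Pr_{\rho^{(n)}}$ and the change of variables $x = T_\pi y$, the process $F(x(t)) - F(x(0)) - \int_0^t (LF)(x(s))\,ds$ equals $(F\circ T_\pi)(y(t)) - (F\circ T_\pi)(y(0)) - \int_0^t L(F\circ T_\pi)(y(s))\,ds$ by the intertwining relation, which is a $\Pr_{\rho^{(n)}}$-martingale since $F\circ T_\pi \in C_c^1(\R^{2dn})$; moreover the initial law under $(\Phi_\pi)_*\Pr_{\rho^{(n)}}$ is $(T_\pi)_*\rho^{(n)} = \rho^{(n)}$ because $\rho^{(n)}$ is symmetric. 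Hence $(\Phi_\pi)_*\Pr_{\rho^{(n)}}$ also solves $(L, C_c^1(\R^{2dn}), \rho^{(n)})$, and uniqueness forces $(\Phi_\pi)_*\Pr_{\rho^{(n)}} = \Pr_{\rho^{(n)}}$ for every permutation $\pi$.

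Finally, identifying $D(\R_+;\R^{2dn})$ with a subset of $D(\R_+;\R^{2d})^n$ via the measurable coordinate projections, the process $\mathcal{X}^n = (\mathcal{X}_1^n,\dots,\mathcal{X}_n^n)$ has law $\Pr_{\rho^{(n)}}$ and $\Phi_\pi$ acts on it by permuting the components $\mathcal{X}_k^n$. The identity $(\Phi_\pi)_*\Pr_{\rho^{(n)}} = \Pr_{\rho^{(n)}}$ for all $\pi$ therefore says precisely that $(\mathcal{X}_{\pi(1)}^n,\dots,\mathcal{X}_{\pi(n)}^n)$ and $(\mathcal{X}_1^n,\dots,\mathcal{X}_n^n)$ have the same law in $D(\R_+;\R^{2d})^n$, i.e. the processes $\mathcal{X}_1^n,\dots,\mathcal{X}_n^n$ are exchangeable; taking $\pi$ to be a transposition shows in particular that they are identically distributed.
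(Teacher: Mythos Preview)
Your proof is correct and follows exactly the approach the paper sketches in one line: the permutation invariance of $L$ (your intertwining relation $L(F\circ T_\pi)=(LF)\circ T_\pi$, which the paper phrases as ``$L$ maps symmetric functions onto symmetric functions'') combined with uniqueness for the martingale problem from Theorem~\ref{IPS:TH00}. You have simply written out the details of the standard argument the paper leaves implicit.
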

\begin{proof}
 This follows from the fact that $L$ maps symmetric functions onto symmetric functions and that the martingale problem $(L, C_c^1(\R^{2dn}), \rho^{(n)})$
 is well-posed.
\end{proof}
For $g \in C^1(\R^d)$ define $\langle \cdot \rangle_g: \R^{dn} \longrightarrow \R_+$ via $\langle v \rangle_g =  \sum_{k=1}^{n}g(v_k)$.
Then for $v_{k,j}^{\star} = v_k + \alpha(v_k,v_j,\theta,\xi)$ we obtain for the action of the Markov generator $L$ on $\langle \cdot \rangle_g$
 \begin{align}\label{EST:06}
 (L \langle \cdot \rangle_g)(r,v) &= \frac{1}{n} \sum \limits_{k,j = 1}^{n} \sigma(|v_k - v_j|) \beta(r_k - r_j) \int \limits_{\Xi}\left( g( v_{k,j}^{\star} ) + g( v_{j,k}^{\star}) - g( v_k ) - g( v_j \right) Q(d\theta)d\xi.
\end{align}
The next lemma is a simple consequence of the integration theory for Poisson random measures.
\begin{Lemma}\label{EST:03}
  Let $g \in C^1(\R^d)$ and $(\mathcal{R}^n(t), \mathcal{V}^n(t))$ be a weak solution to \eqref{IPS:EQ01} with 
  symmetric initial distribution $\rho^{(n)} \in \mathcal{P}(\R^{2dn})$. Then
 \begin{enumerate}
  \item[(a)] Set $\tau_M = \inf\{ t > 0 \ | \ \langle \mathcal{V}^n(t) \rangle \geq M \text{ or } \langle \mathcal{V}^n(t-) \rangle \geq M \}$, $M \geq 1$, then
  \begin{align*}
  \E^n( \langle \mathcal{V}^n(t \wedge \tau_M) \rangle_g ) &= \E^n(  \langle \mathcal{V}^n(0) \rangle_g ) + \E^n \left( \int \limits_{0}^{t \wedge \tau_M} (L\langle \cdot \rangle_g)(\mathcal{R}^n(s), \mathcal{V}^n(s))ds \right)
  \end{align*}
 \item[(b)] It holds that
 \[
 \E^n \left( \sup \limits_{ s\in [0,t]} g(\mathcal{V}_1^n(s)) \right) \leq \E^n( g(\mathcal{V}_1^n(0) ) ) + \E^n \left( \int \limits_{0}^{t} \mathcal{N}_g( \mathcal{R}^n(s), \mathcal{V}^n(s)) ds \right)
 \]
 where 
 \begin{align*}
  \mathcal{N}_g(r,v) &= \frac{1}{2n} \sum \limits_{k=1}^{n} \sigma(|v_1 - v_k|) \int \limits_{\Xi} | g(v_1 + \alpha(v_1,v_k, \theta,\xi)) - g(v_1)| Q(d\theta)d\xi
 \\ &\ \ \ + \frac{1}{2n} \sum \limits_{k=1}^{n} \sigma(|v_1 - v_k|) \int \limits_{\Xi} | g(v_1 - \alpha(v_1,v_k, \theta,\xi)) - g(v_1)| Q(d\theta)d\xi.
 \end{align*}
 \end{enumerate}
\end{Lemma}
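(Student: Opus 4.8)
The plan is to derive both identities from the change-of-variables formula applied to the pure-jump process $\mathcal{V}^n$ together with the compensation formula for the Poisson random measure $N$ (whose compensator is $d\widehat{N} = ds\, d\nu\, dz$). By Theorem \ref{IPS:TH00}, $(\mathcal{R}^n,\mathcal{V}^n)$ is a weak solution of \eqref{IPS:EQ01}; in particular $\mathcal{V}^n$ has no absolutely continuous part, and — since $|\alpha(v,u,\theta,\xi)| = |v-u|\sin(\theta/2) \leq \theta|v-u|$ by \eqref{PARA:00} and $\kappa < \infty$ by \eqref{BCOLL} — it has sample paths of locally finite variation. As $F(v) := \langle v \rangle_g = \sum_{k=1}^n g(v_k)$ depends only on the velocity coordinates, the change-of-variables formula for finite-variation jump processes gives
\[
 F(\mathcal{V}^n(t)) = F(\mathcal{V}^n(0)) + \int \limits_0^t \int \limits_{E \times \R_+} \left( F(\mathcal{V}^n(s-) + G(\mathcal{R}^n(s), \mathcal{V}^n(s-), z, \theta, \xi, l, l')) - F(\mathcal{V}^n(s-)) \right) dN(s,l,l',\theta,\xi,z).
\]
I would then stop at $\tau_M$ and check that, on $[0, t\wedge\tau_M]$, the integrand is integrable with respect to $d\widehat{N}$: the velocities are bounded there (by a constant depending on $M$), so the single-collision increment of $F$ is bounded by $C\theta\,|v_l-v_{l'}|$ (using \eqref{PARA:00} and $g \in C^1$); the $z$-integration produces the factor $\sigma(|v_l-v_{l'}|)\beta(r_l-r_{l'}) \leq \sigma(|v_l-v_{l'}|)$ with $|v_l-v_{l'}|\,\sigma(|v_l-v_{l'}|) \leq C$ bounded (by \eqref{EQ:07} and $1+\gamma > 0$, which rules out a blow-up as $v_l \to v_{l'}$); and $\int_{(0,\pi]}\theta\, Q(d\theta) = \kappa < \infty$. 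Hence the compensated integral is a genuine martingale on $[0, t\wedge\tau_M]$ and taking $\E^n$ replaces $dN$ by $d\widehat{N}$. It then remains to carry out the $z$- and $\nu$-integrations: the $z$-integral yields $\sigma(|v_l-v_{l'}|)\beta(r_l-r_{l'})$, and integrating over $E$ against $d\nu$ from \eqref{EQ:06} and using the symmetry of $\sigma$ and $\beta$, one recognises the resulting integrand as $(L\langle\cdot\rangle_g)(\mathcal{R}^n(s), \mathcal{V}^n(s))$, that is, the expression \eqref{EST:06} (with $\mathcal{V}^n(s-)$ replaced by $\mathcal{V}^n(s)$ for a.e.\ $s$, the jump times being countable). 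Finally, on the event $\{\langle\mathcal{V}^n(0)\rangle \geq M\}$ one has $\tau_M = 0$ and both sides trivially coincide, so the identity only needs to be proved on the complementary event, where everything is finite; this settles (a).

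\textbf{Part (b).} From the same jump representation applied to $t \mapsto g(\mathcal{V}_1^n(t))$ — writing $G_1$ for the block-$1$ component of $G$, which is nonzero only for events $(l,l')$ with exactly one of $l,l'$ equal to $1$ — one obtains, for every $t$, the pathwise estimate
\[
 \sup \limits_{s \in [0,t]} g(\mathcal{V}_1^n(s)) \leq g(\mathcal{V}_1^n(0)) + \int \limits_0^t \int \limits_{E \times \R_+} \left| g(\mathcal{V}_1^n(s-) + G_1(\mathcal{R}^n(s), \mathcal{V}^n(s-), z, \theta, \xi, l, l')) - g(\mathcal{V}_1^n(s-)) \right| dN.
\]
Since the integrand is nonnegative, its $\Pr^n$-expectation equals that of the same integral with $dN$ replaced by $d\widehat{N}$ — first on $[0, t\wedge\tau_M]$, then letting $M \to \infty$ by monotone convergence. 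Performing the $z$-integration (which produces the factor $\sigma(|v_l-v_{l'}|)\beta(r_l-r_{l'}) \leq \sigma(|v_l-v_{l'}|)$) and the $\nu$-integration over the only contributing pairs, $(l,l') = (1,k)$ and $(l,l') = (k,1)$ with $k = 1,\dots,n$, one obtains — after bounding $\beta \leq 1$ and relabelling the two resulting families of terms — precisely $\E^n \int_0^t \mathcal{N}_g(\mathcal{R}^n(s), \mathcal{V}^n(s))\, ds$, which is the claimed bound.

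\textbf{The main obstacle.} The only step needing genuine care is the compensation argument, namely that the stopped compensated Poisson integral in (a) is a true (and not merely local) martingale, and that the passage $M \to \infty$ in (b) is legitimate. Both rest on the bound $|F(\text{post-collision}) - F(\text{pre-collision})| \leq C\theta\,|v_l-v_{l'}|$ combined with $\kappa = \int_{(0,\pi]}\theta\, Q(d\theta) < \infty$, $\beta \leq 1$, and the fact that $|z|\,\sigma(|z|)$ is bounded on bounded sets because $1+\gamma > 0$ — i.e.\ the soft-potential singularity of $\sigma$ near the origin is exactly cancelled by the vanishing of $\alpha$ as the relative velocity tends to zero. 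Everything else is routine bookkeeping with the explicit compensator $\widehat{N}$ and the symmetries of $\sigma$ and $\beta$.
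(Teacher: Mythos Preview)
Your proposal is correct and follows essentially the same route as the paper: apply the change-of-variables formula to the pure-jump velocity process, localize with $\tau_M$, verify integrability of the compensator via the bound $|\alpha|\leq\theta|v_l-v_{l'}|$ together with $\kappa<\infty$ and boundedness of $|z|\sigma(|z|)$ on bounded sets, then take expectations and unfold the $\nu$-integration to recognise $(L\langle\cdot\rangle_g)$ in (a) and $\mathcal{N}_g$ in (b). The paper is slightly terser in (b) --- it passes directly from $dN$ to $d\widehat{N}$ for the nonnegative integrand without spelling out the $M\to\infty$ monotone-convergence step --- but your extra care there is harmless.
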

\begin{proof}
 We obtain, for each $M \geq 1$, with $dN =  dN(s,l,l',\theta,\xi,z)$
 \begin{align}\label{EQ:100}
   \langle \mathcal{V}^n(t \wedge \tau_M) \rangle_g 
 =   \langle \mathcal{V}^n(0) \rangle_g + \int \limits_{0}^{t \wedge \tau_M} \int \limits_{E \times \R_+} \left[ \langle  \mathcal{V}^n(s-) + G(\mathcal{R}^n(s), \mathcal{V}^n(s-), z,\theta,\xi,l,l') \rangle_g -  \langle \mathcal{V}^n(s-) \rangle_g \right] dN,
 \end{align}
 where the stochastic integral is defined pathwise and $G$ is given as in \eqref{IPS:EQ01}. Indeed, it holds that
 \begin{align*}
  &\ \E^n\left( \int \limits_{0}^{t \wedge \tau_M} \int \limits_{E \times \R_+}\left| \langle  \mathcal{V}^n(s) + G(\mathcal{R}^n(s), \mathcal{V}^n(s), z,\theta,\xi,l,l') \rangle_g -  \langle \mathcal{V}^n(s) \rangle_g \right| ds d\nu dz \right)
  \\ &\leq \frac{1}{n} \sum \limits_{k = 1}^{n} \E^n\left( \int \limits_{0}^{t \wedge \tau_M} \int \limits_{E \times \R_+} | g(\mathcal{V}_k^n(s) + G_k(\mathcal{R}^n(s), \mathcal{V}^n(s), z,\theta,\xi,l,l')) - g(\mathcal{V}_k^n(s))| ds d\nu dz \right)
  \\ &\leq \frac{C}{n} \sum \limits_{k=1}^{n} \E^n\left( \int \limits_{0}^{t \wedge \tau_M} \int \limits_{E \times \R_+} |G_k(\mathcal{R}^n(s), \mathcal{V}^n(s), z,\theta,\xi,l,l') | ds d\nu dz \right)
  \\ &\leq \frac{C}{n^2} \sum \limits_{k \neq j} \E^n \left( \int \limits_{0}^{t \wedge \tau_M} \int \limits_{\Xi} |\alpha(\mathcal{V}_k^n(s), \mathcal{V}_j^n(s), \theta,\xi) \sigma(|\mathcal{V}_k^n(s) - \mathcal{V}_j^n(s)|) Q(d\theta)d\xi ds \right) 
  \\ &\leq \frac{C}{n^2} \sum \limits_{k \neq j} \E^n \left( \int \limits_{0}^{t \wedge \tau_M} \left( \langle \mathcal{V}_k^n(s)\rangle^{1 + \gamma} + \langle \mathcal{V}_j^n(s) \rangle^{1 + \gamma} \right) ds \right) < \infty,
 \end{align*}
 where in the second inequality we have used that $\mathcal{V}_k(s)$ and $\mathcal{V}_k^n(s) + G_k(\mathcal{R}^n(s), \mathcal{V}^n(s), z,\theta,\xi,l,l')$
 are both bounded a.s. for $s \in [0,t \wedge \tau_M]$ and $G_k$ is defined in \eqref{GMDEF}. Using the definition of $\tau_M$, we find that the right-hand side is finite, i.e. we have shown \eqref{EQ:100}. Taking expectations in \eqref{EQ:100} gives 
 \begin{align*}
   &\ \E^n( \langle \mathcal{V}^n(t \wedge \tau_M) \rangle_g ) - \E^n(  \langle \mathcal{V}^n(0) \rangle_g )
 \\ &=  \frac{1}{n}\sum \limits_{k=1}^n \E^n\left( \int \limits_{0}^{t \wedge \tau_M} \int \limits_{E \times \R_+} ( g( \mathcal{V}_k^n + G_k) - g(\mathcal{V}_k^n) ) ds d\nu dz \right)
 \\ &=  \frac{1}{2n}\sum \limits_{k,j=1}^n \E^n\left( \int \limits_{0}^{t \wedge \tau_M} \int \limits_{E \times \R_+} \left[ 
 g( \mathcal{V}_k^n + \alpha(\mathcal{V}_k^n, \mathcal{V}_j^n, \theta,\xi) )  - g(\mathcal{V}_k^n) ) \right] \sigma(|\mathcal{V}_k^n - \mathcal{V}_j^n|) \beta(\mathcal{R}_k^n - \mathcal{R}_j^n) ds \right)
 \\ &\ \ \ + \frac{1}{2n}\sum \limits_{k,j=1}^n \E^n\left( \int \limits_{0}^{t \wedge \tau_M} \int \limits_{E \times \R_+} \left[ 
 g( \mathcal{V}_k^n - \alpha(\mathcal{V}_k^n, \mathcal{V}_j^n, \theta,\xi) )  - g(\mathcal{V}_j^n) \right] \sigma(|\mathcal{V}_k^n - \mathcal{V}_j^n|) \beta(\mathcal{R}_k^n - \mathcal{R}_j^n) ds \right)
 \\ &= \E^n \left( \int \limits_{0}^{t \wedge \tau_M} (L \langle \cdot \rangle_g)(\mathcal{R}^n(s), \mathcal{V}^n(s))ds \right).
 \end{align*}
 For the second part we use the same localization argument to obtain a.s.
 \begin{align*}
  g(\mathcal{V}_1^n(t)) &= g(\mathcal{V}_1^n(0)) + \int \limits_{0}^{t} \int \limits_{E \times \R_+}\left[ g( \mathcal{V}_1^n(s) + G_1(\mathcal{R}^n(s), \mathcal{V}^n(s), z,\theta,\xi,l,l')) - g(\mathcal{V}_1^n(s-)) \right] dN
 \end{align*}
 where the stochastic integral is defined pathwise. Taking the supremum on both sides gives
 \begin{align*}
  \sup \limits_{s \in [0,t]} g(\mathcal{V}_1^n(s)) 
  &\leq g(\mathcal{V}_1^n(0)) + \int \limits_{0}^{t} \int \limits_{E \times \R_+}\left| g( \mathcal{V}_1^n(s) + G_1(\mathcal{R}^n(s), \mathcal{V}^n(s), z,\theta,\xi,l,l')) - g(\mathcal{V}_1^n(s-)) \right| dN.
 \end{align*}
 Now taking expectations on both sides yields
 \begin{align*}
  &\ \E^n\left( \sup \limits_{s \in [0,t]} g(\mathcal{V}_1^n(s)) \right)  
 \\ &\leq \E^n \left( g(\mathcal{V}_1^n(0)) \right) 
  + \int \limits_{0}^{t} \int \limits_{E \times \R_+}\E^n \left( \left| g( \mathcal{V}_1^n(s) + G_1(\mathcal{R}^n(s), \mathcal{V}^n(s), z,\theta,\xi,l,l')) - g(\mathcal{V}_1^n(s-)) \right|  \right) ds d\nu dz 
 \\ &\leq  \frac{1}{2n} \sum \limits_{k=1}^{n} \int \limits_{0}^{t} \int \limits_{\Xi} \E^n \left( \left| g( \mathcal{V}_1^n(s) + \alpha(\mathcal{V}_1^n(s), \mathcal{V}_k^n(s),\theta,\xi)) - g(\mathcal{V}_1^n(s)) \right| \sigma(|\mathcal{V}_1^n(s) - \mathcal{V}_k^n(s)| ) \right) Q(d\theta)d\xi ds
 \\ &\ \ \ + \frac{1}{2n} \sum \limits_{k=1}^{n} \int \limits_{0}^{t} \int \limits_{\Xi} \E^n \left( \left| g( \mathcal{V}_1^n(s) - \alpha(\mathcal{V}_1^n(s), \mathcal{V}_k^n(s),\theta,\xi)) - g(\mathcal{V}_1^n(s)) \right| \sigma(|\mathcal{V}_1^n(s) - \mathcal{V}_k^n(s)| ) \right) Q(d\theta)d\xi ds.
 \end{align*}
\end{proof}
Below we apply this result for the particular case $g(v) = \langle v \rangle^p$ where, for simplicity of notation,
we let $\langle v \rangle_p := \langle v \rangle_g = \sum_{k=1}^{n}\langle v_k \rangle^p$.

\subsection{Soft potentials $\gamma \in (-1,0]$}
Here and below suppose that $\gamma \in (-1,0]$.
Let us begin with a Lyapunov-type estimate on the interacting particle system.
\begin{Lemma}
 For any $p \geq 1$ we can find a constant $C = C_p > 0$ such that for $g(v) = \langle v \rangle^p$
 \begin{align}\label{IPS:EQ04}
  \mathcal{N}_g(r,v) \leq C \frac{\langle v \rangle_{p+\gamma}}{n}
 \end{align}
\end{Lemma}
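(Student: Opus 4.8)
The plan is to estimate $\mathcal{N}_g$ one summand at a time. Writing $\alpha_k:=\alpha(v_1,v_k,\theta,\xi)$, we have $\mathcal{N}_g(r,v)=\frac{1}{2n}\sum_{k=1}^{n}T_k$ with
\[
 T_k:=\sigma(|v_1-v_k|)\int_{\Xi}\bigl(|g(v_1+\alpha_k)-g(v_1)|+|g(v_1-\alpha_k)-g(v_1)|\bigr)\,Q(d\theta)\,d\xi ,
\]
and $T_1=0$ since $\alpha(v_1,v_1,\theta,\xi)=0$. The core of the argument is the pointwise termwise bound $T_k\le C_p\bigl(\langle v_1\rangle^{p+\gamma}+\langle v_k\rangle^{p+\gamma}\bigr)$; summing over $k$ and dividing by $2n$ then produces $\mathcal{N}_g(r,v)\le C_p\bigl(\langle v_1\rangle^{p+\gamma}+\tfrac1n\langle v\rangle_{p+\gamma}\bigr)$, and the leftover $\langle v_1\rangle^{p+\gamma}$ is absorbed into $\tfrac1n\langle v\rangle_{p+\gamma}$ in the sole use of this estimate, Lemma \ref{EST:03}(b), thanks to the exchangeability of the particle system (Lemma \ref{EXCH}).

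For the termwise bound I would apply the mean value theorem to $g(w)=\langle w\rangle^{p}$: since $p\ge1$, $|\nabla g(\zeta)|=p\langle\zeta\rangle^{p-2}|\zeta|\le p\langle\zeta\rangle^{p-1}$ with $\langle\cdot\rangle^{p-1}$ nondecreasing, so, using \eqref{PARA:00} in the form $|\alpha_k|=|v_1-v_k|\sin(\theta/2)\le\tfrac{\theta}{2}|v_1-v_k|$,
\[
 |g(v_1\pm\alpha_k)-g(v_1)|\le p\,|\alpha_k|\,\langle|v_1|+|\alpha_k|\rangle^{p-1}\le\tfrac{p}{2}\,\theta\,|v_1-v_k|\,\langle|v_1|+|v_1-v_k|\rangle^{p-1}.
\]
Integrating over $\Xi$ uses only $\kappa=\int_{(0,\pi]}\theta\,Q(d\theta)<\infty$ from \eqref{BCOLL}, and multiplying by $\sigma(|v_1-v_k|)\le c_\sigma|v_1-v_k|^{\gamma}$ from \eqref{EQ:07} (valid since $\gamma\in(-1,0]$) gives $T_k\le C_p\,|v_1-v_k|^{1+\gamma}\,\langle|v_1|+|v_1-v_k|\rangle^{p-1}$. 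The key point is that the negative power $\gamma$ carried by $\sigma$ is exactly compensated by the factor $|v_1-v_k|$ coming from $|\alpha_k|$, leaving the positive exponent $1+\gamma$, so no singularity survives.

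To finish I would pass to the variables $\langle v_1\rangle,\langle v_k\rangle$ using $|v_1-v_k|\le|v_1|+|v_k|$: subadditivity of $t\mapsto t^{1+\gamma}$ on $[0,\infty)$ (valid as $1+\gamma\in(0,1]$) gives $|v_1-v_k|^{1+\gamma}\le\langle v_1\rangle^{1+\gamma}+\langle v_k\rangle^{1+\gamma}$, and the elementary bound $\langle a+b\rangle\le\sqrt2(\langle a\rangle+\langle b\rangle)$ (applied twice) gives $\langle|v_1|+|v_1-v_k|\rangle^{p-1}\le C_p(\langle v_1\rangle^{p-1}+\langle v_k\rangle^{p-1})$, using $p-1\ge0$. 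Multiplying these out, the diagonal terms are already $\langle v_1\rangle^{p+\gamma}$ and $\langle v_k\rangle^{p+\gamma}$, while the two cross terms $\langle v_1\rangle^{1+\gamma}\langle v_k\rangle^{p-1}$, $\langle v_k\rangle^{1+\gamma}\langle v_1\rangle^{p-1}$ are absorbed by Young's inequality with conjugate exponents $\tfrac{p+\gamma}{1+\gamma}$ and $\tfrac{p+\gamma}{p-1}$ (admissible since $1+\gamma>0$ and $p-1\ge0$; for $p=1$ there is nothing to do). This establishes the termwise bound, and hence \eqref{IPS:EQ04}.

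The computation is routine; the only point demanding care is this exponent bookkeeping — keeping the leftover power of $|v_1-v_k|$ equal to $1+\gamma$ and letting Young recombine $1+\gamma$ with $p-1$ into $p+\gamma$ — and I anticipate no real obstacle in the soft regime. (For hard potentials $\gamma\in(0,2]$, by contrast, this crude mean value estimate is too lossy because it produces a term carrying an extra factor of kinetic energy, and one must instead invoke Povzner-type inequalities together with the pathwise conservation laws \eqref{CONS}.)
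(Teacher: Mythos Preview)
Your approach is essentially the paper's: bound $|g(v_1\pm\alpha_k)-g(v_1)|$ via the mean value theorem by $C_p\,\theta\,|v_1-v_k|\bigl(\langle v_1\rangle^{p-1}+\langle v_k\rangle^{p-1}\bigr)$, multiply by $\sigma\le c_\sigma|v_1-v_k|^{\gamma}$, integrate against $Q(d\theta)d\xi$, and recombine exponents with Young's inequality \eqref{EQ:19}. The computations match.

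You are in fact more careful than the paper on one point. The termwise bound $T_k\le C_p\bigl(\langle v_1\rangle^{p+\gamma}+\langle v_k\rangle^{p+\gamma}\bigr)$, after summing, yields $\mathcal{N}_g(r,v)\le C_p\langle v_1\rangle^{p+\gamma}+C_p\,n^{-1}\langle v\rangle_{p+\gamma}$, and the first term does \emph{not} carry a factor $1/n$; the paper's proof simply writes ``$\le C\,n^{-1}\langle v\rangle_{p+\gamma}$'' at this step, which is not a valid pointwise inequality for arbitrary configurations $(v_1,\dots,v_n)$. Your observation that this leftover term is harmless in the only application (Proposition~\ref{IPS:TH01SP}) because one passes through the expectation and uses exchangeability, $\E^n\bigl[\langle\mathcal V_1^n(s)\rangle^{p+\gamma}\bigr]=\E^n\bigl[n^{-1}\langle\mathcal V^n(s)\rangle_{p+\gamma}\bigr]$, is exactly the right fix and is consistent with how the paper actually uses \eqref{IPS:EQ04}. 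So your argument is correct, and arguably more honest about what the estimate delivers.
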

\begin{proof}
 For any $p \geq 1$ there exists a constant $C_p > 0$ such that
\begin{align*}
   \int \limits_{S^{d-2}}| \langle v \pm \alpha(v,u,\theta,\xi)\rangle^p - \langle v \rangle^p| d\xi \leq C_p \theta |v-u| \left( \langle v \rangle^{p-1} + \langle u \rangle^{p-1} \right).
  \end{align*}
 Hence we obtain from \eqref{EQ:19} 
 \begin{align*}
  \mathcal{N}_g(r,v) \leq \frac{C}{n}\sum \limits_{k=1}^{n}|v_1- v_k|^{1+\gamma}\left( \langle v_1 \rangle^{p-1} + \langle v_k \rangle^{p-1} \right)
  \leq C \frac{\langle v \rangle_{p+\gamma}}{n}.
 \end{align*}
\end{proof}
Moment estimates uniformly in $n$ are given below.
\begin{Proposition}\label{IPS:TH01SP}
 Suppose that $\gamma \in (-1,0]$.
 Let $(\mathcal{R}^n(t), \mathcal{V}^n(t))$ be a weak solution to \eqref{IPS:EQ01} with 
 symmetric initial distribution $\rho^{(n)} \in \mathcal{P}(\R^{2dn})$.
 Then for each $p \geq 1$ there exists a constant $C_p > 0$ such that 
  \[
   \E^n\left( \sup \limits_{s \in [0,t]}\langle \mathcal{V}_1^n(s)\rangle^{p}\right) \leq \begin{cases} C_p\E^n\left( \langle \mathcal{V}_1^n(0)\rangle^p \right) + C_p t^{\frac{p}{|\gamma|}}, & \gamma \in (-1,0) \\ \ \\
 \E^n\left( \langle \mathcal{V}_1^n(0)\rangle^p \right)e^{C_pt}, & \gamma = 0 \end{cases}, \ \ t \geq 0,
  \]
 provided the right-hand side is finite.
\end{Proposition}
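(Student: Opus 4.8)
The plan is to turn the two identities of Lemma \ref{EST:03}, together with exchangeability (Lemma \ref{EXCH}), into a closed (in)equality for the single–particle moment $t\mapsto\E^n(\langle\mathcal{V}_1^n(t)\rangle^p)$, and then to close it by a Gronwall argument which is linear if $\gamma=0$ and nonlinear if $\gamma\in(-1,0)$. We may assume the asserted right–hand side is finite, i.e. $\mu_p:=\E^n(\langle\mathcal{V}_1^n(0)\rangle^p)<\infty$. Take $g(v)=\langle v\rangle^p\in C^1(\R^d)$, so $\langle v\rangle_g=\langle v\rangle_p$, and recall $\langle v\rangle_{p+\gamma}=\sum_{k=1}^n\langle v_k\rangle^{p+\gamma}$. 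Starting from \eqref{EST:06}, I would insert the elementary bound $\int_{S^{d-2}}|\langle v\pm\alpha(v,u,\theta,\xi)\rangle^p-\langle v\rangle^p|\,d\xi\le C_p\theta|v-u|(\langle v\rangle^{p-1}+\langle u\rangle^{p-1})$ used already for \eqref{IPS:EQ04}, together with $\sigma(|z|)\le c_\sigma|z|^{\gamma}$ (valid for $\gamma\le 0$), $\beta\le 1$ and $\kappa<\infty$; since $1+\gamma\in(0,1]$ and $p\ge 1$ we have $|v_k-v_j|^{1+\gamma}\le C(\langle v_k\rangle^{1+\gamma}+\langle v_j\rangle^{1+\gamma})$ and Young's inequality \eqref{EQ:19} gives $\langle v_k\rangle^{1+\gamma}\langle v_j\rangle^{p-1}\le C(\langle v_k\rangle^{p+\gamma}+\langle v_j\rangle^{p+\gamma})$. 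Summing over $k,j$ yields the key estimate $|(L\langle\cdot\rangle_p)(r,v)|\le C_p\,\langle v\rangle_{p+\gamma}$ with $C_p$ independent of $n$; the same computation underlies the bound $\mathcal{N}_g(r,v)\le C_p\langle v\rangle_{p+\gamma}/n$ of \eqref{IPS:EQ04}.

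Next, with $\tau_M$ the (symmetric) stopping time of Lemma \ref{EST:03}(a), set $\phi_M(t)=\E^n(\langle\mathcal{V}_1^n(t\wedge\tau_M)\rangle^p)$, which is bounded. By exchangeability $\E^n(\langle\mathcal{V}^n(t\wedge\tau_M)\rangle_p)=n\phi_M(t)$ and $\E^n(\langle\mathcal{V}^n(s\wedge\tau_M)\rangle_{p+\gamma})=n\,\E^n(\langle\mathcal{V}_1^n(s\wedge\tau_M)\rangle^{p+\gamma})$, so that Lemma \ref{EST:03}(a), the generator bound above and Fubini give $\phi_M(t)\le\mu_p+C_p\int_0^t\E^n(\langle\mathcal{V}_1^n(s\wedge\tau_M)\rangle^{p+\gamma})\,ds$. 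If $\gamma=0$ the integrand is $\phi_M(s)$ and the linear Gronwall lemma gives $\phi_M(t)\le\mu_p e^{C_p t}$. If $\gamma\in(-1,0)$, Jensen's inequality (concavity of $x\mapsto x^{(p+\gamma)/p}$) bounds the integrand by $\phi_M(s)^{(p+\gamma)/p}$, and a nonlinear Gronwall argument — writing $\Phi$ for the right–hand side, $\tfrac{d}{dt}\Phi^{|\gamma|/p}\le C_p|\gamma|/p$ — yields $\phi_M(t)\le(\mu_p^{|\gamma|/p}+C_p t)^{p/|\gamma|}\le C_p(\mu_p+t^{p/|\gamma|})$. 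Both bounds are uniform in $M$; since càdlàg paths are locally bounded, $\tau_M\uparrow\infty$ a.s., and Fatou's lemma gives $\E^n(\langle\mathcal{V}_1^n(t)\rangle^p)\le\mu_p e^{C_p t}$ (resp. $\le C_p(\mu_p+t^{p/|\gamma|})$).

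To obtain the supremum in time, I apply Lemma \ref{EST:03}(b) with $g=\langle\cdot\rangle^p$, use $\mathcal{N}_g\le C_p\langle v\rangle_{p+\gamma}/n$ and exchangeability to get $\E^n(\sup_{s\in[0,t]}\langle\mathcal{V}_1^n(s)\rangle^p)\le\mu_p+C_p\int_0^t\E^n(\langle\mathcal{V}_1^n(s)\rangle^{p+\gamma})\,ds$, and then bound the integrand by the time–marginal estimate from the previous step (for $\gamma<0$ after one further use of Jensen, giving $\E^n(\langle\mathcal{V}_1^n(s)\rangle^{p+\gamma})\le C_p(\mu_p^{(p+\gamma)/p}+s^{(p+\gamma)/|\gamma|})$). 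Integrating, using the identity $(p+\gamma)/|\gamma|+1=p/|\gamma|$, and Young's inequality to absorb the mixed term $t\,\mu_p^{(p+\gamma)/p}$ into $C_p(\mu_p+t^{p/|\gamma|})$, gives the claim for $\gamma\in(-1,0)$; for $\gamma=0$ one integrates $\mu_p e^{C_p s}$ and reabsorbs into $\mu_p e^{C_p' t}$. No circularity arises here because the right–hand side of Lemma \ref{EST:03}(b), once estimated by the previous step, is manifestly finite.

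The only genuinely non–routine points are the following. First, the Gronwall argument must be run on the truncated, hence finite, quantity $\phi_M$ — this is exactly the role of the stopping time $\tau_M$ in Lemma \ref{EST:03}(a) — and one may pass to $M\to\infty$ only afterwards, via Fatou. Second, obtaining the sharp polynomial rate $t^{p/|\gamma|}$ in the case $\gamma<0$ relies on not discarding the gain in the exponent: the crude bound $\langle v\rangle_{p+\gamma}\le\langle v\rangle_p$ (used in the introduction's heuristic) only yields exponential growth, whereas the interpolation $\E^n(\langle\mathcal{V}_1^n\rangle^{p+\gamma})\le(\E^n\langle\mathcal{V}_1^n\rangle^p)^{(p+\gamma)/p}$, which encodes the damping of high velocities by a soft potential, feeds a genuinely sublinear nonlinearity into the Gronwall lemma and hence produces only polynomial growth of order $p/|\gamma|$.
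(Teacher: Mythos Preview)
Your proof is correct and uses the same ingredients as the paper (Lemma \ref{EST:03}(a)--(b), exchangeability, the Lyapunov bound \eqref{IPS:EQ04}, Jensen, and the Bihari--LaSalle inequality). The only difference is in the order of operations: the paper first uses the crude bound $\langle v\rangle_{p+\gamma}\le\langle v\rangle_p$ together with linear Gronwall merely to establish \emph{finiteness} of the marginal moments, and then applies Jensen and Bihari--LaSalle directly to the \emph{supremum} quantity $s\mapsto\E^n(\sup_{r\le s}\langle\mathcal V_1^n(r)\rangle^p)$ via the closed inequality obtained from Lemma \ref{EST:03}(b); you instead apply Jensen and Bihari--LaSalle already at the level of the time-marginal $\phi_M$, obtain the sharp polynomial bound there, and then feed this bound into the supremum inequality and integrate explicitly. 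Both routes yield the stated estimate; the paper's route gives the constant-free form $\mu_p e^{C_pt}$ for $\gamma=0$ directly, whereas your final step for $\gamma=0$ produces $\mu_p(1+e^{C_pt})\le 2\mu_p e^{C_pt}$, which is harmless but does not literally match the stated bound without a multiplicative constant.
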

\begin{proof}
 For $N \geq 1$ let $\tau_N = \inf\{ t > 0 \ | \ \langle \mathcal{V}^n(t) \rangle \geq N \text{ or } \langle \mathcal{V}^n(t-) \rangle \geq N \}$.
 Then by Lemma \ref{EST:03}.(b) and \eqref{EST:06}
 \begin{align*}
  \E^n\left( \frac{\langle \mathcal{V}^n(t \wedge \tau_N) \rangle_p}{n} \right) 
&\leq \E^n\left( \frac{\langle \mathcal{V}^n(0)\rangle_p}{n}\right)
  + C_p \int \limits_0^t \E^n \left(\frac{\langle \mathcal{V}^n(s \wedge \tau_N)\rangle_{p+\gamma}}{n} \right) ds
  \\ &\leq \E^n\left( \frac{\langle \mathcal{V}^n(0)\rangle_p}{n}\right)
  + C_p \int \limits_0^t \E^n \left(\frac{\langle \mathcal{V}^n(s \wedge \tau_N)\rangle_{p}}{n} \right) ds
 \end{align*}
 and by the Gronwall lemma $\E^n\left( \frac{\langle \mathcal{V}^n(t \wedge \tau_N)  \rangle_p }{n} \right) \leq \E^n\left( \frac{\langle \mathcal{V}^n(0) \rangle_p }{n} \right)e^{C_pt} < \infty$. Taking the limit $N \to \infty$ yields 
 $\E^n\left( \frac{\langle \mathcal{V}^n(t)\rangle_{p}}{n} \right) \leq \E^n\left( \frac{\langle \mathcal{V}^n(0)\rangle_{p} }{n}\right)e^{C_p t} < \infty$.
 Again by Lemma \ref{EST:03}.(a) and \eqref{IPS:EQ04}
 \begin{align*}
   \E^n\left( \sup \limits_{s \in [0,t]}\langle\mathcal{V}_1^n(s)\rangle^{p} \right) &\leq \E^n\left( \frac{\langle \mathcal{V}^n(0)\rangle_{p}}{n}\right) + C \int \limits_{0}^{t}\E^n\left( \frac{\langle \mathcal{V}^n(s )\rangle_{p+\gamma}}{n} \  \right)ds
 \end{align*}
 where the right-hand side is finite due to $\langle \mathcal{V}^n(s )\rangle_{p+\gamma} \leq \langle \mathcal{V}^n(s )\rangle_{p}$ and previous estimates.
 Since $\mathcal{X}^n_k,\ k=1,\dots,n$ are identically distributed we get 
 \begin{align}\label{INEQUALITY}
  \notag \E^n\left( \frac{\langle \mathcal{V}^n(s )\rangle_{p+\gamma}}{n}   \right)
  &=  \E^n\left( \langle \mathcal{V}_1^n(s )\rangle^{p+\gamma}  \right)
  \\ &\leq  \E^n\left(  \sup \limits_{r \in [0,s]} \langle \mathcal{V}_1^n(r )\rangle^{p+\gamma}  \right)
  \leq \begin{cases}  \E^n\left(  \sup \limits_{r \in [0,s]} \langle \mathcal{V}_1^n(r )\rangle^{p}  \right)^{1 - \frac{|\gamma|}{p}}, & \gamma \neq 0
 \\  \E^n\left(  \sup \limits_{r \in [0,s]} \langle \mathcal{V}_1^n(r )\rangle^{p}  \right), & \gamma = 0 \end{cases}.
 \end{align}
 For $\gamma = 0$ we deduce the assertion from the Gronwall lemma. 
 For $\gamma \in (-1,0)$ we appply the Bihari-LaSalle inequality (see Lemma \ref{LASALLE}).
\end{proof}

\begin{Remark}\label{REMARK}
 In \eqref{INEQUALITY}, for $\gamma \neq 0$, we may bound
 \[
 \E^n\left(  \sup \limits_{r \in [0,s]} \langle \mathcal{V}_1^n(r )\rangle^{p+\gamma}  \right)
 \leq  \E^n\left(  \sup \limits_{r \in [0,s]} \langle \mathcal{V}_1^n(r )\rangle^{p}  \right)
 \]
 as well. Hence, a weaker conclusion results, namely
 \[
  \E^n\left(  \sup \limits_{r \in [0,s]} \langle \mathcal{V}_1^n(r )\rangle^{p}  \right) \leq \E^n\left(  \langle \mathcal{V}_1^n(0 )\rangle^{p}  \right) e^{C_p t}
 \]
 for all $\gamma \in (-1,0]$ and $p \geq 1$.
\end{Remark}

\subsection{Hard potentials $\gamma \in (0,2]$}
In this part we suppose that $\gamma \in (0,2]$.
The following Povzner-type inequality is basically contained in \cite[Lemma 3.6]{LM12}.
However, we suppose that $p \geq 2$ and hence get a less sharp estimate. The proof works in exactly the same way.
\begin{Lemma}\label{FPE:LEMMA05}
 For all $\theta \in (0,\pi]$ and $p \geq 2$
  \begin{align*}
   &\ \int \limits_{S^{d-2}}\left( \langle v + \alpha(v,u,\theta,\xi) \rangle^{2p} + \langle u - \alpha(v,u,\theta,\xi)\rangle^{2p} - \langle v \rangle^{2p} - \langle u \rangle^{2p} \right)d\xi
   \\ &\leq - \frac{\sin^2(\theta)}{2}\left( \langle v \rangle^{2p} + \langle u \rangle^{2p}\right) 
   + C_p \sin^2(\theta) \sum \limits_{k=1}^{\lfloor \frac{p+1}{2} \rfloor} \left( \langle v \rangle^{2k}\langle u \rangle^{2p-2k} + \langle v \rangle^{2p - 2k}\langle u \rangle^{2k}\right),
  \end{align*}
  where $\lfloor x \rfloor \in \Z$ is defined by $\lfloor x \rfloor \leq x < \lfloor x \rfloor +1$ and $C_p > 0$ is some constant.
\end{Lemma}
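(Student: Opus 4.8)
The plan is to reduce the inequality to a one‑dimensional statement about the scalars $a:=\langle v\rangle^2$ and $b:=\langle u\rangle^2$, exploiting that an elastic collision moves the pair $(\langle v\rangle^2,\langle u\rangle^2)$ by a convex combination (a ``rotation'') plus a mean‑zero perturbation. First I would compute, from \eqref{PARA:01}, \eqref{FPE:00}, \eqref{N} and the relations $(\Gamma(u-v,\xi),u-v)=0$ and $|\Gamma(u-v,\xi)|=|u-v|$, that for $\lambda:=\sin^2(\theta/2)\in[0,1]$
\begin{align*}
 \langle v+\alpha(v,u,\theta,\xi)\rangle^2 &= (1-\lambda)\langle v\rangle^2+\lambda\langle u\rangle^2+c, & \langle u-\alpha(v,u,\theta,\xi)\rangle^2 &= (1-\lambda)\langle u\rangle^2+\lambda\langle v\rangle^2-c,
\end{align*}
where $c=c(v,u,\theta,\xi):=\sin(\theta)\,(v,\Gamma(u-v,\xi))=\sin(\theta)\,(u,\Gamma(u-v,\xi))$; the two expressions for $c$ coincide because $(u-v,\Gamma(u-v,\xi))=0$, which is also why the perturbations are exactly $+c$ and $-c$, consistent with conservation of $\langle v\rangle^2+\langle u\rangle^2$. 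Two properties of $c$ will be used: $\int_{S^{d-2}}c\,d\xi=0$, since $\Gamma(u-v,-\xi)=-\Gamma(u-v,\xi)$ by the explicit formula for $\Gamma$ and $d\xi$ is invariant under $\xi\mapsto-\xi$; and $c^2\le 4\sin^2(\theta)\,\langle v\rangle^2\langle u\rangle^2$, obtained by writing $(v,\Gamma(u-v,\xi))=(v_\perp,\Gamma(u-v,\xi))$ with $v_\perp$ the component of $v$ orthogonal to $u-v$, applying Cauchy--Schwarz, and using $v_\perp=u_\perp$, $|v_\perp|^2\le\min(|v|^2,|u|^2)$, $|u-v|^2\le 2(|u|^2+|v|^2)$ and $\min(|u|^2,|v|^2)(|u|^2+|v|^2)\le 2|u|^2|v|^2$.

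Next, set $X:=(1-\lambda)a+\lambda b$ and $Y:=(1-\lambda)b+\lambda a$, so that $X,Y\ge 1$ and $X+Y=a+b$, and split $\langle v+\alpha\rangle^{2p}+\langle u-\alpha\rangle^{2p}=(X+c)^p+(Y-c)^p$. A second‑order Taylor expansion of $t\mapsto t^p$ about $X$ and $Y$ (here the hypothesis $p\ge 2$ is used, so that the base points stay $\ge 1$ and the remainder is $O(c^2)$) gives
\[
 (X+c)^p+(Y-c)^p\le X^p+Y^p+p\,c\,(X^{p-1}-Y^{p-1})+C_p\big((X^{p-2}+Y^{p-2})c^2+|c|^p\big).
\]
Integrating over $\xi\in S^{d-2}$, the first‑order term vanishes since $\int c\,d\xi=0$; the remainder is handled by the bound on $c^2$, by $X,Y\le a+b$, by $\sin^p(\theta)\le\sin^2(\theta)$, and by elementary inequalities (including AM--GM to rewrite $(ab)^{p/2}$ as $a^kb^{p-k}+a^{p-k}b^k$ with $k=\lfloor(p+1)/2\rfloor$ when $p$ is odd), producing a contribution $\le C_p\sin^2(\theta)\sum_{k=1}^{\lfloor(p+1)/2\rfloor}(a^kb^{p-k}+a^{p-k}b^k)$ after absorbing the factor $|S^{d-2}|$. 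It remains to control the main term $|S^{d-2}|(X^p+Y^p)$.

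For the main term I would introduce $E:=a+b$, $\delta:=(a-b)/2$ and $\mu:=1-2\lambda$, so that $a=\tfrac E2+\delta$, $b=\tfrac E2-\delta$, $X=\tfrac E2+\mu\delta$, $Y=\tfrac E2-\mu\delta$, with $|\mu|\le 1$ and $1-\mu^2=\sin^2(\theta)$. Writing $F(s):=(\tfrac E2+s)^p+(\tfrac E2-s)^p=2\sum_{m\ge 0}\binom{p}{2m}(\tfrac E2)^{p-2m}s^{2m}$, one has $F(\delta)=a^p+b^p$, $F(\mu\delta)=X^p+Y^p$ and, since $1-\mu^{2m}\ge 1-\mu^2$ for $m\ge 1$,
\[
 F(\delta)-F(\mu\delta)=2\sum_{m\ge 1}\binom{p}{2m}\Big(\tfrac E2\Big)^{p-2m}\delta^{2m}\,(1-\mu^{2m})\ \ge\ (1-\mu^2)\big(F(\delta)-F(0)\big),
\]
i.e. $X^p+Y^p\le(1-\sin^2(\theta))(a^p+b^p)+\sin^2(\theta)\,2^{1-p}E^p$. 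Expanding $2^{1-p}E^p=2^{1-p}(a+b)^p$ by the binomial theorem and using $2^{1-p}\le\tfrac12$ for $p\ge 2$, the two extreme terms contribute $\tfrac12(a^p+b^p)$ and the remaining ones are cross terms $\le C_p\sum_{k=1}^{\lfloor(p+1)/2\rfloor}(a^kb^{p-k}+a^{p-k}b^k)$; hence $X^p+Y^p-(a^p+b^p)\le-\tfrac{\sin^2(\theta)}{2}(a^p+b^p)+C_p\sin^2(\theta)\sum_{k=1}^{\lfloor(p+1)/2\rfloor}(a^kb^{p-k}+a^{p-k}b^k)$. Combining this with the previous paragraph, translating $a=\langle v\rangle^2$, $b=\langle u\rangle^2$, and using $|S^{d-2}|\ge 1$, yields the claim.

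The estimates of the first two paragraphs are routine; the heart of the argument is the last paragraph, where the coercive term $-\tfrac{\sin^2(\theta)}{2}(\langle v\rangle^{2p}+\langle u\rangle^{2p})$ is extracted. The inequality $1-\mu^{2m}\ge 1-\mu^2$ is what turns the ``contraction towards the mean'' $|\mu\delta|\le|\delta|$ into a quantitative gain proportional to $\sin^2(\theta)$, while the elementary bound $2^{1-p}\le\tfrac12$ — which fails precisely for $1<p<2$ — is what allows the constant $1/2$; this is exactly where the restriction $p\ge 2$ is used and where the estimate becomes less sharp than in \cite[Lemma 3.6]{LM12}. The only other point requiring care is the bookkeeping collecting all cross terms, from both the $c$‑remainder and the expansion of $(a+b)^p$, into the single sum $\sum_{k=1}^{\lfloor(p+1)/2\rfloor}$.
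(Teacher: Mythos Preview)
The paper does not give its own proof of this lemma; it simply states that the result ``is basically contained in \cite[Lemma~3.6]{LM12}'' and that ``the proof works in exactly the same way.'' Your argument is precisely the standard Povzner-type computation used there: express $\langle v^\star\rangle^2,\langle u^\star\rangle^2$ as a convex combination of $\langle v\rangle^2,\langle u\rangle^2$ plus an antisymmetric correction $c$ with $\int_{S^{d-2}}c\,d\xi=0$, Taylor-expand $(\,\cdot\,)^p$ to second order, and extract the coercive term from the ``convex-combination'' part via the identity $1-\mu^2=\sin^2\theta$. So your approach coincides with the one the paper refers to.

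Two small points to tighten. First, your step $F(\delta)-F(\mu\delta)\ge(1-\mu^2)\bigl(F(\delta)-F(0)\bigr)$, argued via the binomial series and ``$\binom{p}{2m}\ge 0$, $1-\mu^{2m}\ge 1-\mu^2$,'' is clean for integer $p\ge 2$ but not literally valid for non-integer $p$ (some $\binom{p}{2m}$ are negative and the termwise inequality can go the wrong way); the statement as written allows real $p\ge 2$, so either restrict to integer $p$ (which is all the applications need, after interpolation) or replace this step by the convexity argument used in \cite{LM12}. Second, the final sentence ``using $|S^{d-2}|\ge 1$'' is not true in high dimension ($|S^{d-2}|\to 0$ as $d\to\infty$); what you actually obtain is the stated inequality with $-\tfrac{|S^{d-2}|}{2}\sin^2\theta$ in place of $-\tfrac12\sin^2\theta$, which is in fact the form one needs downstream (everything is eventually integrated against $Q(d\theta)d\xi$). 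Neither issue affects the structure of your proof.
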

We may now deduce from the above a similar Lyapunov-type estimate.
\begin{Lemma}
 The following assertions hold.
 \begin{enumerate}
  \item[(a)] For $\gamma \in [0,2]$ and any $2p \geq 4$ we find a constant $C_p > 0$ such that
 \begin{align}\label{IPS:EQ02}
  \frac{1}{n^2}\sum \limits_{k,j=1}^{n}\sigma(|v_k-v_j|)\int \limits_{\Xi}\left( \langle v_k^{\star} \rangle^{2p} + \langle v_j^{\star} \rangle^{2p} - \langle v_k \rangle^{2p} - \langle v_j \rangle^{2p} \right)Q(d\theta)d\xi
  \leq C_p \frac{\langle v \rangle_2}{n} \frac{\langle v \rangle_{2p-2 + \gamma}}{n}
 \end{align}
 where $v_k^{\star} = v_k + \alpha(v_k,v_j,\theta,\xi)$, $v_j^{\star} = v_j - \alpha(v_k,v_j,\theta,\xi)$ 
  \item[(b)] For $\gamma \in [0,2]$ and any $p \geq 1$ we find a constant $C_p > 0$ such that for $g(v) = \langle v \rangle^p$
  \begin{align}\label{IPS:EQ03}
    \mathcal{N}_g(r,v) \leq C_p \frac{\langle v \rangle_{2p + \gamma}}{n}.
  \end{align}
 \end{enumerate}
\end{Lemma}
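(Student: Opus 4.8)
\emph{Proof strategy.} The plan is to obtain both estimates from the elementary increment bounds for $\langle\,\cdot\,\rangle^{2p}$ and $\langle\,\cdot\,\rangle^{p}$, combined with the growth assumption \eqref{EQ:07} on $\sigma$ (here $\gamma\in(0,2]$), in the same spirit as the soft-potential Lyapunov estimate \eqref{IPS:EQ04}.

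For part (a), the starting point is the Povzner-type inequality of Lemma \ref{FPE:LEMMA05} applied with $v\mapsto v_k$, $u\mapsto v_j$. Integrating it in $\theta$ against $Q(d\theta)$ is harmless since $\sin^2\theta\le\theta^2\le\pi\theta$ on $(0,\pi]$, so $\int_{(0,\pi]}\sin^2\theta\,Q(d\theta)\le\pi\kappa<\infty$; the resulting nonpositive term $-\frac12\int_{(0,\pi]}\sin^2\theta\,Q(d\theta)\,|S^{d-2}|(\langle v_k\rangle^{2p}+\langle v_j\rangle^{2p})$ is simply discarded, as it is only beneficial for \eqref{IPS:EQ02}. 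What remains is an upper bound of the form $C_p\sum_{l=1}^{\lfloor(p+1)/2\rfloor}\big(\langle v_k\rangle^{2l}\langle v_j\rangle^{2p-2l}+\langle v_k\rangle^{2p-2l}\langle v_j\rangle^{2l}\big)$. I would then multiply by $\sigma(|v_k-v_j|)\le c_\sigma\langle v_k-v_j\rangle^{\gamma}\le C(\langle v_k\rangle^{\gamma}+\langle v_j\rangle^{\gamma})$ (using \eqref{EQ:07} and $\langle v_k-v_j\rangle\le\sqrt2(\langle v_k\rangle+\langle v_j\rangle)$), sum over $k,j$, and divide by $n^2$; this bounds the left-hand side of \eqref{IPS:EQ02} by $\frac{C_p}{n^2}$ times a finite sum of products $\langle v\rangle_a\langle v\rangle_b$ with $a+b=2p+\gamma$. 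The step that needs care -- and which I expect to be the only real point -- is the exponent bookkeeping: one checks that every exponent occurring lies in $[\,2,\,2p-2+\gamma\,]$, using $2l\ge2$ and $2p-2l\ge 2p-2\lfloor(p+1)/2\rfloor\ge2$ for $2p\ge4$, and noting that adjoining a factor $\langle\,\cdot\,\rangle^{\gamma}$ with $\gamma\ge0$ preserves both bounds. Granting this, the estimate is finished by the elementary inequality $\langle v\rangle_a\langle v\rangle_b\le 2\langle v\rangle_2\langle v\rangle_{a+b-2}$, valid for all $a,b\ge2$: for a fixed pair $(k,j)$, if $\langle v_k\rangle\le\langle v_j\rangle$ then $\langle v_k\rangle^a\langle v_j\rangle^b=\langle v_k\rangle^2\langle v_k\rangle^{a-2}\langle v_j\rangle^b\le\langle v_k\rangle^2\langle v_j\rangle^{a+b-2}$, and symmetrically (using $b\ge2$) in the opposite case, so that $\langle v_k\rangle^a\langle v_j\rangle^b\le\langle v_k\rangle^2\langle v_j\rangle^{a+b-2}+\langle v_k\rangle^{a+b-2}\langle v_j\rangle^2$; summing over $k,j$ gives the claim with $a+b-2=2p-2+\gamma$. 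Collecting the finitely many terms yields the constant $C_p$ in \eqref{IPS:EQ02}.

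For part (b), I would follow the derivation of \eqref{IPS:EQ04} verbatim. The mean value theorem applied to $\langle\,\cdot\,\rangle^{p}$ (whose gradient is bounded by $p\langle\,\cdot\,\rangle^{p-1}$), together with $|\alpha(v,u,\theta,\xi)|=|v-u|\sin(\theta/2)\le\theta|v-u|$ and $\langle v+\alpha\rangle\le\langle v\rangle+\langle u\rangle$ (a consequence of $|v^\star|^2\le|v|^2+|u|^2$; compare \eqref{FPE:02}), gives $\int_{S^{d-2}}|\langle v\pm\alpha(v,u,\theta,\xi)\rangle^{p}-\langle v\rangle^{p}|\,d\xi\le C_p\,\theta|v-u|(\langle v\rangle^{p-1}+\langle u\rangle^{p-1})$. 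Integrating in $\theta$ (so that $\int_{(0,\pi]}\theta\,Q(d\theta)=\kappa$) and inserting into the definition of $\mathcal{N}_g$ gives $\mathcal{N}_g(r,v)\le\frac{C_p}{n}\sum_{k=1}^{n}\sigma(|v_1-v_k|)|v_1-v_k|(\langle v_1\rangle^{p-1}+\langle v_k\rangle^{p-1})$. Finally I would use $\sigma(|v_1-v_k|)|v_1-v_k|\le c_\sigma\langle v_1-v_k\rangle^{1+\gamma}\le C(\langle v_1\rangle^{1+\gamma}+\langle v_k\rangle^{1+\gamma})$, expand the product, and bound the mixed terms $\langle v_1\rangle^{1+\gamma}\langle v_k\rangle^{p-1}$ and $\langle v_1\rangle^{p-1}\langle v_k\rangle^{1+\gamma}$ by means of the Young inequality \eqref{EQ:19} (with $q=p$); this, together with $\langle\,\cdot\,\rangle^{p+\gamma}\le\langle\,\cdot\,\rangle^{2p+\gamma}$, produces the right-hand side $C_p\langle v\rangle_{2p+\gamma}/n$ of \eqref{IPS:EQ03} and completes the proof.
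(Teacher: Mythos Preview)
Your argument is correct and follows essentially the same route as the paper's proof. Part (b) is identical to the paper's one-line sketch (``same arguments as \eqref{IPS:EQ04}''). For part (a) you start exactly as the paper does---Povzner inequality, drop the negative term, absorb $\sigma(|v_k-v_j|)\le C(\langle v_k\rangle^{\gamma}+\langle v_j\rangle^{\gamma})$---and arrive at the same finite sum of products $\langle v\rangle_{s_1}\langle v\rangle_{s_2}$ with $s_1+s_2=2p+\gamma$ and $s_1,s_2\ge 2$.

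The one genuine difference is the final interpolation step. The paper invokes the H\"older-type moment inequality $\langle v\rangle_r\le\langle v\rangle_2^{(s-r)/(s-2)}\langle v\rangle_s^{(r-2)/(s-2)}$ (following \cite{LM12}) to deduce $\langle v\rangle_{s_1}\langle v\rangle_{s_2}\le\langle v\rangle_2\langle v\rangle_{2p-2+\gamma}$; this uses both the lower bound $s_i\ge2$ and the upper bound $s_i\le 2p-2+\gamma$. Your pointwise argument---$\langle v_k\rangle^{a}\langle v_j\rangle^{b}\le\langle v_k\rangle^{2}\langle v_j\rangle^{a+b-2}+\langle v_k\rangle^{a+b-2}\langle v_j\rangle^{2}$, then sum---is more elementary and in fact needs only $a,b\ge2$, so your careful upper-bound bookkeeping, while correct, is not actually required by your own method. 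The paper's interpolation yields the sharp constant $1$, yours gives $2$; both are absorbed into $C_p$.
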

\begin{proof}
 \textit{(a)} We get by Lemma \ref{FPE:LEMMA05} for some generic constant $C > 0$ and $k_p = \lfloor \frac{p+1}{2} \rfloor$
 \begin{align*}
  &\ \frac{1}{n^2}\sum \limits_{k,j=1}^{n}\sigma(|v_k-v_j|)\int \limits_{\Xi}\left( \langle v_k^{\star} \rangle^{2p} + \langle v_j^{\star} \rangle^{2p} - \langle v_k \rangle^{2p} - \langle v_j \rangle^{2p} \right)Q(d\theta)d\xi
  \\ &\leq \frac{C}{n^2}\sum \limits_{k,j=1}^{n}\sum \limits_{l=1}^{k_p}(\langle v_k \rangle^{\gamma} + \langle v_j \rangle^{\gamma}) \left( \langle v_k \rangle^{2l} \langle v_j \rangle^{2p - 2l} + \langle v_k \rangle^{2p-2l} \langle v_j \rangle^{2l} \right)
  \\ &\leq \frac{C}{n^2}\sum \limits_{l=1}^{k_p}\left( \langle v \rangle_{2l+\gamma}\langle v \rangle_{2p-2l} + \langle v \rangle_{2p-2l+\gamma} \langle v \rangle_{2l} \right).
 \end{align*}
 Next following the proof of \cite[Lemma 3.7]{LM12} we use $\langle v \rangle_r \leq \langle v \rangle_2^{\frac{s-r}{s-2}}\langle v \rangle_s^{\frac{r-2}{s-2}}$
 for $2 \leq r \leq s$ to deduce, for $2 \leq s_1, s_2 \leq 2p-2+\gamma$,
 \begin{align*}
  \langle v \rangle_{s_1} \cdot \langle v \rangle_{s_2} \leq \langle v \rangle_2^{a}\cdot \langle v \rangle_{2p}^{b} \leq \langle v \rangle_2 \cdot \langle v \rangle_{2p-2 + \gamma},
 \end{align*}
 where we have set
 \[
  a = \frac{2(2p - 2 + \gamma) - (s_1 + s_2)}{2p-2 + \gamma - 2}, \qquad b = \frac{s_1 + s_2 - 2}{2p-2+\gamma - 2}.
 \]
 Estimating each term in the sum proves \eqref{IPS:EQ02}.
 \\ \textit{(b)} The assertion follows exactly by the same arguments as \eqref{IPS:EQ04}.
\end{proof}
Corresponding moment estimates are given below.
\begin{Proposition}\label{IPS:TH01HP}
 Let $(\mathcal{R}^n(t), \mathcal{V}^n(t))$ be a weak solution to \eqref{IPS:EQ01} with initial distribution $\rho^{(n)}$.
 Then the following assertions hold.
 \begin{enumerate}
  \item[(a)] Suppose that $\gamma \in [0,2)$. Then for any $2p \geq 4$ there exists a constant $C_p> 0$ such that
  \[
   \E^n\left( \frac{\langle \mathcal{V}^n(t)\rangle_{2p}}{n}\right) \leq C_p \E^n\left( \frac{\langle \mathcal{V}^n(0)\rangle_{2p}}{n}\right) + C_p \E^n\left( \frac{\langle \mathcal{V}^n(0)\rangle_{\frac{4p}{2-\gamma}}}{n}\right) t^{\frac{2p}{2-\gamma}},
  \]
  provided the right-hand side is finite.
  \item[(b)] Suppose that $\gamma = 2$. Then for any $p \geq 2$ there exists $C_p > 0$ such that
  \[
   \E^n\left( \frac{\langle \mathcal{V}^n(t)\rangle_{2p}}{n}\right) \leq \E^n\left( \frac{\langle \mathcal{V}^n(0)\rangle_{2p}}{n} e^{C_p\frac{\langle \mathcal{V}^n(0)\rangle_2}{n}t}\right), \ \ t \geq 0,
  \]
  provided the right-hand side is finite.
 \end{enumerate}
\end{Proposition}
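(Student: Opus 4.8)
The plan is to close a Gronwall--Bihari argument for the unnormalized moment $t\mapsto\E^n(\langle\mathcal{V}^n(t)\rangle_{2p})$, the crucial point being that the energy factor $\langle\mathcal{V}^n(s)\rangle_2$ produced by the Povzner estimate \eqref{IPS:EQ02} is frozen along the trajectory because of the pathwise conservation law \eqref{CONS}.

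First I would apply Lemma \ref{EST:03}(a) with $g(v)=\langle v\rangle^{2p}$; using \eqref{EST:06}, $\beta\le1$ and the Lyapunov bound \eqref{IPS:EQ02}, and dividing by $n$, this gives for the stopping times $\tau_N$ of Lemma \ref{EST:03}
\[
 \E^n\left(\frac{\langle\mathcal{V}^n(t\wedge\tau_N)\rangle_{2p}}{n}\right)\le\E^n\left(\frac{\langle\mathcal{V}^n(0)\rangle_{2p}}{n}\right)+C_p\int_0^t\E^n\left(\frac{\langle\mathcal{V}^n(s\wedge\tau_N)\rangle_2}{n}\cdot\frac{\langle\mathcal{V}^n(s\wedge\tau_N)\rangle_{2p-2+\gamma}}{n}\right)ds .
\]
Next, by \eqref{CONS} one has $\langle\mathcal{V}^n(s)\rangle_2=\langle\mathcal{V}^n(0)\rangle_2$ almost surely, so after writing $\Pr_{\rho^{(n)}}=\int\Pr_{(r_0,v_0)}\rho^{(n)}(dr_0,dv_0)$ and working under $\E^n_{(r_0,v_0)}$, the constant $E_0:=\langle v_0\rangle_2/n$ can be pulled out of the time integral. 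With $\phi_N(t):=\E^n_{(r_0,v_0)}(\langle\mathcal{V}^n(t\wedge\tau_N)\rangle_{2p}/n)$, which is finite by the localization, this reduces everything to the scalar inequality $\phi_N(t)\le\langle v_0\rangle_{2p}/n+C_pE_0\int_0^t\E^n_{(r_0,v_0)}(\langle\mathcal{V}^n(s\wedge\tau_N)\rangle_{2p-2+\gamma}/n)\,ds$.

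For $\gamma=2$ the exponent $2p-2+\gamma$ equals $2p$, hence $\phi_N(t)\le\langle v_0\rangle_{2p}/n+C_pE_0\int_0^t\phi_N(s)\,ds$ and the Gronwall lemma yields $\phi_N(t)\le(\langle v_0\rangle_{2p}/n)e^{C_pE_0t}$; since a solution of the martingale problem of Theorem \ref{IPS:TH00} does not explode, $\tau_N\uparrow\infty$ a.s., so letting $N\to\infty$ (Fatou) and integrating against $\rho^{(n)}$ gives (b). For $\gamma\in[0,2)$ one has $2\le2p-2+\gamma<2p$, so I would interpolate $\langle v\rangle_{2p-2+\gamma}\le\langle v\rangle_2^{1-\lambda}\langle v\rangle_{2p}^{\lambda}$ for an admissible exponent $\lambda\in[0,1)$ (any $\lambda$ at least the Hölder value $\frac{2p-4+\gamma}{2p-2}$ works, because $\langle v\rangle_2\le\langle v\rangle_{2p}$ as $\langle v_k\rangle\ge1$), use $\langle\mathcal{V}^n(s)\rangle_2/n=E_0$ again, and apply Jensen's inequality for the concave function $x\mapsto x^{\lambda}$ to obtain $\phi_N(t)\le\langle v_0\rangle_{2p}/n+C_pE_0^{\,2-\lambda}\int_0^t\phi_N(s)^{\lambda}\,ds$; the Bihari--LaSalle inequality (Lemma \ref{LASALLE}) then produces a bound of the form $\phi_N(t)\le C_p(\langle v_0\rangle_{2p}/n+E_0^{(2-\lambda)/(1-\lambda)}t^{1/(1-\lambda)})$. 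Choosing $\lambda$ so that $1/(1-\lambda)$ and $(2-\lambda)/(1-\lambda)$ give the claimed powers, letting $N\to\infty$, bounding $(\langle v_0\rangle_2/n)^{(2-\lambda)/(1-\lambda)}\le\langle v_0\rangle_{4p/(2-\gamma)}/n$ via Jensen's inequality and the monotonicity of $q\mapsto\langle v_0\rangle_q$, and integrating against $\rho^{(n)}$ gives (a).

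I expect the main obstacle to be precisely this exponent bookkeeping in the case $\gamma\in[0,2)$: one has to pick the interpolation exponent so that the Bihari--LaSalle output is exactly (or is dominated, up to the elementary estimate $t^{a}\le1+t^{2p/(2-\gamma)}$, by) the claimed $C_p(\cdots)t^{2p/(2-\gamma)}$ with the moment $\langle\mathcal{V}^n(0)\rangle_{4p/(2-\gamma)}$, and one has to make sure the localization by $\tau_N$ can be removed, which again rests on the pathwise energy conservation \eqref{CONS} keeping $\langle\mathcal{V}^n\rangle$ finite on bounded time intervals.
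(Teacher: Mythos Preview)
Your overall strategy is exactly the paper's: apply Lemma \ref{EST:03}(a) with the Povzner bound \eqref{IPS:EQ02}, freeze $\langle\mathcal V^n(s)\rangle_2=\langle v_0\rangle_2$ via \eqref{CONS} after conditioning on the initial configuration, run Gronwall (case $\gamma=2$) or Bihari--LaSalle (case $\gamma\in[0,2)$), remove the localization, and integrate against $\rho^{(n)}$. Part (b) is correct and coincides with the paper.

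For part (a) there is a small but genuine gap in the exponent bookkeeping. Your extra interpolation $\langle v\rangle_{2p-2+\gamma}\le\langle v\rangle_2^{1-\lambda}\langle v\rangle_{2p}^{\lambda}$, followed by freezing $\langle v\rangle_2/n=E_0$, produces the inequality
\[
 \phi_N(t)\le\frac{\langle v_0\rangle_{2p}}{n}+C_pE_0^{\,2-\lambda}\int_0^t\phi_N(s)^{\lambda}\,ds ,
\]
so Bihari--LaSalle gives $\phi_N(t)\le C_p\bigl(\langle v_0\rangle_{2p}/n+E_0^{(2-\lambda)/(1-\lambda)}t^{1/(1-\lambda)}\bigr)$. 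Matching the time power forces $1-\lambda=(2-\gamma)/(2p)$, hence $(2-\lambda)/(1-\lambda)=(2p+2-\gamma)/(2-\gamma)$, and Jensen then yields $E_0^{(2-\lambda)/(1-\lambda)}\le\langle v_0\rangle_{4p/(2-\gamma)+2}/n$, which is a \emph{larger} moment than the claimed $\langle v_0\rangle_{4p/(2-\gamma)}/n$; the bound $(\langle v_0\rangle_2/n)^{(2-\lambda)/(1-\lambda)}\le\langle v_0\rangle_{4p/(2-\gamma)}/n$ fails for $\gamma<2$. No admissible $\lambda$ makes both exponents come out as stated simultaneously.

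The fix is simply to drop the second use of the energy: instead of interpolating through $\langle v\rangle_2$, use the power-mean (Jensen) inequality directly,
\[
 \frac{\langle\mathcal V^n(s)\rangle_{2p-2+\gamma}}{n}\le\Bigl(\frac{\langle\mathcal V^n(s)\rangle_{2p}}{n}\Bigr)^{1-\frac{2-\gamma}{2p}} ,
\]
which leaves only the single factor $E_0$ in front of the integral. This is precisely what the paper does; Bihari--LaSalle with $\alpha=(2-\gamma)/(2p)$ then gives $K^{1/\alpha}=(C_pE_0)^{2p/(2-\gamma)}$ and, after Jensen, the exact moment $\langle v_0\rangle_{4p/(2-\gamma)}/n$.
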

\begin{proof}
 For $N \geq 1$ let $\tau_N = \inf\{ t > 0 \ | \ \langle \mathcal{V}^n(t) \rangle \geq N \text{ or } \langle \mathcal{V}^n(t-) \rangle \geq N \}$.
 \\ \textit{(a)} Consider first $(\mathcal{R}^n(t),\mathcal{V}^n(t))$ with deterministic initial condition $(r_0,v_0) \in \R^{2dn}$. 
 Applying the Lemma \ref{EST:03}.(a) and then using \eqref{IPS:EQ02} together with conservation of kinetic energy gives
 \begin{align*}
   \E_{(r_0,v_0)}^n\left( \frac{\langle \mathcal{V}^n(t\wedge \tau_N)\rangle_{2p}}{n}\right)
   &\leq  \frac{\langle v_0\rangle_{2p}}{n} + C_p\frac{\langle v_0 \rangle_{2}}{n}\E_{(r_0,v_0)}^n\left(\int \limits_{0}^{t\wedge \tau_N} \frac{\langle \mathcal{V}^n(s)\rangle_{2p-2+\gamma}}{n}ds\right)
   \\ &\leq \frac{\langle v_0\rangle_{2p}}{n} + C_p\frac{\langle v_0 \rangle_{2}}{n} \int \limits_{0}^{t} \E_{(r_0,v_0)}^n\left( \frac{\langle \mathcal{V}^n(s \wedge \tau_N)\rangle_{2p}}{n}ds\right)^{1 - \frac{2-\gamma}{2p}}ds.
 \end{align*}
 Applying the Bihari-LaSalle inequality we obtain for some $C_p > 0$
 \[
   \E_{(r_0,v_0)}^n\left( \frac{\langle \mathcal{V}^n(t\wedge \tau_N)\rangle_{2p}}{n}\right)
   \leq C_p \frac{\langle v_0\rangle_{2p}}{n} + C_p \left( \frac{\langle v_0 \rangle_{2}}{n} \right)^{\frac{2p}{2-\gamma}} t^{\frac{2p}{2-\gamma}}.
 \]
 Taking $N \to \infty$ proves the assertion in the case of deterministic initial condition. 
 For the general case we use $\Pr_{\rho^{(n)}} = \int_{\R^{2dn}}\Pr_{(r_0,v_0)} d\rho^{(n)}(r_0,v_0)$ to obtain
 \begin{align*}
  \E^n\left( \frac{\langle \mathcal{V}^n(t)\rangle_{2p}}{n}\right)
   &\leq C_p \E^n \left( \frac{\langle \mathcal{V}^n(0)\rangle_{2p}}{n}\right) + C_p \E^n\left(\left( \frac{\langle \mathcal{V}^n(0) \rangle_{2}}{n} \right)^{\frac{2p}{2-\gamma}} \right) t^{\frac{2p}{2-\gamma}}
   \\ &\leq C_p \E^n \left( \frac{\langle \mathcal{V}^n(0)\rangle_{2p}}{n}\right) + C_p \E^n\left(\frac{\langle \mathcal{V}^n(0) \rangle_{\frac{4p}{2-\gamma}}}{n} \right) t^{\frac{2p}{2-\gamma}}.
 \end{align*}
 \textit{(b)} Proceed in the same way as in part (a) with the Gronwall lemma instead.
\end{proof}

\begin{Proposition}\label{IPS:LEMMA00HP}
 Suppose that the initial distribution $\rho^{(n)} \in \mathcal{P}(\R^{2dn})$ is symmetric. Then the following assertions hold:
 \begin{enumerate}
  \item[(a)] Suppose that $\gamma \in [0,2)$. Then for each $2p$ with $2p + \gamma \geq 4$ there exists a constant $C_p > 0$ such that 
  \[
   \E^n\left( \sup \limits_{s \in [0,t]} \langle \mathcal{V}_1^n(s) \rangle^{2p}\right) \leq C_p \E^n\left( \frac{\langle \mathcal{V}^n(0)\rangle_{\frac{2(2p+\gamma)}{2-\gamma}}}{n}\right)(1 + t^{\frac{2p+2}{2-\gamma}}),
  \]
  provided the right-hand side is finite.
  \item[(b)] Suppose that $\gamma = 2$. Then for each $p \geq 1$ there exists a constant $C_p > 0$ such that
  \[
   \E^n\left( \sup \limits_{s \in [0,t]}\langle \mathcal{V}_1^n(s)\rangle^{2p}\right) \leq C_p \E^n\left(\frac{\langle \mathcal{V}^n(0)\rangle_{2p + \gamma}}{n} e^{C_p t \frac{\langle \mathcal{V}^n(0)\rangle_2}{n} } \right)
  \]
 \end{enumerate}
\end{Proposition}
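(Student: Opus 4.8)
The plan is to derive both bounds from Lemma~\ref{EST:03}.(b) applied with $g(v)=\langle v\rangle^{2p}$, combined with the uniform-in-$n$ moment estimates of Proposition~\ref{IPS:TH01HP} and the exchangeability of the particles (Lemma~\ref{EXCH}). Since $2p+\gamma\geq 4$ in case (a) and $p\geq 1$ in case (b), the function $g(v)=\langle v\rangle^{2p}$ lies in $C^1(\R^d)$, so Lemma~\ref{EST:03}.(b) gives
\[
 \E^n\left(\sup_{s\in[0,t]}\langle\mathcal{V}_1^n(s)\rangle^{2p}\right)\leq\E^n\left(\langle\mathcal{V}_1^n(0)\rangle^{2p}\right)+\int_0^t\E^n\left(\mathcal{N}_g(\mathcal{R}^n(s),\mathcal{V}^n(s))\right)ds.
\]
By Lemma~\ref{EXCH} the first term equals $\E^n(\langle\mathcal{V}^n(0)\rangle_{2p}/n)$, and by \eqref{IPS:EQ03} (which for $g(v)=\langle v\rangle^{2p}$ reads $\mathcal{N}_g(r,v)\leq C_p\langle v\rangle_{2p+\gamma}/n$) the integrand is bounded by $C_p\E^n(\langle\mathcal{V}^n(s)\rangle_{2p+\gamma}/n)$. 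The key observation is that the moment order arising here, $2p+\gamma$, is already at least $4$ --- by hypothesis in case (a), and because it equals $2p+2$ in case (b) --- so it can be fed directly into Proposition~\ref{IPS:TH01HP} without any preliminary interpolation or conditioning on the initial configuration.

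In case (a), applying Proposition~\ref{IPS:TH01HP}.(a) with the exponent $2p+\gamma$ (its proof applies verbatim to any real exponent $\geq 4$) gives
\[
 \E^n\left(\frac{\langle\mathcal{V}^n(s)\rangle_{2p+\gamma}}{n}\right)\leq C_p\,\E^n\left(\frac{\langle\mathcal{V}^n(0)\rangle_{2p+\gamma}}{n}\right)+C_p\,\E^n\left(\frac{\langle\mathcal{V}^n(0)\rangle_{\frac{2(2p+\gamma)}{2-\gamma}}}{n}\right)s^{\frac{2p+\gamma}{2-\gamma}}.
\]
Integrating over $s\in[0,t]$ and using $\frac{2p+\gamma}{2-\gamma}+1=\frac{2p+2}{2-\gamma}$, the right-hand side becomes a combination of $\E^n(\langle\mathcal{V}^n(0)\rangle_{2p}/n)$, $t\,\E^n(\langle\mathcal{V}^n(0)\rangle_{2p+\gamma}/n)$ and $\E^n(\langle\mathcal{V}^n(0)\rangle_{2(2p+\gamma)/(2-\gamma)}/n)\,t^{(2p+2)/(2-\gamma)}$. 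Since $\langle v_k\rangle\geq 1$ and $\tfrac{2(2p+\gamma)}{2-\gamma}\geq 2p+\gamma\geq 2p$, the first two moments are dominated by $\langle\mathcal{V}^n(0)\rangle_{2(2p+\gamma)/(2-\gamma)}$; and since $\tfrac{2p+2}{2-\gamma}\geq 1$ one has $t\leq 1+t^{(2p+2)/(2-\gamma)}$. Collecting these remarks yields the bound claimed in (a), with finiteness of the right-hand side inherited from the hypothesis.

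In case (b) one has $2p+\gamma=2p+2=2(p+1)$ with $p+1\geq 2$, so Proposition~\ref{IPS:TH01HP}.(b) applies with exponent $2p+2$ and gives $\E^n(\langle\mathcal{V}^n(s)\rangle_{2p+2}/n)\leq\E^n(\tfrac{\langle\mathcal{V}^n(0)\rangle_{2p+2}}{n}e^{C_p s\langle\mathcal{V}^n(0)\rangle_2/n})$. Integrating in $s$ (Tonelli) and using that $\langle\mathcal{V}^n(0)\rangle_2/n\geq 1$, so that $\int_0^t e^{cs}ds=(e^{ct}-1)/c\leq e^{ct}/C_p$ when $c=C_p\langle\mathcal{V}^n(0)\rangle_2/n$, gives $\int_0^t\E^n(\langle\mathcal{V}^n(s)\rangle_{2p+2}/n)ds\leq C_p\E^n(\tfrac{\langle\mathcal{V}^n(0)\rangle_{2p+2}}{n}e^{C_p t\langle\mathcal{V}^n(0)\rangle_2/n})$. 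Adding the initial term $\E^n(\langle\mathcal{V}^n(0)\rangle_{2p}/n)\leq\E^n(\tfrac{\langle\mathcal{V}^n(0)\rangle_{2p+2}}{n}e^{C_p t\langle\mathcal{V}^n(0)\rangle_2/n})$ (the exponential being $\geq 1$) and recalling $2p+2=2p+\gamma$ gives the claim.

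I expect the only genuinely delicate input to be the sharp control of $\mathcal{N}_g$ by the moment of order exactly $2p+\gamma$ (rather than a larger order), which comes from the Lipschitz-type bound \eqref{EQ:01} on $\sigma$ exactly as in the proof of \eqref{IPS:EQ03}; everything else is bookkeeping of exponents, together with the (useful) observation that, since $2p+\gamma\geq 4$, no localization in the initial data and no interpolation against the conserved kinetic energy is needed. If one only had a cruder bound $\mathcal{N}_g(r,v)\leq C_p\langle v\rangle_{q}/n$ with $q>2p+\gamma$, the argument would still go through after first writing $\langle v\rangle_q/n\leq(\langle v\rangle_2/n)^{\theta}(\langle v\rangle_{2q'}/n)^{1-\theta}$ by interpolation, using $\langle\mathcal{V}^n(s)\rangle_2=\langle\mathcal{V}^n(0)\rangle_2$ a.s., and conditioning on the initial configuration before invoking Proposition~\ref{IPS:TH01HP}.
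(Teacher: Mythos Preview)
Your proof is correct and follows essentially the same route as the paper: apply Lemma~\ref{EST:03}.(b) with $g(v)=\langle v\rangle^{2p}$, bound $\mathcal{N}_g$ via \eqref{IPS:EQ03}, and feed the resulting moment of order $2p+\gamma$ into Proposition~\ref{IPS:TH01HP}. The paper only writes out part (a) and leaves (b) implicit; your treatment of (b), using $\langle\mathcal{V}^n(0)\rangle_2/n\geq 1$ to absorb the time integral of the exponential, is a clean way to finish that case.
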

\begin{proof}
 Let us prove only part (a). By Lemma \ref{EST:03}.(b) and \eqref{IPS:EQ03} we obtain 
 \begin{align*}
  \E^n\left( \sup \limits_{s \in [0,t]} \langle \mathcal{V}_1^n(s) \rangle^{2p}\right)
  &\leq \E^n\left( \langle \mathcal{V}_1^n(0)\rangle^{2p}\right) + C \int \limits_{0}^{t}\E^n\left( \frac{\langle \mathcal{V}^n(s)\rangle_{2p+\gamma}}{n} \right)ds
  \\ &\leq  \E^n\left( \langle \mathcal{V}_1^n(0)\rangle^{2p}\right) + C t \E^n\left( \frac{\langle \mathcal{V}^n(0)\rangle_{2p+\gamma}}{n} \right) + C \E^n\left( \frac{\langle \mathcal{V}^n(0)\rangle_{\frac{2(2p+\gamma)}{2-\gamma}}}{n} \right)\int \limits_{0}^{t}s^{\frac{2p+\gamma}{2-\gamma}}ds
  \\ &\leq C \E^n\left( \frac{\langle \mathcal{V}^n(0)\rangle_{\frac{2(2p+\gamma)}{2-\gamma}}}{n} \right) (1 + t^{\frac{2p + 2}{2-\gamma}})
 \end{align*}
 where we have used $t^a \leq 1 + t^b$ for $t \geq 0$ and $0 < a < b$.
\end{proof}

\section{Particle approximation}
In this section we study the infinite particle limit $n \to \infty$ for the interacting particle system with generator \eqref{EQ:00}.
As a consequence we complete the proofs of Theorem \ref{TH:04}, Theorem \ref{TH:05} and Theorem \ref{TH:06}.

\subsection{Tightness and moment estimates}
Let $\mu_0$ be the initial condition as prescribed in Theorem \ref{TH:04}, Theorem \ref{TH:05} or Theorem \ref{TH:06}.
Define for each $n \geq 2$ by $\rho^{(n)} = \mu_0^{\otimes n}$ a family of probability measures on $\R^{2dn}$.
Let $(\mathcal{R}^n(t),\mathcal{V}^n(t))$ be the weak solutions to \eqref{IPS:EQ01} on $(\Omega_n, \F^n, \F_t^n, \Pr^n)$ with law $\Pr_{\rho^{(n)}}$, 
i.e. the unique solution to the martingale problem $(L, C_c^1(\R^{2dn}), \rho^{(n)})$. 
Write $\mathcal{X}^n = (\mathcal{X}_1^n, \dots, \mathcal{X}_n^n)$ with $\mathcal{X}_k^n = (\mathcal{R}_k^n, \mathcal{V}_k^n)$.
\begin{Lemma}\label{MOMENT:EST}
 The following moment estimates hold.
 \begin{enumerate}
  \item[(a)] If $\gamma \in (-1,0]$, then for each $T > 0$
  \[
   \E^n\left( \sup \limits_{t \in [0,T]} \langle \mathcal{V}_1^n(t) \rangle^{2+\gamma} \right) \leq \int \limits_{\R^{2d}}\langle v \rangle^{2+\gamma}\mu_0(r,v) e^{C_p T}.
  \]
  \item[(b)] If $\gamma \in (0,2)$, then for each $T > 0$
  \[
   \sup \limits_{t \in [0,T]}\E^n\left( \langle \mathcal{V}_1^n(t)\rangle^{4} \right) + \E^n\left(\sup \limits_{t \in [0,T]} \langle \mathcal{V}_1^n(t)\rangle^{1+\gamma}\right)
   \leq e^{C_{p,\gamma}T} \int \limits_{\R^{2d}}\langle v \rangle^{\frac{2}{2-\gamma}\max\{4, 1+2\gamma \}}d\mu_0(r,v).
  \]
  \item[(c)] If $\gamma = 2$, then for each $p \geq 1$ and $T > 0$ there exists a constant $n_0(T,p) \geq 2$ such that
  \[
   \sup \limits_{n \geq n_0(T,p)}\ \E^n\left( \sup \limits_{t \in [0,T]}\langle \mathcal{V}_1^n(t)\rangle^{2p}\right) < \infty.
  \]
 \end{enumerate}
\end{Lemma}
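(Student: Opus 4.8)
The plan is to deduce all three assertions directly from the moment estimates of Section 6, after checking that the initial condition $\rho^{(n)} = \mu_0^{\otimes n}$ is symmetric and has the required finite moments. Since $\rho^{(n)}$ is a product of identical laws, it is symmetric, so Lemma \ref{EXCH} applies and the particles $\mathcal{X}_k^n$ are exchangeable; in particular, for every $g$ one has $\E^n(\langle \mathcal{V}^n(t)\rangle_g) = n\,\E^n(g(\mathcal{V}_1^n(t)))$ and, crucially, $\E^n\big(\tfrac{1}{n}\langle \mathcal{V}^n(0)\rangle_q\big) = \int_{\R^{2d}}\langle v\rangle^q\, d\mu_0(r,v)$ for any $q \geq 0$. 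Thus the right-hand sides of the estimates in Propositions \ref{IPS:TH01SP}, \ref{IPS:TH01HP} and \ref{IPS:LEMMA00HP}, which are all phrased in terms of $\E^n\big(\tfrac{1}{n}\langle\mathcal{V}^n(0)\rangle_q\big)$ (or of $\E^n\big(\tfrac{1}{n}\langle\mathcal{V}^n(0)\rangle_{2p}\,e^{C_p t \langle\mathcal{V}^n(0)\rangle_2/n}\big)$ in the critical case), collapse to moments of $\mu_0$ and become explicitly independent of $n$. The moment hypotheses imposed on $\mu_0$ in Theorems \ref{TH:04}, \ref{TH:05}, \ref{TH:06} are designed exactly so that these $\mu_0$-moments are finite.

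For part (a), $\gamma \in (-1,0]$: apply Remark \ref{REMARK} (the weaker, uniform-in-$\gamma$ form of Proposition \ref{IPS:TH01SP}) with $p = 2 + \gamma \geq 1$, giving $\E^n\big(\sup_{s\le t}\langle\mathcal{V}_1^n(s)\rangle^{2+\gamma}\big) \le \E^n\big(\langle\mathcal{V}_1^n(0)\rangle^{2+\gamma}\big)e^{C_p t} = \big(\int \langle v\rangle^{2+\gamma}d\mu_0\big)e^{C_pt}$, which is finite by the hypothesis of Theorem \ref{TH:04}. For part (b), $\gamma \in (0,2)$: apply Proposition \ref{IPS:TH01HP}(a) with $2p = 4$ to control $\sup_t \E^n(\langle\mathcal{V}_1^n(t)\rangle^4)$ — this needs the $\mu_0$-moment of order $\tfrac{4p}{2-\gamma}\big|_{2p=4} = \tfrac{8}{2-\gamma}$ — and apply Proposition \ref{IPS:LEMMA00HP}(a) with a choice of $2p$ making $2p+\gamma \geq 4$ and $2p$ as small as possible (so that $2p = 1+\gamma$ if $1+\gamma \ge $ the feasibility threshold, or else $2p$ equal to that threshold), producing the pathwise supremum bound for $\langle\mathcal{V}_1^n(t)\rangle^{1+\gamma}$, which requires the $\mu_0$-moment of order $\tfrac{2(2p+\gamma)}{2-\gamma}$; one checks both required orders are bounded by $\tfrac{2}{2-\gamma}\max\{4,1+2\gamma\}$, exactly the exponent in the hypothesis of Theorem \ref{TH:05}, and absorbs the polynomial-in-$t$ factors into $e^{C_{p,\gamma}T}$ for $t \in [0,T]$.

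Part (c), $\gamma = 2$, is the main obstacle, because Proposition \ref{IPS:LEMMA00HP}(b) bounds $\E^n\big(\sup_{s\le t}\langle\mathcal{V}_1^n(s)\rangle^{2p}\big)$ by $C_p\,\E^n\big(\tfrac{1}{n}\langle\mathcal{V}^n(0)\rangle_{2p+\gamma}\,e^{C_p t \langle\mathcal{V}^n(0)\rangle_2/n}\big)$, and here the exponential weight couples all $n$ initial velocities: $\tfrac{1}{n}\langle\mathcal{V}^n(0)\rangle_2 = \tfrac{1}{n}\sum_k \langle\mathcal{V}_k^n(0)\rangle^2$ is an empirical average of i.i.d.\ terms with mean $m_2 := \int \langle v\rangle^2 d\mu_0 < \infty$ (finite since $e^{a|v|^2}$ is integrable against $\mu_0$). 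The idea is: by independence, $\E^n\big(\tfrac{1}{n}\langle\mathcal{V}^n(0)\rangle_{2p+\gamma}\,e^{C_pt\langle\mathcal{V}^n(0)\rangle_2/n}\big) = \E^n\Big(\tfrac{1}{n}\sum_k \langle\mathcal{V}_k^n(0)\rangle^{2p+\gamma}\prod_{j}e^{(C_pt/n)\langle\mathcal{V}_j^n(0)\rangle^2}\Big)$, and splitting off the $j=k$ factor and using exchangeability this equals $\E\big(\langle V_0\rangle^{2p+\gamma}e^{(C_pt/n)\langle V_0\rangle^2}\big)\cdot\big(\E\,e^{(C_pt/n)\langle V_0\rangle^2}\big)^{n-1}$, where $V_0 \sim \mu_0$ in velocity. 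Now choose $n_0(T,p)$ so large that $C_pT/n_0 \le a/2$ (say); then for $n \ge n_0$ and $t \le T$ the first factor is bounded by $\E\big(\langle V_0\rangle^{2p+\gamma}e^{(a/2)|V_0|^2}\big) < \infty$, while the second factor is $\big(1 + O(C_pt/n)\cdot\E\langle V_0\rangle^2 e^{(a/2)|V_0|^2} + \dots\big)^{n-1}$; using $(1+x/n)^{n} \le e^{x}$ with a suitable bound on the per-factor excess over $1$, one gets $\big(\E\,e^{(C_pt/n)\langle V_0\rangle^2}\big)^{n-1} \le e^{C_p' T}$, uniformly in $n \ge n_0$. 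Taking the supremum over $t \in [0,T]$ inside (which is already accounted for by Proposition \ref{IPS:LEMMA00HP}(b), whose left side is $\E^n(\sup_{s\le t}\cdots)$) yields $\sup_{n\ge n_0(T,p)}\E^n\big(\sup_{t\le T}\langle\mathcal{V}_1^n(t)\rangle^{2p}\big) < \infty$. The only delicate point is making the estimate $\big(\E\,e^{(c/n)\langle V_0\rangle^2}\big)^{n} \le e^{C}$ rigorous with $C$ independent of $n$; this follows from $\E\,e^{(c/n)\langle V_0\rangle^2} \le 1 + \tfrac{c}{n}\kappa_0$ for $n$ large, where $\kappa_0 := \sup_{n\ge n_0}\tfrac{n}{c}\big(\E\,e^{(c/n)\langle V_0\rangle^2}-1\big) \le \E\big(\langle V_0\rangle^2 e^{(a/2)|V_0|^2}\big) < \infty$ by monotone/dominated convergence, and then $(1+\tfrac{c\kappa_0}{n})^{n} \le e^{c\kappa_0}$.
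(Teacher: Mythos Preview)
Your proposal is correct and follows essentially the same approach as the paper: part (a) via Remark \ref{REMARK}, part (b) via Proposition \ref{IPS:TH01HP}(a) and Proposition \ref{IPS:LEMMA00HP}(a), and part (c) via the independence factorization of $\E^n\big(\tfrac{1}{n}\langle\mathcal{V}^n(0)\rangle_{2p+\gamma}\,e^{C_p t \langle\mathcal{V}^n(0)\rangle_2/n}\big)$ together with $(1+x/n)^n \leq e^x$ and the exponential-moment hypothesis on $\mu_0$. The only cosmetic difference is that in (c) you invoke Proposition \ref{IPS:LEMMA00HP}(b) directly, whereas the paper re-derives that bound from Lemma \ref{EST:03}(b), \eqref{IPS:EQ03} and Proposition \ref{IPS:TH01HP}(b) before performing the same factorization.
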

\begin{proof}
 Assertion (a) is a particular case of Proposition \ref{IPS:TH01SP} (see Remark \ref{REMARK}).
 Concerning assertion (b) we obtain from Theorem \ref{IPS:TH01HP}.(a)
 \begin{align*}
   \sup \limits_{t \in [0,T]}\E^n\left( \langle \mathcal{V}_1^n(t)\rangle^{4} \right)
  = \sup \limits_{t \in [0,T]} \E^n \left( \frac{\langle \mathcal{V}^n(t) \rangle_4}{n} \right)
  \leq C_p \int \limits_{\R^{2d}} \langle v \rangle^{\frac{8}{2-\gamma}} \mu_0(dr,dv) \left( 1 + t^{\frac{4}{2-\gamma}} \right).
 \end{align*}
 For the second term we apply Proposition \ref{IPS:LEMMA00HP}.(a), for $2p^* = \max\{ 1 + 2\gamma, 4-\gamma\}$, which yields
 \begin{align*}
  \E^n\left(\sup \limits_{t \in [0,T]} \langle \mathcal{V}_1^n(t)\rangle^{1+\gamma}\right)
 &\leq \E^n\left(\sup \limits_{t \in [0,T]} \langle \mathcal{V}_1^n(t)\rangle^{2p^*}\right)
 \\ &\leq C_p\int \limits_{\R^{2d}}\langle v \rangle^{\frac{2}{2-\gamma}\max\{4, 1+2\gamma \}}\mu_0(dr,dv) \left( 1 + t^{\frac{2p^* + 2}{2-\gamma}}\right).
 \end{align*}
 It remains to prove assertion (c). By Lemma \ref{EST:03}.(b) together with \eqref{IPS:EQ03} we obtain
 \begin{align*}
  &\ \E^n\left( \sup \limits_{s \in [0,t]}\langle\mathcal{V}_1^n(s)\rangle^{2p} \right) \leq \E^n\left( \frac{\langle \mathcal{V}^n(0)\rangle_{2p}}{n}  \right) 
  + C \int \limits_{0}^{t}\E^n\left( \frac{\langle \mathcal{V}^n(s) \rangle_{2p + \gamma}}{n}\right)ds.
 \end{align*}
 Hence it remains to show that for each $q \geq 2$ we find $n_0 = n_0(t,q) \geq 2$ such that
 \begin{align}\label{EQ:22}
  \sup \limits_{n \geq n_0}\sup \limits_{s \in [0,t]} \E^n\left( \frac{\langle \mathcal{V}^n(s) \rangle_{2q}}{n} \right) < \infty.
 \end{align}
 By Theorem \ref{IPS:TH01HP} we obtain for some constant $C_q > 0$ and $q \geq 2$
 \begin{align*}
  \sup \limits_{t \in [0,T]}\E^n\left( \frac{\langle \mathcal{V}^n(t) \rangle_{2q}}{n} \right) 
  &\leq \E^n\left( \frac{\langle \mathcal{V}(0) \rangle_{2q}}{n} e^{C_q \frac{\langle \mathcal{V}(0) \rangle_2}{n} T} \right)
  \\ &= \frac{1}{n} \sum \limits_{k=1}^{n}\int \limits_{\R^{2dn}}\langle v_k \rangle^{2q}e^{C_q \frac{\langle v_k \rangle^2}{n}T} \prod \limits_{j \neq k} e^{C_q \frac{\langle v_j \rangle^2}{n}T} d\mu_0^{\otimes n}(r,v)
  \\ &= \int \limits_{\R^{2d}} \langle v \rangle^{2q} e^{C_q \frac{\langle v \rangle^2}{n}T}d\mu_0(r,v) \left( \int \limits_{\R^{2d}}e^{C_q \frac{\langle v \rangle^2}{n}T} d\mu_0(r,v)\right)^{n-1}
  \\ &\leq A_q \left( \int \limits_{\R^{2d}}e^{C_q' \frac{\langle v \rangle^2}{n}T} d\mu_0(r,v)\right)^{n}
 \end{align*}
 for some constants $A_q, C_q' > 0$.
 Let $n_0 = n_0(T,q) \geq 2$ be such that such that $C_q' T \leq a n_0$, then 
\[
 n \left| e^{\frac{C_q' T}{n}\langle v \rangle^2} - 1 \right| 
 \leq C_q' T e^{\frac{C_q' T}{n} \langle v \rangle^2} \langle v \rangle^2 \leq C_q' T e^{a \langle v \rangle^2} \langle v \rangle^2, \ \ n \geq n_0,
\]
and we may assume without loss of generality that the right-hand side is integrable w.r.t. $\mu_0$.
Hence we obtain
\begin{align*}
 \left( \int \limits_{\R^{2d}}e^{C_q' \frac{\langle v \rangle^2}{n}T} d\mu_0(r,v)\right)^{n}
 &= \left( 1 + \frac{1}{n}\int \limits_{\R^{2d}}n \left(e^{\frac{C_q' T}{n}\langle v \rangle^2} - 1\right)d\mu_0(r,v) \right)^n
 \\ &\leq \exp\left( C_q' T \int \limits_{\R^{2d}}\langle v \rangle^2 e^{a \langle v \rangle^2}d\mu_0(r,v) \right) < \infty.
\end{align*}
 This proves \eqref{EQ:22}.
\end{proof}
Recall that $\mu^{(n)} = \frac{1}{n}\sum_{k=1}^{n}\delta_{\mathcal{X}_k^n}$ is the empirical distribution and let $\pi^{(n)}$ be its law.
\begin{Proposition}\label{RELCOMP}
 The sequence $(\pi^{(n)})_{n \geq 2}$ is relatively compact.
\end{Proposition}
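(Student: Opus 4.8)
The plan is to reduce relative compactness of the laws $\pi^{(n)}$ to a one-particle tightness estimate via the exchangeability of the particle system, and then to verify that estimate with a compact containment bound and the Aldous criterion. Since $\rho^{(n)} = \mu_0^{\otimes n}$ is symmetric, Lemma \ref{EXCH} shows that $\mathcal{X}_1^n, \dots, \mathcal{X}_n^n$ are exchangeable in $D(\R_+;\R^{2d})$, so by Sznitman \cite[Proposition 2.2.(ii)]{S91} the sequence $(\pi^{(n)})_{n\ge 2}$ is relatively compact as soon as the sequence of laws $(\mathcal{L}(\mathcal{X}_1^n))_{n\ge 2}$ is tight in $\mathcal{P}(D(\R_+;\R^{2d}))$; since this space is Polish, tightness and relative compactness coincide by Prokhorov's theorem, and discarding the finitely many indices $n < n_0$ appearing in Lemma \ref{MOMENT:EST}.(c) changes nothing.

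First I would establish compact containment. Writing $\mathcal{X}_1^n = (\mathcal{R}_1^n,\mathcal{V}_1^n)$, equation \eqref{IPS:EQ01} gives $|\mathcal{R}_1^n(t)| \le |\mathcal{R}_1^n(0)| + \int_0^t |\mathcal{V}_1^n(s)|\,ds$, hence, taking $\e \le 1$ without loss of generality and using Jensen's inequality, for every $T > 0$
\begin{align*}
 \E^n\Big( \sup_{t\in[0,T]} |\mathcal{R}_1^n(t)|^{\e} \Big) \le C_\e\int_{\R^{2d}}|r|^{\e}d\mu_0(r,v) + C_\e T^{\e}\,\E^n\Big( \sup_{t\in[0,T]}\langle\mathcal{V}_1^n(t)\rangle^{\e}\Big),
\end{align*}
and the velocity supremum on the right is bounded uniformly in $n$ by Lemma \ref{MOMENT:EST} in each of the three regimes (using $\e \le 1 \le 1+\gamma^+$ and Jensen). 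Together with the uniform bound $\sup_n \E^n(\langle\mathcal{V}_1^n(t)\rangle^{q}) < \infty$ for a suitable $q > 0$, again from Lemma \ref{MOMENT:EST}, this yields tightness of $\{\mathcal{L}(\mathcal{X}_1^n(t))\}_n$ in $\R^{2d}$ for every fixed $t$, i.e. the compact containment condition.

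Next I would check the Aldous condition for the velocity component; the position component $\mathcal{R}_1^n$ is automatically $C$-tight, since $|\mathcal{R}_1^n(t)-\mathcal{R}_1^n(s)| \le |t-s|\sup_{[0,T]}|\mathcal{V}_1^n|$ and $\sup_n\E^n(\sup_{[0,T]}|\mathcal{V}_1^n|) < \infty$. For $\F_t^n$-stopping times $S_n \le T_n \le S_n + \delta \le T$, the compensator \eqref{EQ:06} together with the elementary bounds $|\alpha(v_1,v_k,\theta,\xi)| \le \theta|v_1-v_k|$ (from \eqref{PARA:00}) and $|v-u|\sigma(|v-u|) \le C(\langle v\rangle^{1+\gamma^+}+\langle u\rangle^{1+\gamma^+})$ give, using $\kappa = \int_{(0,\pi]}\theta\,Q(d\theta) < \infty$ and that the $\mathcal{X}_k^n$ are identically distributed,
\begin{align*}
 \E^n\big(|\mathcal{V}_1^n(T_n)-\mathcal{V}_1^n(S_n)|\big)
 &\le \E^n\bigg( \int_{S_n}^{T_n} \frac{1}{n}\sum_{k=1}^{n}\int_{\Xi} \theta\,|\mathcal{V}_1^n(s)-\mathcal{V}_k^n(s)|\,\sigma(|\mathcal{V}_1^n(s)-\mathcal{V}_k^n(s)|)\,Q(d\theta)d\xi\,ds\bigg)
 \\ &\le C\,\E^n\bigg( \int_{S_n}^{T_n} \frac{1}{n}\sum_{k=1}^{n}\big(\langle\mathcal{V}_1^n(s)\rangle^{1+\gamma^+}+\langle\mathcal{V}_k^n(s)\rangle^{1+\gamma^+}\big)ds\bigg)
 \\ &\le C\,\delta\,\E^n\Big(\sup_{s\in[0,T]}\langle\mathcal{V}_1^n(s)\rangle^{1+\gamma^+}\Big).
\end{align*}
By Lemma \ref{MOMENT:EST} the last expectation is bounded uniformly in $n$ (respectively in $n \ge n_0(T,2)$ when $\gamma = 2$, invoking part (c) with $2p\ge 3$), so by Markov's inequality the right-hand side, divided by any $\eta > 0$, tends to $0$ as $\delta \to 0$ uniformly in $n$, which is the Aldous condition. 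Combining this with compact containment, the Aldous–Jakubowski tightness criterion (see \cite{JS03} or \cite{EK86}) gives tightness of $(\mathcal{L}(\mathcal{X}_1^n))_{n\ge 2}$ in $\mathcal{P}(D(\R_+;\R^{2d}))$, and Sznitman's equivalence quoted above concludes the proof.

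The main obstacle is the Aldous estimate: one has to control the jump part of $\mathcal{V}_1^n$ uniformly in $n$ even though the collision kernel $\sigma$ may be unbounded or singular, and this works only because passing to the compensator reduces the required integrability precisely to that of $\langle\mathcal{V}_1^n(s)\rangle^{1+\gamma^+}$, which is supplied uniformly in $n$ by the moment estimates of Section 6; the critical case $\gamma = 2$ needs, in addition, the observation that omitting finitely many indices does not affect relative compactness.
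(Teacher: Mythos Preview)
Your proof is correct and follows essentially the same route as the paper: reduction to one-particle tightness via exchangeability and Sznitman \cite[Proposition 2.2.(ii)]{S91}, then compact containment plus the Aldous criterion, with the key velocity increment controlled through the compensator by $C\delta\,\E^n(\sup_{[0,T]}\langle\mathcal{V}_1^n\rangle^{1+\gamma^+})$ and the uniform moment bounds of Lemma \ref{MOMENT:EST}. Your version is slightly more explicit in handling the $n_0$ cut-off for $\gamma=2$ and in writing $1+\gamma^+$ to cover all regimes at once, but the argument is the same.
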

\begin{proof}
 It follows by standard theory that $\pi^{(n)}$ is relatively compact iff the first coordinate $\mathcal{X}_1$ is tight (see \cite[Proposition 2.2.(ii)]{S91}),
 see also Lemma \ref{EXCH}. We seek to apply the Aldous criterion. It is not difficult to see that
 \[
  \sup \limits_{t \in [0,T]} \sup \limits_{n \in \N} \E^n( \langle \mathcal{R}_1^n(t) \rangle^{\e} + \langle \mathcal{V}_1^n(t) \rangle ) < \infty, \ \ \forall T > 0,
 \]
 where $\e$ is such that $\int_{\R^{2d}}|r|^{\e} \mu_0(dr,dv) < \infty$.
 Next, let $S_n, T_n$ be $\F_t^n$ stopping times with $S_n \leq T_n \leq \delta + S_n$ and $S_n,T_n \leq M \in \N$. Take any $\delta \in (0,1)$. Then 
 \[
 \E^{n}\left( |\mathcal{R}_1^{n}(S_n) - \mathcal{R}_1^{n}(T_n)| \right) 
 \leq \E^n \left(\int \limits_{S_n}^{T_n} |\mathcal{V}_1^n(s)| ds \right)
 \leq \delta \E^n\left( \sup_{s \in [0,M]}|\mathcal{V}_1^n(s)| \right)
 \]
 and, by exchangeability, we obtain
 \begin{align*}
  \E^{n}\left( |\mathcal{V}_1^{n}(S_n) - \mathcal{V}_1^{n}(T_n)|\  \right)
   &\leq \frac{C}{n}\sum \limits_{k=1}^{n} \E^n\left( \int \limits_{S_n}^{T_n}\int \limits_{E} |\alpha(\mathcal{V}_1^{n}(s), \mathcal{V}_k^{n}(s),\theta,\xi)|\sigma(|\mathcal{V}_1^{n}(s)- \mathcal{V}_k^{n}(s)|)| dQ(\theta)d\xi ds \right)
  \\ &\leq \frac{C}{n} \sum \limits_{k=1}^{n}\E^n\left( \int \limits_{S_n}^{T_n} \left( \langle \mathcal{V}_1^{(n)}(s)\rangle^{1+\gamma} + \langle \mathcal{V}_k^{(n)}(s)\rangle^{1+\gamma} \right) ds \right)
  \\ &\leq C \delta \E^{n}\left( \sup \limits_{s \in [0,M]}\langle \mathcal{V}_1^n(s)\rangle^{1+\gamma} \right).
 \end{align*}
 The assertion follows now from the moment estimates given in Lemma \ref{MOMENT:EST}.
\end{proof}
For $\nu \in \mathcal{P}(D(\R_+;\R^{2d}))$ let $\nu_t \in \mathcal{P}(\R^{2d})$ be the time-marginal at time $t \geq 0$ and for $q \geq 0$ set
$\| \nu_t \|_q := \int_{\R^{2d}} \langle v \rangle^q d\nu_t(r,v)$.
The desired moment estimates are deduced from the following Lemma of Fatou together with the estimates from previous section.
\begin{Lemma}\label{IPS:LEMMA05} 
 Let $\pi^{(\infty)}$ be any accumulation point of $\pi^{(n)}$. Then for any $2p \geq 1$
 \[
  \int \limits_{\mathcal{P}(D(\R_+;\R^{2d}))} \| \nu_u \|_{2p} \ d\pi^{(\infty)}(\nu) 
  \leq \liminf \limits_{n \to \infty} \E^n\left( \langle V_1^n(u) \rangle^{2p}\right), \ \ u \geq 0
 \]
 and moreover
 \[
   \int \limits_{\mathcal{P}(D(\R_+;\R^{2d}))} \sup \limits_{u \in [0,t]} \| \nu_u \|_{2p} \ d\pi^{(\infty)}(\nu) 
   \leq \liminf \limits_{n \to \infty} \sup \limits_{u \in [0,t]}  \E^n\left( \langle V_1^n(u) \rangle^{2p}\right), \ \ t \geq 0,
 \]
 provided the right-hand side is finite.
\end{Lemma}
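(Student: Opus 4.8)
The plan is to exploit the exchangeability of the particle system, truncate the moment functionals to bounded ones, and pass to the limit via the portmanteau theorem; the only delicate point is that the time-$u$ evaluation map on the Skorokhod space $D(\R_+;\R^{2d})$ is not continuous, so one must be careful about which times $u$ (and endpoints of $[0,t]$) are used. Passing to a subsequence, we may assume $\pi^{(n)} \longrightarrow \pi^{(\infty)}$ weakly in $\mathcal{P}(\mathcal{P}(D(\R_+;\R^{2d})))$. By Lemma \ref{EXCH} the coordinates $\mathcal{X}_1^n, \dots, \mathcal{X}_n^n$ are exchangeable, so that $\int \int f \, d\nu \, d\pi^{(n)}(\nu) = \E^n( f(\mathcal{X}_1^n) )$ for every bounded measurable $f$ on $D(\R_+;\R^{2d})$; in particular the mean measures $\overline{\pi}^{(n)} := \mathcal{L}(\mathcal{X}_1^n)$ converge weakly to $\overline{\pi}^{(\infty)} := \int \nu \, d\pi^{(\infty)}(\nu) \in \mathcal{P}(D(\R_+;\R^{2d}))$. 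Let $\mathcal{N} \subset \R_+$ be the at most countable set of times $u$ with $\overline{\pi}^{(\infty)}(\{ x : \Delta x(u) \neq 0 \}) > 0$; since $\int \nu(\{ x : \Delta x(u) \neq 0 \}) \, d\pi^{(\infty)}(\nu) = \overline{\pi}^{(\infty)}(\{ x : \Delta x(u) \neq 0 \})$, for $u \notin \mathcal{N}$ we have $\nu(\{ x : \Delta x(u) \neq 0 \}) = 0$ for $\pi^{(\infty)}$-almost every $\nu$.

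First I would prove the pointwise inequality for $u \notin \mathcal{N}$. Fix $N \geq 1$ and set $h_N(x) := \langle x_v(u) \rangle^{2p} \wedge N$, where $x_v(u)$ denotes the velocity component of $x(u)$. Then $h_N$ is bounded and its discontinuity set is contained in $\{ x : \Delta x(u) \neq 0 \}$, hence $\nu \longmapsto \int h_N \, d\nu$ is bounded on $\mathcal{P}(D(\R_+;\R^{2d}))$ and continuous at $\pi^{(\infty)}$-almost every $\nu$. The portmanteau theorem, the bound $\int h_N \, d\nu_u \leq \| \nu_u \|_{2p}$ and exchangeability then give
\[
 \int \Big( \int h_N \, d\nu \Big) d\pi^{(\infty)}(\nu) = \lim_{n \to \infty} \int \Big( \int h_N \, d\nu \Big) d\pi^{(n)}(\nu) \leq \liminf_{n \to \infty} \E^n\big( \langle \mathcal{V}_1^n(u) \rangle^{2p} \big),
\]
and letting $N \to \infty$ with monotone convergence yields the first inequality for $u \notin \mathcal{N}$. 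For $u \in \mathcal{N}$ I would pick $u_m \downarrow u$ with $u_m \notin \mathcal{N}$; by right-continuity of càdlàg paths $\nu_{u_m} \longrightarrow \nu_u$ weakly for every $\nu$, so the lower semicontinuity of $\mu \longmapsto \int \langle v \rangle^{2p} d\mu$ under weak convergence together with Fatou's lemma reduces the case $u \in \mathcal{N}$ to the already treated times $u_m$ (one would also use continuity in $u$ of $\liminf_n \E^n(\langle \mathcal{V}_1^n(\cdot) \rangle^{2p})$, which follows from the moment formula in Lemma \ref{EST:03}).

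The supremum inequality follows from the same scheme applied to the running-supremum truncation $g_N(x) := \sup_{u \in [0,t]}( \langle x_v(u) \rangle^{2p} \wedge N )$: since $u \longmapsto \int (\langle x_v(u) \rangle^{2p} \wedge N) \, d\nu(x)$ is right-continuous, for $t \notin \mathcal{N}$ the supremum over $[0,t]$ may be replaced by the supremum over a countable dense subset of $[0,t]$ (together with the endpoint), and each of the resulting functionals is continuous at $\pi^{(\infty)}$-almost every $\nu$; hence $\nu \longmapsto \int g_N \, d\nu$ is bounded and lower semicontinuous at $\pi^{(\infty)}$-almost every $\nu$, and the portmanteau inequality for lower semicontinuous functions, followed by $N \to \infty$ and the pointwise bound $\sup_{u \in [0,t]} \| \nu_u \|_{2p} \leq \int ( \sup_{u \in [0,t]} \langle x_v(u) \rangle^{2p} ) \, d\nu$, gives the second inequality (the particle-side quantities appearing on the right are in turn controlled by the estimates of Section~6); the case $t \in \mathcal{N}$ is handled as before by monotonicity in $t$ and right-continuity.

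The main obstacle is precisely the discontinuity of the time-evaluation functional on $D(\R_+;\R^{2d})$: the whole argument rests on isolating, through the countability of the atom-times of the mean measure $\overline{\pi}^{(\infty)}$, a co-countable set of times at which the relevant truncated functionals become $\pi^{(\infty)}$-almost surely continuous (resp. lower semicontinuous), and on afterwards removing the excluded times by right-continuity of càdlàg paths. Once this topological bookkeeping is in place, the remaining ingredients — truncation, exchangeability, the portmanteau theorem and monotone convergence — are routine.
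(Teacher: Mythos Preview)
The paper does not actually prove this lemma; it merely labels it a ``Lemma of Fatou'' and moves on. Your argument for the first inequality is the natural way to fill this in and is essentially correct: truncate, use the $\pi^{(\infty)}$-almost sure continuity of the time-$u$ evaluation at non-atom times of the mean measure, apply the portmanteau theorem, and remove the truncation by monotone convergence. The treatment of $u\in\mathcal{N}$ via right-continuity and the equicontinuity in $u$ of $\E^n(\langle \mathcal{V}_1^n(u)\rangle^{2p})$ (which does follow from Lemma~\ref{EST:03} together with the moment bounds of Section~6) is terse but fixable.

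There is, however, a genuine mismatch in the second inequality that you do not flag. Whichever of the two threads in your sketch one follows, the argument produces
\[
 \int \sup_{u\in[0,t]} \|\nu_u\|_{2p}\, d\pi^{(\infty)}(\nu)
 \;\le\; \liminf_{n\to\infty}\, \E^n\!\Big( \sup_{u\in[0,t]} \langle \mathcal{V}_1^n(u)\rangle^{2p}\Big),
\]
i.e.\ with $\E^n\sup$ on the right, not $\sup_u\E^n$ as the lemma states. Since $\E^n\sup\ge\sup\E^n$ in general, what you actually prove is the \emph{weaker} bound. I do not see how to obtain the sharper form: once the supremum sits inside the $\pi^{(n)}$-integral it becomes $\E^n\bigl(\sup_u \tfrac{1}{n}\sum_k \langle \mathcal{V}_k^n(u)\rangle^{2p}\bigr)$, and one cannot pull the supremum outside the expectation. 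This is almost certainly a slip in the paper's statement rather than a gap in your reasoning; every use of the lemma in Section~7 only needs finiteness of the left-hand side, and the relevant particle estimates (Lemma~\ref{MOMENT:EST}, Proposition~\ref{IPS:LEMMA00HP}) control precisely $\E^n(\sup_u\cdots)$. So the version you prove is both the correct one and the one that suffices for the applications --- but you should say so explicitly rather than silently absorb the discrepancy into ``controlled by the estimates of Section~6''.

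A minor point: in your second-inequality paragraph you define $g_N(x)=\sup_u(\langle x_v(u)\rangle^{2p}\wedge N)$ but then argue about the right-continuity of $u\mapsto\int(\langle x_v(u)\rangle^{2p}\wedge N)\,d\nu$, conflating $\int\sup_u$ with $\sup_u\int$. Either route works and both land on the same $\E^n\sup$ conclusion, but the exposition should keep the two functionals separate.
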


\subsection{Convergence of martingale problems}
Below we prove that any limit point solves the martingale problem for the Enskog process.
\begin{Theorem}\label{TH:10}
 Let $\pi^{(\infty)}$ be any accumulation point of $\pi^{(n)}$.
 Then $\pi^{(\infty)}$-a.a. $\nu \in \mathcal{P}(D(\R_+;\R^{2d}))$ solve the martingale problem $(A(\nu_s), C_c^1(\R^{2d}), \mu_0)$.
\end{Theorem}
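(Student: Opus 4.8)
The plan is to fix, once and for all, countable test data — functions $\psi$ from a $C^1$-dense countable subset $\mathcal{D}\subset C_c^1(\R^{2d})$, rational times $0\le s_1\le\dots\le s_m\le s\le t$, and functions $g_1,\dots,g_m$ from a countable convergence-determining family in $C_b(\R^{2d})$ — and to associate to each such tuple the functional
\[
 G(\nu):=\int\limits_{D(\R_+;\R^{2d})}\Bigl(\psi(x(t))-\psi(x(s))-\int\limits_s^t(A(\nu_u)\psi)(x(u))\,du\Bigr)\prod_{i=1}^m g_i(x(s_i))\,d\nu(x),
\]
which is well defined for $\pi^{(\infty)}$-a.a.\ $\nu$ because $\|\nu_u\|_{1+\gamma}<\infty$ for a.e.\ $u$ by Lemma~\ref{IPS:LEMMA05} together with Lemma~\ref{MOMENT:EST}. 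I will show that $\int|G|\,d\pi^{(\infty)}=0$ for every such tuple. Since $|G|\ge 0$ this forces $G(\nu)=0$ for $\pi^{(\infty)}$-a.a.\ $\nu$, simultaneously over the countable family; combined with $\nu_0=\mu_0$ $\pi^{(\infty)}$-a.s.\ (the initial particles being i.i.d.\ $\mu_0$, so $\mu^{(n)}_0\to\mu_0$ by the law of large numbers) and a density argument via Proposition~\ref{FPE:LEMMA01}(c) and the estimate \eqref{EQ:14} to pass from $\mathcal{D}$ to all of $C_c^1(\R^{2d})$, this is exactly the claim.

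The key exact identity at the particle level is that for $\psi\in C_c^1(\R^{2d})$ and $1\le k\le n$, setting $\Psi_k(x_1,\dots,x_n):=\psi(x_k)$, one has
\[
 (L\Psi_k)(x)=\frac1n\sum_{j=1}^n(\mathcal{A}\psi)(x_k;x_j)=(A(\mu^{(n)}_x)\psi)(x_k),\qquad \mu^{(n)}_x:=\frac1n\sum_{j=1}^n\delta_{x_j},
\]
the self-term $j=k$ contributing $0$ since $\alpha(v_k,v_k,\theta,\xi)=0$; the only non-obvious point is that after integration in $\xi\in S^{d-2}$ the outgoing and incoming collision outcomes $\int_{S^{d-2}}\psi(r_k,v_k+\alpha(v_k,v_j,\theta,\xi))\,d\xi$ and $\int_{S^{d-2}}\psi(r_k,v_k-\alpha(v_j,v_k,\theta,\xi))\,d\xi$ coincide, because $\xi\mapsto\Gamma(X,\xi)$ and $\xi\mapsto-\Gamma(-X,\xi)$ are Jacobian-$1$ bijections onto the same sphere $S^{d-2}(X)$, and because $\sigma,\beta$ are symmetric. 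Applying the It\^o formula to the weak solution of \eqref{IPS:EQ01} for the coordinate $\psi(\mathcal{X}_k^n(\cdot))$ (the compensated jump integral being a genuine martingale by the uniform-in-$n$ moment bounds of Section~6), it follows that
\[
 N_t^{n,k}(\psi):=\psi(\mathcal{X}_k^n(t))-\psi(\mathcal{X}_k^n(0))-\int_0^t(A(\mu^{(n)}_u)\psi)(\mathcal{X}_k^n(u))\,du,\qquad t\ge0,
\]
is a $\Pr^n$-martingale for each $k$. Averaging over $k$ and using $\mathcal{F}^n_s$-measurability of $\prod_i g_i(\mathcal{X}_k^n(s_i))$ and exchangeability gives $\E^n[G(\mu^{(n)})]=0$; moreover, conditioning on $\mathcal{F}^n_s$ and using that the predictable covariation $\langle N^{n,k},N^{n,l}\rangle$ is supported, for $k\ne l$, only on collisions with $\{l',l''\}=\{k,l\}$ — an event of $\nu$-intensity $O(1/n)$, with integrand controlled by $\kappa=\int\theta\,Q(d\theta)$ and $|\mathcal{V}_k^n-\mathcal{V}_l^n|^2\sigma(|\mathcal{V}_k^n-\mathcal{V}_l^n|)\le C(\langle\mathcal{V}_k^n\rangle^{2+\gamma}+\langle\mathcal{V}_l^n\rangle^{2+\gamma})$ — one obtains $\E^n[|G(\mu^{(n)})|^2]=O(1/n)$, hence $\E^n[|G(\mu^{(n)})|]\to0$, using Lemma~\ref{MOMENT:EST} to bound the relevant moments (for $\gamma=2$ this requires $n\ge n_0$).

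It remains to pass to the limit along the subsequence with $\pi^{(n)}\to\pi^{(\infty)}$. The obstruction is that, by \eqref{EXISTENCE:01}, $\nu\mapsto A(\nu)\psi$ is not continuous for weak convergence because $\sigma$ is singular (if $\gamma\le0$) or unbounded (if $\gamma>0$). I therefore replace $\sigma$ by a continuous modification $\sigma_R$ with compact support in $\{0<|z|<\infty\}$, with $\sigma_R=\sigma$ on $\{\varepsilon_R\le|z|\le R\}$ and $\sigma_R\uparrow\sigma$ pointwise; then $\mathcal{A}_R\psi\in C_b(\R^{4d})$ (here one uses that $\nabla_r\psi$ has compact support), so the associated $G_R$ and $F_R:=|G_R|$ are bounded continuous on $\mathcal{P}(D(\R_+;\R^{2d}))$, and hence $\int F_R\,d\pi^{(\infty)}=\lim_n\E^n[F_R(\mu^{(n)})]$. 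Next, $\E^n[F_R(\mu^{(n)})]\le\E^n[|G(\mu^{(n)})|]+\E^n[|G_R(\mu^{(n)})-G(\mu^{(n)})|]$, where the second term is bounded by $C\,(t-s)\sup_n\sup_{u\le T}\E^n\bigl(\langle\mathcal{V}_1^n(u)\rangle^{1+\gamma}\1_{\{\langle\mathcal{V}_1^n(u)\rangle>cR\}}\bigr)=:\varepsilon_R$, and $\varepsilon_R\to0$ by the uniform integrability provided by Lemma~\ref{MOMENT:EST}; letting $n\to\infty$ gives $\int F_R\,d\pi^{(\infty)}\le\varepsilon_R$. Finally, letting $R\to\infty$ and using $F_R\to|G|$ pointwise on $\nu$ with the $\pi^{(\infty)}$-integrable dominating bound $C\bigl(1+\int_s^t(\|\nu_u\|_1+\|\nu_u\|_{1+\gamma})\,du\bigr)$ (again Lemma~\ref{IPS:LEMMA05}), dominated convergence yields $\int|G|\,d\pi^{(\infty)}=\lim_{R\to\infty}\int F_R\,d\pi^{(\infty)}=0$. \textbf{The main difficulty} is precisely this interchange of the limits $n\to\infty$ and $R\to\infty$: controlling both the fluctuation term $\E^n[|G(\mu^{(n)})|]$ and the truncation error $\E^n[|G_R(\mu^{(n)})-G(\mu^{(n)})|]$ uniformly in $n$, which is exactly where the uniform-in-$n$ propagation of moments of Section~6 is indispensable. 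The localization needed to legitimize the It\^o computation for $\psi(\mathcal{X}_k^n(\cdot))$, and the Skorokhod-continuity bookkeeping for time marginals, are routine by comparison.
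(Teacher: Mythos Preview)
Your proposal is correct and follows essentially the same two-step scheme as the paper: first show that the martingale functional vanishes at the particle level via $\E^n[|G(\mu^{(n)})|^2]=O(1/n)$ (the paper's assertion (a)), then pass to the limit along $\pi^{(n)}\to\pi^{(\infty)}$ by introducing a truncated operator that makes the functional bounded and weakly continuous, controlling the truncation error uniformly in $n$ via the moment bounds of Section~6 (the paper's assertion (b)). Your identification $(L\Psi_k)(x)=(A(\mu^{(n)}_x)\psi)(x_k)$ and the covariance estimate for $N^{n,k}$ are exactly the paper's computations.

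The only noteworthy difference is the choice of truncation: the paper replaces $A(\nu)$ by $A_R(\nu)\psi(r,v)=\int g(|u|^2/R^2)(\mathcal{A}\psi)(r,v;q,u)\,d\nu(q,u)$, i.e.\ it cuts off the integration variable $u$, which yields the clean error bound $|F(\nu)-F_R(\nu)|\le CR^{-1}\int_s^t\|\nu_u\|_{2+\gamma}\,du$ directly from \eqref{EQ:14}. You instead truncate $\sigma$ to $\sigma_R$ supported in $[\varepsilon_R,R]$; this works too, but your stated error bound $\E^n(\langle\mathcal{V}_1^n\rangle^{1+\gamma}\1_{\{\langle\mathcal{V}_1^n\rangle>cR\}})$ requires the intermediate step of converting $\{|v-u|>R\}$ to $\{|u|>c'R\}$ via the compact support of $\psi$ and the argument of Proposition~\ref{FPE:LEMMA01}(b), together with a separate (trivial) treatment of the region $\{|v-u|<\varepsilon_R\}$ using $1+\gamma>0$. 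The paper's truncation is slightly more direct. Two further small points: your ``rational times'' should be taken from a countable dense subset of $D(\pi^{(\infty)})$ rather than $\Q_+$, since the latter need not avoid the (at most countable) exceptional set; and the identity $\int_{S^{d-2}}\psi(v_k+\alpha(v_k,v_j,\theta,\xi))\,d\xi=\int_{S^{d-2}}\psi(v_k-\alpha(v_j,v_k,\theta,\xi))\,d\xi$ you invoke is correct and is used implicitly by the paper as well.
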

The rest of this section is devoted to the proof of this Theorem.
 Observe that for given $\nu \in \mathcal{P}(D(\R_+;\R^{2d}))$ the complement of $D_{\nu} = \{ t > 0 \ | \ \nu( x(t) = x(t-) ) = 1 \}$ is at most countable
 and the coordinate function $x \longmapsto x(t)$ is $\nu$-a.s. continuous for any $t \in D_{\nu}$.
 Moreover also the complement of $D(\pi^{(\infty)}) = \{ t > 0 \ | \ \pi^{(\infty)}( \nu \ | \ t \in D_{\nu} ) = 1 \}$ is at most countable.
 For simplicity of notation we denote the subsequence of $\pi^{(n)}$ converging to $\pi^{(\infty)}$ again by $\pi^{(n)}$.
 Fix any $0 \leq s_1,\dots, s_m \leq s \leq t \in D(\pi^{(\infty)})$, $g_1,\dots, g_m \in C_b(\R^{2d})$, $m \in \N$ and $\psi \in C_c^1(\R^{2d})$. 
 Define 
\begin{align*}
 F(\nu) = \int \limits_{D(\R_+; \R^{2d})} \left( \psi(x(t)) - \psi(x(s)) - \int \limits_{s}^{t}(A(\nu_u)\psi)(x(u))du \right) \prod \limits_{k=1}^{m}g_k(x(s_k)) d\nu(x).
\end{align*}
It follows from \eqref{EXISTENCE:01} that we can find a constant $C > 0$ (independent of $\nu$) such that 
\begin{align*} 
 |F(\nu)| \leq C\sup \limits_{u \in [s,t]} \| \nu_u \|_{1+\gamma^+}, \ \ \ \forall \nu \in \mathcal{P}(D(\R_+;\R^{2d}))  
\end{align*}
and hence Lemma \ref{IPS:LEMMA05} applied for 
$2p = 1+\gamma^+$ if $\gamma \in [0,2]$ and to $2p = 2+\gamma$ if $\gamma \in (-1,0)$ gives
\[
 \int \limits_{\mathcal{P}(D(\R_+;\R^{2d}))} |F(\nu)| \ d\pi^{(\infty)}(\nu) < \infty, \ \ \forall t > 0.
\]
It is clear that $\nu$ is a solution to the martingale problem posed by \eqref{SDE:ENSKOG} if $F(\nu) = 0$.
Hence it suffices to prove the following assertions:
\begin{enumerate}
 \item[(a)] $\lim_{n \to \infty} \int_{\mathcal{P}(D(\R_+;\R^{2d}))} |F(\nu)|^2 d\pi^{(n)}(\nu) = 0$,
 \item[(b)] $\lim_{n \to \infty} \int_{\mathcal{P}(D(\R_+;\R^{2d}))}|F(\nu)| d\pi^{(n)}(\nu) = \int_{\mathcal{P}(D(\R_+;\R^{2d}))}|F(\nu)| d\pi^{(\infty)}(\nu)$.
\end{enumerate}
Let us first prove assertion (a).
\begin{Proposition}
 Assertion (a) is satisfied.
\end{Proposition}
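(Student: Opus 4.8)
The plan is to express $F(\nu)$ in terms of the empirical measure $\mu^{(n)}$ and reduce assertion (a) to a genuinely stochastic estimate about the $n$-particle martingale problem, exploiting that $\mu^{(n)}$ is built from the particle process $(\mathcal{X}_1^n,\dots,\mathcal{X}_n^n)$ whose law solves $(L,C_c^1(\R^{2dn}),\rho^{(n)})$. Writing out $F(\mu^{(n)})$, the term $\int_{D}(\psi(x(t))-\psi(x(s)))\prod_k g_k(x(s_k))\,d\mu^{(n)}(x)$ becomes $\tfrac1n\sum_{i=1}^n(\psi(\mathcal{X}_i^n(t))-\psi(\mathcal{X}_i^n(s)))\prod_k g_k(\mathcal{X}_i^n(s_k))$, and the compensator term $\int_s^t (A(\mu^{(n)}_u)\psi)(x(u))\,du$ integrated against $\mu^{(n)}$ becomes $\tfrac1n\sum_i\int_s^t\tfrac1n\sum_j (\mathcal{A}\psi)(\mathcal{X}_i^n(u);\mathcal{X}_j^n(u))\,du$. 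Comparing with the action of the generator $L$ on the function $\Phi(r,v)=\tfrac1n\sum_i\psi(r_i,v_i)\prod_k g_k(\cdots)$ — more precisely, applying the It\^o/Dynkin formula for the martingale problem to the test function $\psi$ along the $i$-th coordinate with the appropriate product weight — one sees that $F(\mu^{(n)})$ differs from a martingale increment only by the discrepancy between the ``true'' $n$-particle collision rates $\tfrac1n\sum_j\sigma(|v_i-v_j|)\beta(r_i-r_j)(\cdots)$ (with the $j=i$ diagonal term included) and the operator $A(\mu^{(n)}_u)$. The key algebraic observation is that these coincide \emph{exactly} by the definition \eqref{ENSKOG:OP} of $A(\nu)$ as integration of $\mathcal{A}\psi$ against $\nu$, once one accounts for the diagonal $j=i$ contribution, which is $O(1/n)$ uniformly.

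The second step is therefore to write $F(\mu^{(n)}) = M^{(n)} + r^{(n)}$, where $M^{(n)}$ is the increment of a mean-zero martingale (coming from the Poisson random measure driving \eqref{IPS:EQ01}) weighted by the bounded factors $\prod_k g_k$, and $r^{(n)}$ collects the diagonal correction terms of size $\le \tfrac{C}{n}\int_s^t \tfrac1n\sum_i |v_i|\sigma(|0|)\,du$-type bounds; using \eqref{EXISTENCE:01} and the moment estimates of Lemma \ref{MOMENT:EST}, one gets $\E^n(|r^{(n)}|^2)\le C/n^2\cdot(\text{uniform moment}) \to 0$. For the martingale part, one computes the predictable quadratic variation: it is a time-integral over $[s,t]$ of $\tfrac1{n^2}\sum_i \int_{\Xi}(\psi(\mathcal{X}_i^n(u)+\text{jump})-\psi(\mathcal{X}_i^n(u)))^2\sigma(\cdots)\beta(\cdots)(\prod_k g_k)^2\,Q(d\theta)d\xi\,\tfrac1n\sum_j(\cdots)\,du$, and the crucial point is the \emph{extra} factor $1/n$: each of the $n$ coordinates contributes a jump term of order $1/n$ (because $\psi$ appears with weight $1/n$ in $\Phi$), so the quadratic variation is $O(1/n)$ after using $\|\psi\|_\infty,\|\nabla_v\psi\|_\infty<\infty$, $\|g_k\|_\infty<\infty$, the bound \eqref{FPE:02} on $|\psi(v+\alpha)-\psi(v)|$, and the moment bound $\E^n\int_s^t\langle \mathcal{V}_1^n(u)\rangle^{1+\gamma^+}\sigma(\cdots)\,du<\infty$ from Lemma \ref{MOMENT:EST}. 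Hence $\E^n(|M^{(n)}|^2)\le C/n\to 0$, and combining, $\int |F(\nu)|^2 d\pi^{(n)}(\nu) = \E^n(|F(\mu^{(n)})|^2)\to 0$.

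Concretely, I would organize the proof as: (i) fix the test function $\Phi(\mathbf r,\mathbf v) = \tfrac1n\sum_{i=1}^n \psi(r_i,v_i)\prod_{k=1}^m g_k(r_i,v_i,\ldots)$ — or, to stay within $C_c^1$, localize $\psi$ and pass to the limit — and apply the martingale property of $(L,C_c^1(\R^{2dn}),\rho^{(n)})$ (together with the It\^o representation \eqref{IPS:EQ01}) to get that $\Phi(\mathcal{X}^n(t))-\Phi(\mathcal{X}^n(s))-\int_s^t L\Phi(\mathcal{X}^n(u))\,du$ is a martingale; actually to capture the product with the $g_k$'s at earlier times one uses the standard martingale-problem identity $\E^n[(\psi(x(t))-\psi(x(s))-\int_s^t A\psi(x(u))du)\prod_k g_k(x(s_k))]=0$; (ii) identify $L\Phi$ with $\tfrac1n\sum_i (A(\mu^{(n)}_u)\psi)(\mathcal{X}_i^n(u))$ up to the diagonal $O(1/n)$ term, so that $F(\mu^{(n)})$ equals the martingale increment plus $r^{(n)}$; (iii) estimate $\E^n|r^{(n)}|^2 = O(1/n^2)$ and $\E^n|M^{(n)}|^2 = O(1/n)$ via the quadratic-variation computation and Lemma \ref{MOMENT:EST}; (iv) conclude.

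\textbf{The main obstacle} I anticipate is purely bookkeeping: correctly handling the weights $\prod_k g_k(x(s_k))$ at the \emph{past} times $s_k$, which are not covered by a plain generator identity, so one must invoke the refined martingale-problem characterization (e.g.\ \cite[Ch.~4]{EK86}) stating that $\psi(x(t))-\psi(x(s))-\int_s^t A\psi(x(u))du$ is a martingale increment orthogonal to $\mathcal{F}_s$, hence has zero expectation against any bounded $\mathcal{F}_s$-measurable functional such as $\prod_k g_k(x(s_k))$ with $s_k\le s$. The other delicate point is that $\sigma$ is only singular, not bounded, near $0$: the diagonal term $j=i$ involves $\sigma(|v_i-v_i|)=\sigma(0)$, which is finite for $\gamma\in(-1,0]$ by \eqref{EQ:07} but requires a moment's care — here $\alpha(v_i,v_i,\theta,\xi)=0$ so the collision operator kills this term outright regardless of $\sigma(0)$, which is exactly why the diagonal contributes nothing beyond the combinatorial $1/n$ from the $\tfrac1n\sum_j$; I would make this explicit since it is the one place where the singularity could have bitten.
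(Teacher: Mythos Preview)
Your approach is correct and is essentially the same as the paper's, with two small differences worth noting.

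First, the remainder $r^{(n)}$ you introduce is identically zero, as you yourself observe at the end: since $\alpha(v_i,v_i,\theta,\xi)=0$, the diagonal term $j=i$ contributes nothing to either $(A(\mu_u^{(n)})\psi)(\mathcal{X}_i^n(u))$ or to the generator $L$ acting on the $i$-th coordinate. The paper therefore writes directly
\[
 F(\mu^{(n)}) = \frac{1}{n}\sum_{k=1}^{n} M_{s,t}^{n,k}\prod_{j=1}^{m} g_j(\mathcal{X}_k^n(s_j))
\]
with no remainder, where $M_{s,t}^{n,k}$ is the compensated Poisson integral for the $k$-th coordinate. So your Step~(ii) can be simplified.

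Second, for the second-moment bound the paper separates explicitly into diagonal and off-diagonal parts:
\[
 \E^n\bigl[|F(\mu^{(n)})|^2\bigr] \leq \frac{C}{n^2}\sum_{k}\E^n\bigl[\langle M^{n,k}\rangle\bigr] + \frac{C}{n^2}\sum_{k\neq j}\E^n\bigl[|\langle M^{n,k},M^{n,j}\rangle|\bigr],
\]
and shows $\E^n[\langle M^{n,k}\rangle]=O(1)$ while $\E^n[|\langle M^{n,k},M^{n,j}\rangle|]=O(1/n)$ for $k\neq j$, the latter because the only simultaneous jumps of $M^{n,k}$ and $M^{n,j}$ come from the single collision pair $\{k,j\}$, which occurs at rate $\tfrac{1}{2n}$ in the compensator \eqref{EQ:06}. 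Your ``$\Phi$-based'' heuristic (each binary collision moves $\Phi$ by $O(1/n)$, so the quadratic variation of the full sum is $O(1/n)$) captures exactly the same scaling and automatically accounts for these cross terms; just be aware that the formula you wrote, with only a single $\sum_i$, reads as the diagonal part alone and would need to include the cross contribution $(\psi(\mathcal{X}_k^n+\alpha)-\psi)(\psi(\mathcal{X}_j^n-\alpha)-\psi)$ when particles $k,j$ collide. Either bookkeeping leads to $\E^n[|F(\mu^{(n)})|^2]\leq C/n$ via the $(2+\gamma)$-moment bound from Lemma~\ref{MOMENT:EST}.
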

\begin{proof}
 Let $\widetilde{N}$ be the compensated Poisson random measure with compensator \eqref{EQ:06},
 \[
  G(r,v,z,\theta,\xi,l,l') = (e_l - e_{l'})\alpha(v_l,v_{l'}, \theta,\xi) \1_{[0, \sigma(|v_l - v_{l'}|)\beta(r_l - r_{l'})]}(z),
 \]
 where $(r,v) \in \R^{2dn}, \ z \in \R_+, \ (\theta,\xi,l,l') \in E := \Xi \times \{1,\dots,n\}^2$ is defined as in Theorem \ref{IPS:TH00} and,
 \begin{align*}
  M_{s,t}^{n,k} = \int \limits_{s}^{t}\int \limits_{E}\left( \psi(\mathcal{R}_k^{n}(u), \mathcal{V}_k^{n}(u-) + G_k) - \psi(\mathcal{R}_k^{n}(u),\mathcal{V}_k^{n}(u-))\right)d\widetilde{N}(u,l,l',\theta,\xi,z),
 \end{align*}
 where $G_k = G_k( \mathcal{R}^n(u), \mathcal{V}^n(u-),z,\theta,\xi,l,l')$.
 Then a short computation shows that
 \begin{align*}
  F(\mu^{(n)}) = \frac{1}{n}\sum_{k=1}^{n}M_{s,t}^{n,k}\prod_{j=1}^{m}g_j(\mathcal{X}_k^{n}(s_j)).
 \end{align*}
 Indeed, it holds that
 \begin{align*}
  &\ (A(\mu_u^{(n)})\psi)(\mathcal{X}_k^{n}) = \mathcal{V}_k^{n} \cdot (\nabla_r \psi)(\mathcal{X}_k^{n})
  \\ &+ \frac{1}{n}\sum \limits_{j=1}^{n}\sigma(|\mathcal{V}_k^{n}-\mathcal{V}_j^{n}|)\beta(\mathcal{R}_k^{n} - \mathcal{R}_j^{n})\int \limits_{\Xi}\left( \psi(\mathcal{R}_k^{n},  \mathcal{V}_k^{n} + \alpha(\mathcal{V}_k^{n}, \mathcal{V}_j^{n},\theta,\xi) ) - \psi(\mathcal{R}_k^{n},\mathcal{V}_k^{n})\right)Q(d\theta)d\xi,
 \end{align*}
 and from the It\^{o} formula one immediately obtains
 \begin{align*}
  \psi(\mathcal{X}_k^{n}(t)) - \psi(\mathcal{X}_k^{n}(s)) &= \int \limits_{s}^{t}(A(\mu_u^{(n)})\psi)(\mathcal{X}_k^{n}(u))du + M_{s,t}^{n,k}.
 \end{align*}
 The Doob-Meyer process of $M_{s,t}^{n,k}$ satisfies
 \begin{align*}
  \langle M_{s,t}^{n,k} \rangle &= \frac{1}{2n} \sum \limits_{j=1}^{n} \int \limits_{s}^{t} \int \limits_{\Xi} \left[ \psi( \mathcal{R}_k^n, \mathcal{V}_k^n + \alpha(\mathcal{V}_k^n, \mathcal{V}_{j}^n,\theta,\xi)) - \psi(\mathcal{R}_k^n, \mathcal{V}_k^n) \right]^2 \sigma(|\mathcal{V}_k^n - \mathcal{V}_j^n|) \beta(\mathcal{R}_k^n - \mathcal{R}_j^n) Q(d\theta)d\xi du
 \\ &\ + \frac{1}{2n} \sum \limits_{j=1}^{n} \int \limits_{s}^{t} \int \limits_{\Xi} \left[ \psi( \mathcal{R}_k^n, \mathcal{V}_k^n - \alpha(\mathcal{V}_k^n, \mathcal{V}_{j}^n,\theta,\xi)) - \psi(\mathcal{R}_k^n, \mathcal{V}_k^n) \right]^2 \sigma(|\mathcal{V}_k^n - \mathcal{V}_j^n|) \beta(\mathcal{R}_k^n - \mathcal{R}_j^n) Q(d\theta)d\xi du
 \\ &\leq \frac{C}{n} \sum \limits_{j=1}^{n} \int \limits_{s}^{t} \int \limits_{\Xi} |\alpha(\mathcal{V}_k^n, \mathcal{V}_{j}^n,\theta,\xi))|^2\sigma(|\mathcal{V}_k^n - \mathcal{V}_j^n|)  Q(d\theta)d\xi du
 \\ &\leq \frac{C}{n} \sum \limits_{j=1}^{n} \int \limits_{s}^{t} \left( \langle \mathcal{V}_k^n(u) \rangle^{2+\gamma} + \langle \mathcal{V}_j^n(u) \rangle^{2+\gamma} \right)) du
 \end{align*}
 from which we obtain $\E^n( \langle M_{s,t}^{n,k} \rangle) \leq \frac{C}{n}$, for all $k = 1, \dots, n$.
 The covariance is, for $k \neq j$, given by
 \begin{align*}
  \langle M_{s,t}^{n,k}, M_{s,t}^{n,j} \rangle = \frac{1}{2n} \int \limits_{s}^{t} \int \limits_{\Xi} 
 &\left[ \psi( \mathcal{R}_k^n, \mathcal{V}_k^n + \alpha(\mathcal{V}_k^n, \mathcal{V}_j^n, \theta,\xi)) - \psi(\mathcal{R}_k^n, \mathcal{V}_k^n) \right] 
 \\ &\left[ \psi( \mathcal{R}_j^n, \mathcal{V}_j^n - \alpha(\mathcal{V}_k^n, \mathcal{V}_j^n, \theta,\xi)) - \psi(\mathcal{R}_j^n, \mathcal{V}_j^n) \right] 
 \sigma(|\mathcal{V}_k^n - \mathcal{V}_j^n|) \beta(\mathcal{R}_k^n - \mathcal{R}_j^n) Q(d\theta)d\xi du
 \end{align*}
 and hence by a similar computation
 \begin{align*}
  \E^n( | \langle M_{s,t}^{n,k}, M_{s,t}^{n,j} \rangle | )
 &\leq \frac{C}{n} \E^n\left( \int \limits_{s}^{t}\int \limits_{\Xi} |\alpha(\mathcal{V}_k^n(u), \mathcal{V}_j^n(u),\theta,\Xi)|^2 \sigma(|\mathcal{V}_k^n(u) - \mathcal{V}_j^n(u)|) Q(d\theta)d\xi du \right)
 \leq \frac{C}{n}.
 \end{align*}
 This implies that
 \begin{align*}
  &\ \int \limits_{\mathcal{P}(D(\R_+;\R^{2d}))}|F(\nu)|^2 d\pi^{(n)}(\nu) = \E^{n}\left( |F(\mu^{(n)})|^2 \right)
  \\ &= \frac{1}{n^2}\sum \limits_{k=1}^{n}\E^{n}\left( (M_{s,t}^{n,k})^2 \prod_{j=1}^{m}g_j(\mathcal{X}_k^{n}(s_j))^2 \right) 
      + \frac{1}{n^2} \sum \limits_{k \neq j}\E^{n}\left( M_{s,t}^{n,k}M_{s,t}^{n,j}\prod_{l_1,l_2=1}^{m}g_{l_1}(\mathcal{X}_k^{n}(s_{l_1}))g_{l_2}(\mathcal{X}_j^{n}(s_{l_2}))\right)
  \\ &\leq \frac{C}{n^2}\sum \limits_{k=1}^{n}\E^{n}\left( \langle M_{s,t}^{n,k} \rangle \right) + 
  \frac{C}{n^2}\sum \limits_{k \neq j}\E^n( | \langle M_{s,t}^{n,k}, M_{s,t}^{n,j} \rangle| )
   \leq \frac{C}{n},
 \end{align*}
 i.e. assertion (a) is proved.
\end{proof}
The proof of Theorem \ref{TH:10} is completed once we have shown the next proposition.
\begin{Proposition}
 Assertion (b) is satisfied.
\end{Proposition}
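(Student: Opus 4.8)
The plan is as follows. The difficulty is that $F$ is \emph{not} continuous as a function of $\nu$ for the weak topology on $\mathcal{P}(D(\R_+;\R^{2d}))$: by \eqref{EXISTENCE:01} the kernel $\mathcal{A}\psi(r,v;q,u)$ grows like $\langle u\rangle^{1+\gamma^+}$ in $u$, so $A(\nu)\psi$ depends on $\nu$ only through an unbounded test function. We therefore first truncate. Fix a smooth $\chi_R:\R^d\to[0,1]$ with $\1_{\{|u|\le R\}}\le\chi_R\le\1_{\{|u|\le R+1\}}$, set $\mathcal{A}_R\psi(r,v;q,u):=v\cdot(\nabla_r\psi)(r,v)+\chi_R(u)\sigma(|v-u|)\beta(r-q)(\mathcal{L}\psi)(r,v;u)$, $A_R(\nu)\psi(r,v):=\int_{\R^{2d}}\mathcal{A}_R\psi(r,v;q,u)\,\nu(dq,du)$, and let $F_R$ be defined exactly as $F$ with $A$ replaced by $A_R$. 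Using that $\psi\in C_c^1(\R^{2d})$ and $\beta\in C_c^1(\R^d)$, together with the quantitative vanishing of $\mathcal{L}\psi$ for large $|v|$ from the proof of Proposition~\ref{FPE:LEMMA01}(b) (the inclusion $\{|v|>4M\}\cap\{|v+\alpha|\le M\}\subset\{|v|\le c|u|\}$), one checks that $\mathcal{A}_R\psi\in C_c(\R^{4d})$ with a compact support that does \emph{not} depend on $\nu$; continuity is Proposition~\ref{FPE:LEMMA01}(a), the factor $\chi_R$ preserves it, and $\sigma(|v-u|)(\mathcal{L}\psi)(r,v;u)$ extends continuously by $0$ across the diagonal $v=u$ thanks to \eqref{FPE:02}. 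In particular $\mathcal{A}_R\psi$ is bounded, so $F_R$ is bounded.

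Next we show that $F_R$ (and hence $|F_R|$) is continuous at $\pi^{(\infty)}$-a.a.\ $\nu$. Since $\mathcal{A}_R\psi$ is bounded, continuous, and compactly supported, the family $\{(q,u)\mapsto\mathcal{A}_R\psi(r,v;q,u):(r,v)\in\R^{2d}\}$ is uniformly bounded and equicontinuous with a common compact support, so by Arzel\`a--Ascoli it is totally bounded in $C$, whence $\mu\mapsto A_R(\mu)\psi$ is continuous from $\mathcal{P}(\R^{2d})$ with the weak topology into $(C_0(\R^{2d}),\|\cdot\|_\infty)$ (cf.\ the proof of Proposition~\ref{FPE:PROP00}). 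Fix $\nu$ in the set of full $\pi^{(\infty)}$-measure on which $s_1,\dots,s_m,s,t\in D_\nu$, so that also a.e.\ $u\in[s,t]$ lies in $D_\nu$. Writing $H_R(\nu,x)=\big(\psi(x(t))-\psi(x(s))-\int_s^t(A_R(\nu_u)\psi)(x(u))\,du\big)\prod_{k}g_k(x(s_k))$, split $F_R(\nu^{(j)})-F_R(\nu)=\int[H_R(\nu^{(j)},x)-H_R(\nu,x)]\,d\nu^{(j)}(x)+\int H_R(\nu,x)\,d(\nu^{(j)}-\nu)(x)$ for $\nu^{(j)}\to\nu$ weakly. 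The first integral is bounded by $\prod_k\|g_k\|_\infty\int_s^t\|A_R(\nu^{(j)}_u)\psi-A_R(\nu_u)\psi\|_\infty\,du\to0$ by dominated convergence (using $\nu^{(j)}_u\to\nu_u$ weakly for $u\in D_\nu$ and the uniform continuity just recorded), and the second tends to $0$ by the portmanteau theorem, as $H_R(\nu,\cdot)$ is bounded and $\nu$-a.s.\ continuous. Hence $\int|F_R|\,d\pi^{(n)}\to\int|F_R|\,d\pi^{(\infty)}$.

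It then remains to control the error $F-F_R$ uniformly in $n$ and under $\pi^{(\infty)}$. By \eqref{EXISTENCE:01} and \eqref{FPE:02}, $|(\mathcal{A}-\mathcal{A}_R)\psi(r,v;q,u)|\le C\|\nabla_v\psi\|_\infty(1-\chi_R(u))|v-u|\sigma(|v-u|)$; and since $\psi$ has compact support, $(\mathcal{L}\psi)(r,v;u)$ vanishes unless $\langle v\rangle\le C\langle u\rangle$ (again the $\varepsilon=\tfrac14$ argument from the proof of Proposition~\ref{FPE:LEMMA01}(b)), on which region $|v-u|\sigma(|v-u|)\le C(\langle v\rangle^{1+\gamma^+}+\langle u\rangle^{1+\gamma^+})\le C\langle u\rangle^{1+\gamma^+}$; since $1-\chi_R(u)\ne0$ forces $|u|>R$, this yields $|(\mathcal{A}-\mathcal{A}_R)\psi(r,v;q,u)|\le C_\psi\,\1_{\{|u|>R\}}\langle u\rangle^{1+\gamma^+}$ uniformly in $(r,v)$. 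Therefore $|F(\nu)-F_R(\nu)|\le C_\psi\prod_k\|g_k\|_\infty\int_s^t\int_{\{|u|>R\}}\langle u\rangle^{1+\gamma^+}\nu_u(dq,du)\,du$, and integrating against $\pi^{(n)}$ and using exchangeability (Lemma~\ref{EXCH}),
\[
 \int|F-F_R|\,d\pi^{(n)}\le C\int_s^t\E^n\!\Big(\langle\mathcal{V}_1^n(u)\rangle^{1+\gamma^+}\1_{\{|\mathcal{V}_1^n(u)|>R\}}\Big)\,du .
\]
By Lemma~\ref{MOMENT:EST} one has $\sup_n\sup_{u\le T}\E^n\langle\mathcal{V}_1^n(u)\rangle^{q}<\infty$ (for $n$ large when $\gamma=2$) for some $q>1+\gamma^+$ — namely $q=2+\gamma$ if $\gamma\in(-1,0]$ and $q=4$ if $\gamma\in(0,2]$ — so the right-hand side is $\le C\langle R\rangle^{1+\gamma^+-q}\to0$ as $R\to\infty$, uniformly in $n$. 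Running the same computation with $\pi^{(\infty)}$ in place of $\pi^{(n)}$ and invoking Lemma~\ref{IPS:LEMMA05} to transfer these moment bounds to $\pi^{(\infty)}$ gives $\int|F-F_R|\,d\pi^{(\infty)}\to0$ as $R\to\infty$. Finally, for every $R$,
\[
 \Big|\int|F|\,d\pi^{(n)}-\int|F|\,d\pi^{(\infty)}\Big|\le\int|F-F_R|\,d\pi^{(n)}+\Big|\int|F_R|\,d\pi^{(n)}-\int|F_R|\,d\pi^{(\infty)}\Big|+\int|F-F_R|\,d\pi^{(\infty)},
\]
and letting first $n\to\infty$ (killing the middle term) and then $R\to\infty$ (killing the outer terms) proves assertion (b), completing the proof of Theorem~\ref{TH:10}. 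The main obstacle is the second step: upgrading the pointwise continuity of $\mu\mapsto A_R(\mu)\psi$ to uniform continuity — which is exactly why the truncation is designed so that $\mathcal{A}_R\psi$ is compactly supported, not merely bounded — and keeping track of the exceptional set of ``bad times'' via $D(\pi^{(\infty)})$; the remaining steps are the moment estimates already in hand.
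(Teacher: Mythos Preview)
Your proof is correct and follows the same three-step strategy as the paper (truncate in $u$ to obtain a bounded $F_R$ continuous in $\nu$, pass to the limit $\int|F_R|\,d\pi^{(n)}\to\int|F_R|\,d\pi^{(\infty)}$, and control $|F-F_R|$ uniformly in $n$ via the moment bounds of Lemma~\ref{MOMENT:EST}); the paper argues continuity of $F_R$ slightly differently, by rewriting it as the integral of a fixed bounded function against $\nu\otimes\nu$ rather than via your Arzel\`a--Ascoli/portmanteau split, but the content is the same. One cosmetic slip: $\mathcal{A}_R\psi\notin C_c(\R^{4d})$, because the drift term $v\cdot(\nabla_r\psi)(r,v)$ is constant (not compactly supported) in $(q,u)$ --- but since that term is $\nu$-independent after integration and only the collision part (which \emph{is} compactly supported in all four variables) enters the continuity-in-$\nu$ argument, nothing is affected.
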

\begin{proof}
 We proceed in 3 steps.

 \textit{Step 1.} Let $g$ be a smooth function on $\R_+$ such that $\1_{[0,1]} \leq g \leq \1_{[0,2]}$.  
 For $R > 0$ and $\nu \in \mathcal{P}(\R^{2d})$ let (see \eqref{ENSKOG:OP})
 \[
  (A_R(\nu)\psi)(r,v) = \int \limits_{\R^{2d}}g\left( \frac{|u|^2}{R^2}\right)(\mathcal{A}\psi)(r,v;q,u)d\nu(q,u).
 \]
 Similarly to Proposition \ref{FPE:LEMMA01} and Proposition \ref{FPE:PROP00} we can show the following 
 \begin{itemize}
  \item $A_R(\nu)\psi$ is jointly continuous in $(\nu,r,v) \in \mathcal{P}(\R^{2d}) \times \R^{2d}$, where $\mathcal{P}(\R^{2d})$ 
   is endowed the topology of weak convergence.
  \item There exists a constant $C > 0$ such that 
  \begin{align}\label{EQ:23}
   |A_R(\nu)\psi(r,v)| \leq \| v \cdot \nabla_r \psi\|_{\infty} + C( 1+ R^2)^{\frac{1+\gamma}{2}}.
  \end{align}
 \end{itemize}
 Moreover we have for some generic constant $C > 0$
 \begin{align*}
   |A_R(\nu)\psi(r,v) - A(\nu)\psi(r,v)| 
  &\leq \int \limits_{\R^{2d}}\left(1 - g\left( \frac{|u|^2}{R^2}\right)\right) |\mathcal{A}\psi(r,v;q,u)| d\nu(q,u)
   \\ &\leq \int \limits_{\R^{2d}}\1_{\{ |u| > R \}} |\mathcal{A}\psi(r,v;q,u)| d\nu(q,u)
   \\ &\leq C \int \limits_{\R^{2d}}\langle u \rangle^{1+\gamma}\1_{\{ |u| > R \}} d\nu(q,u)
   \leq \frac{C}{R} \| \nu \|_{2+\gamma}
 \end{align*}
 where we have used $|\mathcal{A}\psi(r,v;q,u)| \leq C \langle u \rangle^{1+ \gamma}$ (see \eqref{EQ:14}).
 \\ \textit{Step 2.} Define for $\nu \in \mathcal{P}(D(\R_+;\R^{2d}))$ and $x \in D(\R_+;\R^{2d})$
 \begin{align*}
  H_R(\nu; x) := \left( \psi(x(t)) - \psi(x(s)) - \int \limits_{s}^{t}(A_R(\nu_u)\psi)(x(u))du \right) \prod \limits_{k=1}^{m}g_k(x(s_k)).
 \end{align*}
 Then by \eqref{EQ:23} $H_R$ is bounded. Let $F_R(\nu) = \int_{D(\R_+;\R^{2d})}H_R(\nu;x) d\nu(x)$.
 Using $y = (y_1,y_2) \in D(\R_+;\R^{2d})$ we get
 \begin{align*}
  \int \limits_{s}^{t}(A_R(\nu_u)\psi)(x(u))du &= \int \limits_{s}^{t}\int \limits_{\R^{2d}}g\left( \frac{|u'|^2}{R^2}\right) (\mathcal{A}\psi)(x(u);q,u')d\nu_u(q,u')du
  \\ &= \int \limits_{D(\R_+;\R^{2d})} \int \limits_{s}^{t} g\left(\frac{|y_2(u)|}{R^2}\right) (\mathcal{A}\psi)(x(u);y(u)) du d\nu(y)
 \end{align*}
 and hence
 \begin{align*}
  F_R(\nu) &= \int \limits_{D(\R_+;\R^{2d})}(\psi(x(t)) - \psi(x(s))) d\nu(x) 
  \\ &- \int \limits_{D(\R_+;\R^{2d})^2} \int \limits_{s}^{t} g\left(\frac{|y_2(u)|}{R^2}\right) (\mathcal{A}\psi)(x(u);y(u)) \prod_{k=1}^{m}g_k(x(s_k))du d\nu(y)d\nu(x).
 \end{align*}
 This shows that $F_R(\nu)$ is bounded and continuous for $\pi^{(\infty)}$-a.a. $\nu$ w.r.t. weak convergence.
 \\ \textit{Step 3.} Write
 \begin{align*}
  &\ \left| \int \limits_{\mathcal{P}(D(\R_+;\R^{2d}))}|F(\nu)| d\pi^{(n)}(\nu) - \int \limits_{\mathcal{P}(D(\R_+;\R^{2d}))}|F(\nu)| d\pi^{(\infty)}(\nu) \right|
  \\ &\leq \left| \int \limits_{\mathcal{P}(D(\R_+;\R^{2d}))}|F(\nu)| d\pi^{(n)}(\nu) - \int \limits_{\mathcal{P}(D(\R_+;\R^{2d}))}|F_R(\nu)| d\pi^{(n)}(\nu) \right|
  \\ &\ \ \ + \left| \int \limits_{\mathcal{P}(D(\R_+;\R^{2d}))}|F_R(\nu)| d\pi^{(n)}(\nu) - \int \limits_{\mathcal{P}(D(\R_+;\R^{2d}))}|F_R(\nu)| d\pi^{(\infty)}(\nu) \right|
  \\ &\ \ \ + \left| \int \limits_{\mathcal{P}(D(\R_+;\R^{2d}))}|F_R(\nu)| d\pi^{(\infty)}(\nu) - \int \limits_{\mathcal{P}(D(\R_+;\R^{2d}))}|F(\nu)| d\pi^{(\infty)}(\nu) \right|
  \\ &= I_1 + I_2 + I_3
 \end{align*}
 Step 2 implies that the second term tends for each fixed $R$ to zero as $n \to \infty$.
 Using Step 1 we obtain for $T > t$ and some constant $C > 0$
 \begin{align*}
  |F(\nu) - F_R(\nu)| \leq C \int \limits_{s}^{t}\int \limits_{D(\R_+;\R^{2d})} | (A(\nu_u)\psi)(x(u)) - (A_R(\nu_u)\psi)(x(u))| d\nu(x)du
  \leq \frac{C}{R} \int \limits_{s}^{t} \| \nu_u\|_{2+\gamma} du.
 \end{align*}
 In view of Lemma \ref{MOMENT:EST} we find a constant $C' > 0$ such that
 $\limsup_{n \to \infty}(I_1 + I_3) \leq \frac{C'}{R}$, which proves the assertion.
\end{proof}

\subsection{From nonlinear martingale problem to SDE}
It remains to show that any solution to the martingale problem obtained from the particle approximation can be represented 
by a weak solution to \eqref{SDE:ENSKOG}.
\begin{Lemma}\label{REPRESENTATION}
 Let $\nu \in \mathcal{P}(D(\R_+; \R^{2d}))$ be a solution to the (nonlinear) martingale problem $(A(\nu_s), C_b^1(\R^{2d}), \mu_0)$
 satisfying
 \begin{align}\label{EQ:107}
  \sup \limits_{t \in [0,T]} \| \nu_t \|_{1+\gamma} < \infty, \qquad T > 0.
 \end{align}
 Then there exists an Enskog process $(R,V)$ with law given by $\nu$.
\end{Lemma}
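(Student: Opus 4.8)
The statement to prove is Lemma~\ref{REPRESENTATION}: a solution $\nu \in \mathcal{P}(D(\R_+;\R^{2d}))$ of the nonlinear martingale problem $(A(\nu_s), C_b^1(\R^{2d}), \mu_0)$ satisfying the moment bound \eqref{EQ:107} can be realized as the law of an Enskog process in the sense of Definition~\ref{DEF:ENSKOG}. Let me sketch the argument.
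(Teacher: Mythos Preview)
Your proposal contains no argument at all: you have merely restated the lemma and announced an intention to sketch a proof, then stopped. There is nothing here to evaluate.

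For reference, the paper's proof proceeds as follows. One first fixes a probability space $(\mathcal{X},d\eta)$ carrying a measurable process $(q_s(\eta),u_s(\eta))$ with one-dimensional marginals $\nu_s$. The operator $A(\nu_s)$ is then of jump type with jump kernel expressed through $\widehat{\alpha}$ and the measure $Q(d\theta)\,d\xi\,d\eta\,dz$. Using \eqref{EQ:107} one checks that the drift and the second moment of the jump size are locally bounded and that $\|A(\nu_t)\psi\|_\infty$ is bounded on compact time intervals for $\psi\in C_c^1(\R^{2d})$. These are exactly the hypotheses needed to invoke the representation theorem of Horowitz and Karandikar \cite[Theorem~A.1]{HK90}, which produces a stochastic basis, a Poisson random measure $N$ with the correct compensator \eqref{UNIQ:00}, and an adapted c\`adl\`ag process $(R_t,V_t)$ with law $\nu$ solving the equation in \emph{compensated} form. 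Finally, the moment bound \eqref{EQ:107} ensures that $\widehat{\alpha}$ is integrable against the compensator, so one may drop the compensation and recover \eqref{SDE:ENSKOG}. You would need to supply all of these steps; in particular, the appeal to an external representation result (here \cite{HK90}) is the essential point, and without it there is no proof.
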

\begin{proof}
 First observe that we can find a measurable process $(q_s(\eta),u_s(\eta))$ on some probability space $(\mathcal{X},d\eta)$ such that
 $\mathcal{L}(q_s,u_s) = \nu_s$ for $s \geq 0$.
Let $\widehat{\alpha}(v,r,u,q,\theta,\xi,z)$ be given as in \eqref{SDE:ENSKOG} and
\[
 b(s,v,r) = \begin{pmatrix} v \\ \int \limits_{\Xi \times \mathcal{X} \times \R_+} \widehat{\alpha}(v,r,u_s(\eta),q_s(\eta), \theta,\xi,z) Q(d\theta)d\xi d\eta dz \end{pmatrix}.
\]
Then it is not difficult to see that
$|b(s,v,r)|$, $\int_{\Xi \times \mathcal{X} \times \R_+} |\widehat{\alpha}(v,r,u_s(\eta),q_s(\eta), \theta,\xi,z)|^2 Q(d\theta)d\xi d\eta dz$
are locally bounded on $\R_+ \times \R^{2d}$. Moreover for each $\psi \in C_c^1(\R^{2d})$ and $T > 0$ we find a constant $C = C(\psi, T) > 0$ such that 
$\| A(\nu_t)\psi \|_{\infty} \leq C \sup_{t \in [0,T]} \| \nu_t \|_{1+\gamma}$. 
Hence we may apply \cite[Theorem A.1]{HK90} to find a stochastic basis $(\Omega,\F, \F_t, \Pr)$, a poisson random measure $dN$ 
with compensator $d\widehat{N} =  Q(d\theta)d\xi d\eta dz$ and an $\F_t$-adapted  c\`{a}dl\`{a}g process $(X_t)_{t \geq 0} = (R_t,V_t)_{t \geq 0}$ on $(\Omega,\F, \F_t, \Pr)$ such that $\mathcal{L}(R,V) = \nu$ and 
   \begin{align}\label{INTCOMP}
    X_t &= X_0 + \int \limits_{0}^{t}b(s,X_s)ds
  + \int \limits_{0}^{t}\int \limits_{\Xi \times \mathcal{X} \times \R_+} \widehat{\alpha}(V_{s-},R_s, u_{s}(\eta),q_s(\eta),\theta,\xi,z)d\widetilde{N}(s,\theta,\xi,\eta,z).
   \end{align}
 Using \eqref{EQ:107} we get 
\[
\int_0^t \int \limits_{\Xi \times \mathcal{X} \times \R_+} \E\left(  |\widehat{\alpha}(V_{s-},R_s, u_{s-}(\eta),q_s(\eta),\theta,\xi,z)| \right ) d\widehat{N}(s,\theta,\xi,\eta,z) < \infty
\]
 and hence omitting the compensation in \eqref{INTCOMP} and modifying the drift $b$ appropriately shows that $X_t$ satisfies \eqref{SDE:ENSKOG}.
\end{proof}

\section*{Acknowledgements}

We would like thank Errico Presutti and Mario Pulvirenti for usefull discussions related to this work.

\appendix
\section*{Appendix}
\renewcommand{\thesection}{A} 
\setcounter{Theorem}{0}

The following is a nonlinear generalization of the Gronwall lemma, also known as the Bihari-LaSalle inequality.
\begin{Lemma}\label{LASALLE}
  Let $f: \R_+ \longrightarrow \R_+$ be measurable and suppose that 
 \[
  f(t) \leq f(0) + K \int \limits_{0}^{t}f(s)^{1-\alpha} ds, \ \ t \geq 0
 \]
 for some $K \geq 0$ and $\alpha \in (0,1)$. Then for any $t \geq 0$
 \[
  f(t) \leq \left( f(0)^{\alpha} + \alpha K t \right)^{1 / \alpha} \leq 2^{1/ \alpha-1}f(0) + \frac{\left( 2 \alpha K  \right)^{1 / \alpha} }{2} t^{1 / \alpha}.
 \]
\end{Lemma}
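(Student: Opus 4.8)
The plan is to turn the integral inequality into a differential inequality for a strictly positive, locally absolutely continuous primitive and then integrate. First I would fix $\e > 0$ and set
\[
 F_\e(t) := f(0) + \e + K\int \limits_{0}^{t} f(s)^{1-\alpha}\,ds, \qquad t \geq 0.
\]
The hypothesis gives $f(s) \leq f(0) + K\int_{0}^{s} f(r)^{1-\alpha}\,dr \leq F_\e(s)$, and since $F_\e$ is non-decreasing this shows that $f$ is locally bounded; hence $s \mapsto f(s)^{1-\alpha}$ is measurable and locally bounded, so it is locally integrable and $F_\e$ is locally absolutely continuous with $F_\e'(t) = K f(t)^{1-\alpha}$ for a.e.\ $t \geq 0$. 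Note also that $F_\e \geq \e > 0$.

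Next I would differentiate the power $F_\e^\alpha$. Since $x \mapsto x^\alpha$ is Lipschitz on $[\e,\infty)$, which contains the range of $F_\e$, the composition $t \mapsto F_\e(t)^\alpha$ is again locally absolutely continuous, and for a.e.\ $t$
\[
 \frac{d}{dt} F_\e(t)^\alpha = \alpha F_\e(t)^{\alpha - 1} F_\e'(t) = \alpha K F_\e(t)^{\alpha - 1} f(t)^{1-\alpha} \leq \alpha K,
\]
where the inequality uses $0 \leq f(t) \leq F_\e(t)$ together with $1-\alpha > 0$, so that $x \mapsto x^{1-\alpha}$ is non-decreasing on $\R_+$. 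Integrating over $[0,t]$ via the fundamental theorem of calculus for absolutely continuous functions gives $F_\e(t)^\alpha \leq (f(0)+\e)^\alpha + \alpha K t$, and therefore
\[
 f(t) \leq F_\e(t) \leq \left( (f(0)+\e)^\alpha + \alpha K t \right)^{1/\alpha}.
\]
Letting $\e \downarrow 0$ yields the first asserted bound $f(t) \leq (f(0)^\alpha + \alpha K t)^{1/\alpha}$.

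For the second inequality I would use convexity of $x \mapsto x^{1/\alpha}$ on $\R_+$, which holds because $1/\alpha > 1$, in the form $(a+b)^{1/\alpha} \leq 2^{1/\alpha - 1}(a^{1/\alpha} + b^{1/\alpha})$ for all $a,b \geq 0$; applying this with $a = f(0)^\alpha$ and $b = \alpha K t$ and using $2^{1/\alpha-1}(\alpha K)^{1/\alpha} = \frac{1}{2}(2\alpha K)^{1/\alpha}$ completes the argument. The only genuinely delicate point — and the reason for introducing the $\e$-regularization — is the regularity bookkeeping: one needs $F_\e$, and then $F_\e^\alpha$, to be (locally) absolutely continuous so that the chain rule and the fundamental theorem of calculus are legitimate even though $f$ is merely measurable, and keeping $F_\e$ bounded away from zero both makes $x \mapsto x^\alpha$ Lipschitz on the relevant range and avoids any division by zero in case $f(0) = 0$.
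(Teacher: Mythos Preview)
Your proof is correct; it is the standard argument for the Bihari--LaSalle inequality, carried out carefully with an $\e$-regularization to keep the primitive strictly positive so that the chain rule and the fundamental theorem of calculus apply cleanly even when $f(0)=0$ and $f$ is merely measurable. The paper itself does not supply a proof of this lemma: it is stated in the appendix as a known nonlinear generalization of the Gronwall lemma and used as a black box in Propositions~\ref{IPS:TH01SP} and~\ref{IPS:TH01HP}, so there is no paper proof to compare against.
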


\begin{footnotesize}

\bibliographystyle{alpha}
\bibliography{Bibliography}

\end{footnotesize}

\end{document}